\newcommand{\algmargin}{\the\ALG@thistlm}
\newlength{\whilewidth}
\algnewcommand{\parState}[1]{\State%
	\parbox[t]{\dimexpr\linewidth-\algmargin}{\strut #1\strut}}
\title{%
	\MakeUppercase{Stable-Matching Voronoi Diagrams:\newline Combinatorial Complexity and Algorithms}%
	\thanks{This is a full version of an extended abstract presented in ICALP'18.
		Work on this paper by the first author has been supported in part by BSF Grant 2017684.
		This work was partially supported by DARPA under agreement no.~AFRL FA8750-15-2-0092 and NSF grants
		1228639, 1526631,
		1217322, 1618301, and 1616248. The views expressed are those of the authors and do not reflect the
		official policy or position of the Department of Defense
		or the U.S.~Government.}
}
\author{%
	Gill~Barequet,%
	\thanks{\affil{Technion---Israel Inst. of Technology, Haifa}, 
		\email{barequet@cs.technion.ac.il}}\,
	David~Eppstein,%
	\thanks{\affil{University of California, Irvine},
		\email{\{eppstein,goodrich,nmamano\}@uci.edu}}\,
	Michael~Goodrich,\footnotemark[3]\,
	and Nil~Mamano\footnotemark[3]
}
\theoremstyle{plain}
\newtheorem{theorem}{Theorem}
\newtheorem{lemma}[theorem]{Lemma}
\newtheorem{corollary}[theorem]{Corollary}
\theoremstyle{definition}
\newtheorem{definition}[theorem]{Definition}
\newtheorem{observation}[theorem]{Observation}
\theoremstyle{remark}
\begin{document}

\maketitle

\begin{abstract}
We study algorithms and combinatorial complexity bounds for 
\emph{stable-matching Voronoi diagrams}, where 
a set, $S$, of $n$ point sites in the plane determines a stable matching 
between the points in $\mathbb{R}^2$ and the sites in $S$ such that 
(i) the points prefer sites closer to them and sites 
prefer points closer to them, and (ii) each site has a
quota or ``appetite''
indicating the area of the set of points that can be matched to it.
Thus, a stable-matching Voronoi diagram is a solution to the
well-known post office problem with the added (realistic) constraint
that each post office has a limit on the size of its jurisdiction.
Previous work on the stable-matching Voronoi diagram provided existence and uniqueness proofs, but did not analyze its combinatorial or algorithmic complexity.
In this paper, we show that a stable-matching Voronoi diagram of $n$
point sites has $O(n^{2+\varepsilon})$ faces and edges, for any $\varepsilon>0$, and show that this bound is almost tight by giving a family of diagrams with $\Theta(n^2)$ faces and edges.
We also provide a discrete algorithm for constructing it in $O(n^3\log n+n^2f(n))$ time
in the real-RAM model of computation,
where $f(n)$ is the runtime of a geometric primitive (which we define)
that can be approximated numerically, but 
cannot, in general, be performed exactly in an algebraic model of computation. We show, however, how to compute the geometric primitive exactly for polygonal convex distance functions.
\end{abstract}

\section{Introduction}\label{smvd:sec:intro}

The \emph{Voronoi diagram} is a well-known geometric structure with a 
broad spectrum of applications in computational geometry
and other areas of Computer Science, e.g.,
see~\cite{Aurenhammer:1991,bookAurenhammer,Brandt1992, Kise1998, Meguerdichian2001, Petrek2007, Stojmenovic2006, Bhattacharya2008}.
The Voronoi diagram partitions the plane into regions.
Given a finite set $S$ of points, called \emph{sites}, each point in the plane is assigned to the region of its closest site in $S$.
Although the Voronoi diagram has been generalized in many ways, its 
standard definition specifies that each
\emph{Voronoi cell} or \emph{region} of a site $s$ is the set
$V(s)$ defined as
\begin{equation}\label{smvd:eq:vd}
\bigl\{p\in \mathbb{R}^2 \mid d(p,s)\leq d(p,s')\quad\forall s'\not=s\in S\bigr\},
\end{equation}
where $d(\cdot,\cdot)$ denotes the distance between two points.
The properties of standard Voronoi diagrams have been thoroughly 
studied (e.g., see~\cite{Aurenhammer:1991,bookAurenhammer}).
For example, it is well known that
in a standard Voronoi diagram for point sites in the plane
every Voronoi cell is a connected and convex polygon whose
boundaries lie along perpendicular bisectors of pairs of sites.

On a seemingly unrelated topic, the theory of 
\emph{stable matchings} studies how to match entities in two sets, 
each of which has its own preferences about the elements of the other set, 
in a ``stable'' manner.
It is used, for instance, to match hospitals and medical students 
starting their residencies~\cite{thematch},
as well as in on-line advertisement auctions (e.g., see~\cite{Aggarwal2009}).
It was originally formulated  by Gale and Shapley~\cite{gale62}
in the context of establishing marriages between $n$ men and $n$ women,
where each man ranks the women by preference, and, likewise, the women rank the men.
A matching between the men and women is \emph{stable}
if there is no \emph{blocking pair}, defined as 
a man and woman who prefer each other over their assigned choices.
Gale and Shapley~\cite{gale62} show that a stable solution always 
exists for any set of preferences, and they provide an algorithm that runs in
$O(n^2)$ time.

When generalized to the one-to-many case, the stable matching problem
is also known as the \emph{college admission} problem~\cite{Roth89}
and can be formulated as a matching of $n$ students to $k$ colleges,
where each student has a preference ranking of the colleges and each college
has a preference ranking of the students and a \emph{quota} 
indicating how many students it can accept.

In this paper, we are interested in studying the algorithmic and combinatorial
complexity of the diagrams that we call \emph{stable-matching Voronoi diagrams}, which
combine the notions of Voronoi diagrams and the one-to-many version of stable
matching.
These diagrams
were introduced by Hoffman, Holroyd, and Peres~\cite{hoffman2006}, who provided existence and uniqueness
proofs for such structures for potentially
countably infinite sets of sites, but did not study their
algorithmic or combinatorial complexities.
A stable-matching Voronoi diagram is defined with respect to
a set of sites in $\mathbb{R}^2$, which in this paper we restrict to 
finite sets of $n$ distinct points,
each of which has an assigned finite numerical
\emph{quota} (which is also known as its ``\emph{appetite}'') indicating the area of the region of points assigned to it.
A preference relationship is defined in terms of distance, so that 
each point $p$ in $\mathbb{R}^2$ prefers sites ordered by distance, from
closest to farthest,
and each site likewise prefers points ordered by distance.
The stable-matching Voronoi diagram, then, is a partition of the plane into regions, such that (i) each site is associated with a region of area equal to its appetite, and (ii) the assignment of points to sites is stable in
the sense that there is no blocking pair, defined as a site--point pair whose members 
prefer each other over their assigned matches.
This is formalized in Definition~\ref{smvd:def:smvd}. The regions are defined as closed sets so that boundary points lie in more than one region, analogously to Equation~\ref{smvd:eq:vd}.  
See Figure~\ref{smvd:fig:appetite}.

\begin{figure}[htb]
\centering
\includegraphics[width=.495\linewidth]{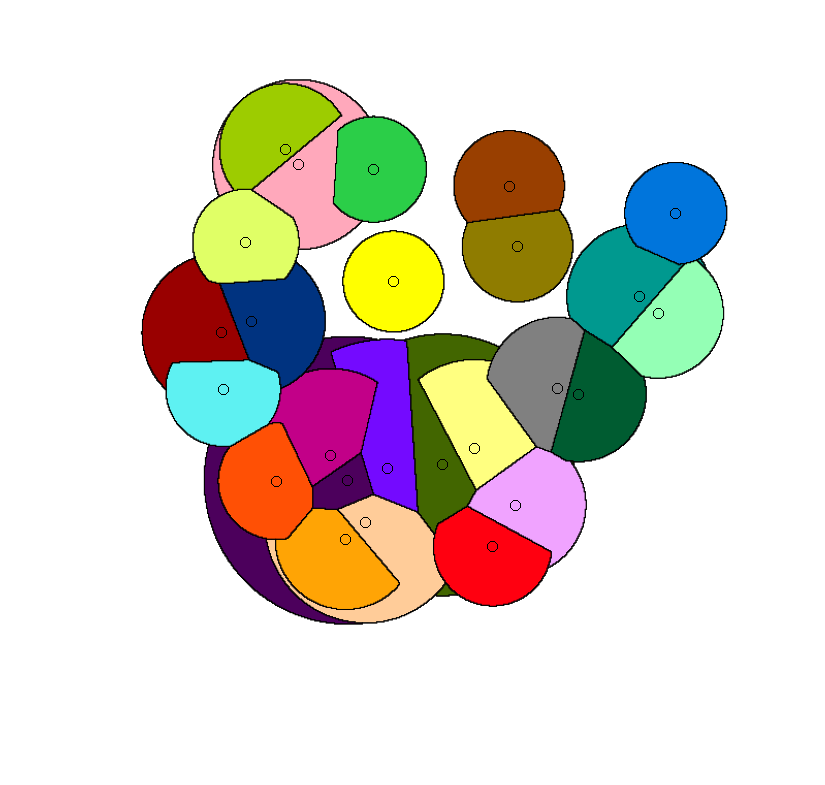}\hspace*{.1em}
\includegraphics[width=.495\linewidth]{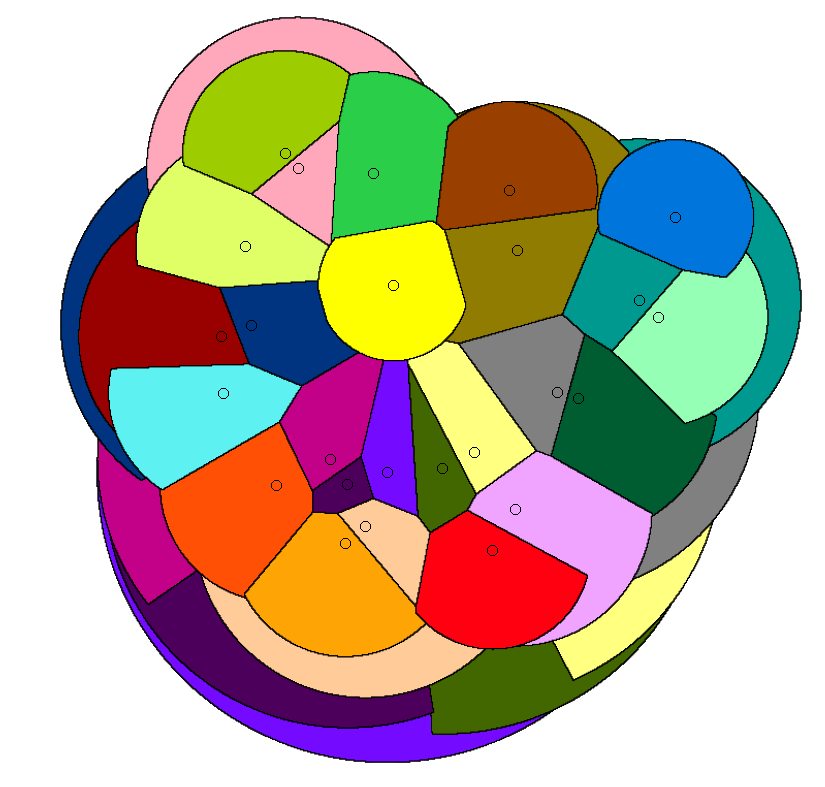}
\caption{Stable-matching Voronoi diagrams for a set of 25 point sites,
where each site in the left diagram has an 
appetite of 1 and each site in the right
diagram has an appetite of 2.
Each color corresponds to an individual cell, which is not necessarily convex or
even connected.}
\label{smvd:fig:appetite}
\end{figure}

\begin{definition}\label{smvd:def:smvd}
Given a set $S$ of $n$ points (called sites) in $\mathbb{R}^2$ and 
a numerical appetite $A(s)>0$ for each $s\in S$,
the \emph{stable-matching Voronoi diagram}
of $(S,A)$ is a subdivision of $\mathbb{R}^2$
into $n+1$ \emph{regions}, which are closed sets in $\mathbb{R}^2$.
For each site $s \in S$ there is a corresponding region $C_s$ of area $A(s)$, and there is an extra region, $C_\emptyset$, for the remaining ``unmatched'' points. The regions do not overlap except along boundaries (boundary points are included in more than one region). The regions are such that there are no blocking pairs. A blocking pair is a site $s\in S$ and a point $p\in\mathbb{R}^2$ such that (\textit{i}) $p\not\in C_s$, (\textit{ii}) $d(p,s)<\max\; \{d(p',s)\mid p'\in C_s\}$, and (\textit{iii}) $p\in C_\emptyset$ or $d(p,s)<d(p, s')$, where $s'$ is a site such that $p\in C_{s'}$.
\end{definition}

As mentioned above, 
Hoffman {\it et al.}~\cite{hoffman2006} show that, for any set of sites $S$ and appetites, the stable-matching Voronoi diagram of $S$ always exists and is unique. Technically, they consider the setting where all the sites have the same appetite, but the result applies to different appetites.
They also describe a continuous process that results in the stable-matching Voronoi diagram:
Start growing a circle from all the sites at the same time and at the same rate, 
matching the sites with all the points encountered by 
the circles that are not matched yet---when a 
site fulfills its appetite, its circle stops growing.
The process ends when all the circles have stopped growing.  

Note that this circle-growing method is analogous to a continuous version of the ``deferred acceptance'' stable matching algorithm of Gale and Shapley~\cite{gale62}. The sites correspond to the set making proposals, and $\mathbb{R}^2$ to the set accepting and rejecting proposals. The sites propose to the points in order by preference (with the growing circles), as in the deferred acceptance algorithm. The difference is that, in this setting, points receive all the proposals also in order by their own preference, so they always accept the first one and reject the rest.

Clearly, the circle-growing method can be simulated to obtain a numerical approximation of the diagram, but this would not be an effective discrete algorithm, which is one of the 
interests of the present paper.

Figure~\ref{smvd:fig:comparison} shows a side-by-side comparison of the standard and stable-matching Voronoi diagrams. Note that the standard
Voronoi diagram is stable in the same sense as the stable-matching Voronoi diagram:
by definition, every point is matched to its first choice among the sites, so there can be no blocking pairs. In fact, the standard Voronoi diagram of a set of sites can be seen as the limit of the stable-matching Voronoi diagram as all the appetites grow to infinity, in the following sense: for any point $p$ in $\mathbb{R}^2$, and for sufficiently large appetites for all the sites, $p$ will belong to the region of the same site in the standard and stable-matching Voronoi diagrams.

\begin{figure}[!hbt]
\centering
\includegraphics[width=.495\linewidth]{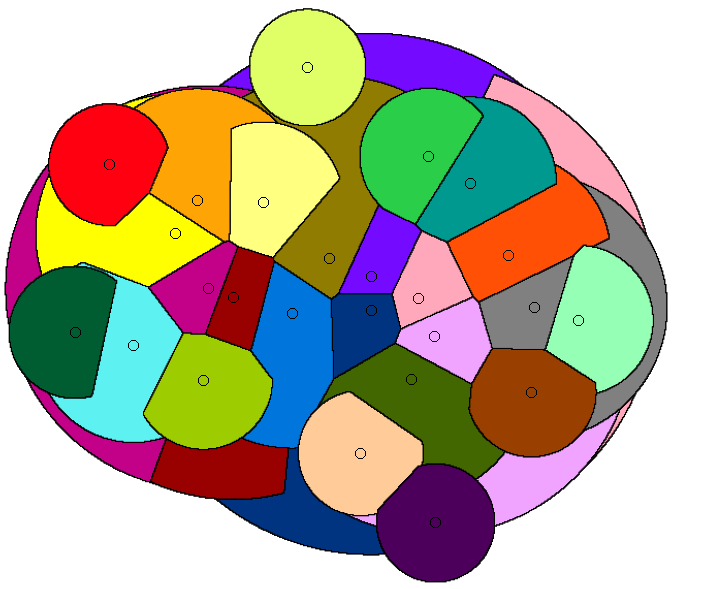}\hspace*{.1em}
\includegraphics[width=.495\linewidth]{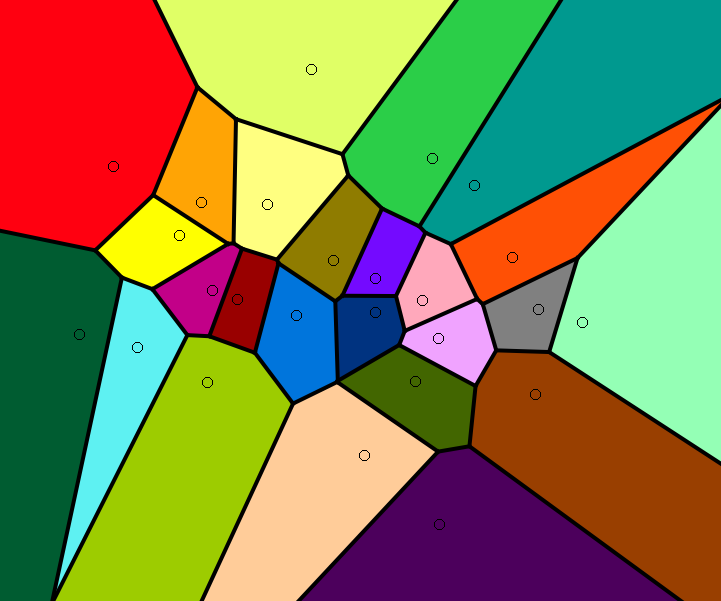}
\caption{A stable-matching Voronoi diagram (left)
and a standard Voronoi diagram (clipped to a rectangle) (right) 
for the same set of 25 sites. Each color represents a region.}
\label{smvd:fig:comparison}
\end{figure}

A standard Voronoi diagram solves the
\emph{post office} problem of assigning points 
to their closest post office~\cite{knuth1998art}.
A stable-matching Voronoi diagram adds
the real-world assumption that each post office has a limit on the size of its jurisdiction.
Such notions may also be useful for political districting,
where point sites could represent polling stations, and
appetites could represent their capacities.
In this context, the distance preferences 
for a stable-matching Voronoi diagram might determine a type of ``compactness'' 
that avoids the strange regions that are the subjects of recent court
cases involving gerrymandering. This was considered in~\cite{EPPSTEIN20172short}.
Nevertheless, depending on the appetites and 
locations of the sites, the regions of the
sites in a stable-matching Voronoi diagram are not necessarily convex or even
connected (e.g., see Figure~\ref{smvd:fig:appetite}).
Thus, we are interested in this paper in characterizing the worst-case
combinatorial complexity of such diagrams (i.e., the maximum number of faces, edges, and vertices among all diagrams with $n$ sites), as well as 
finding an efficient algorithm for constructing them.

\subsubsection*{Related Work}
There are large volumes of work on the topics of
Voronoi diagrams and stable matchings; 
hence, we refer the interested reader to 
surveys or books on the subjects
(e.g.,
see~\cite{Aurenhammer:1991,bookAurenhammer,Gusfield:1989:SMP:68392,Iwama:2008}).

A generalization of Voronoi diagram of particular interest are \emph{power diagrams}, where a weight associated to each site indicates how strongly the site draws the points in its neighborhood. Power diagrams have also been considered for political redistricting~\cite{Cohen-Addad:2018}.
Aurenhammer~\textit{et al.}~\cite{aurenhammer1998} show that, given a set of sites in a square and a quota for each site, it is always possible to find weights for the sites such that, in the power diagram induced by those weights, the area of the region of each site within the square is proportional to its prescribed quota. Thus, both stable-matching Voronoi diagrams and power diagrams are Voronoi-like diagrams that allow predetermined region sizes. Power diagrams
minimize the total squared distance between the sites and their associated points, while stable-matching Voronoi diagrams result in a stable matching.

In terms of algorithms for constructing stable-matching Voronoi diagrams, besides the mentioned continuous method by Hoffman et al.~\cite{hoffman2006}, Eppstein {\it et al.}~\cite{EPPSTEIN2017} study
the problem in a 
discrete grid setting, where both sites and points are
pixels.
Eppstein {\it et al.}~\cite{EPPSTEIN20172short} 
also consider an analogous stable-matching problem in planar graphs and road networks.
In these two previous works, the entities analogous to sites and points 
are either pixels or vertices; hence, 
they did not encounter the algorithmic 
and combinatorial challenges raised by stable-matching Voronoi diagrams for 
sites and points in the plane.
 
\subsubsection*{Our Contributions}

In Section~\ref{smvd:sec:geo}, we give a geometric interpretation of stable-matching Voronoi diagrams as the lower envelope of a set of cones, and discuss some basic properties of stable-matching Voronoi diagrams.

In Section~\ref{smvd:sec:bounds}, we give an $O(n^{2+\varepsilon})$ upper bound, for any $\varepsilon>0$, and an $\Omega(n^2)$ lower bound for the number of faces and edges of a stable-matching Voronoi diagrams in the worst case, where $n$ is the number of sites. The upper bound applies for arbitrary appetites, while the lower bound applies even in the special case where all the sites have the same appetite.

In Section~\ref{smvd:sec:algo}, we show that stable-matching Voronoi diagrams cannot be computed exactly in an algebraic model of computation. In light of this, we provide a discrete algorithm 
for constructing them that runs in $O(n^3\log n+n^2f(n))$ time,
where $f(n)$ is the runtime of a geometric primitive (which 
we define) that encapsulates this difficulty. This geometric primitive
can be approximated numerically. We also show how to compute the primitive exactly (and thus the diagram) when the distance metric is a polygonal convex distance function (Section~\ref{smvd:sec:convex}).

We assume Euclidean distance as the distance metric throughout the paper, except in Section~\ref{smvd:sec:convex}. 
In particular, the upper and lower bounds on the combinatorial complexity apply to Euclidean distance.
We conclude in Section~\ref{smvd:sec:conc}.

\section{The Geometry of Stable-Matching Voronoi Diagrams}\label{smvd:sec:geo}

As is now well known, a (2-dimensional) Voronoi diagram can be viewed as a lower envelope of cones in 3 dimensions, as follows~\cite{Fortune87}.
Suppose that the set of sites are embedded in the plane $z=0$.
That is, we map each site $s=(x_s,y_s)$ to the 3-dimensional point $(x_s,y_s,0)$.
Then, we draw one cone for each site, with the site as the vertex, and growing to $+\infty$ all with the same slope.
If we then view the cones from below, i.e., from $z=-\infty$ towards $z=+\infty$, the part of the cone of each site that we see corresponds to the Voronoi cell of the site.
This is because two such cones intersect at points that are equally distant to both vertices. As a result, the $xy$-projection of their intersection corresponds to the perpendicular bisector of the vertices, and  the boundaries of the Voronoi cells in the Voronoi diagram are determined by the perpendicular bisectors with neighboring sites.

Similarly, a stable-matching Voronoi diagram can also be viewed as the lower envelope of a set of cones. However, in this setting cones do not extending to $+\infty$. Instead, they are cut off at a finite height (which is a potentially different height for each cone, even if the associated sites have the same appetite).
This system of cones can be generated by a dynamic process that begins with cones of height zero and then grows them all at the same rate, halting the growth of each cone as soon as its area in the lower envelope reaches its appetite (see Figure~\ref{smvd:fig:3d}).
This process mimics the circle-growing method by Hoffman et al.~\cite{hoffman2006} mentioned before: if the $z$-axis is interpreted as time, the growing circles become the cones, and their lower envelope shows which circle reaches each point of the $xy$-plane first.

\begin{figure}[hbt]
\centering
\reflectbox{\includegraphics[width=.37\linewidth]{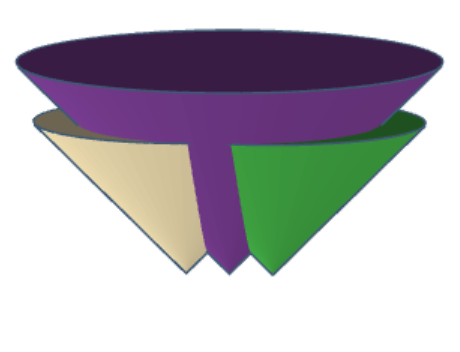}}\hspace*{3em}
\includegraphics[width=.3\linewidth]{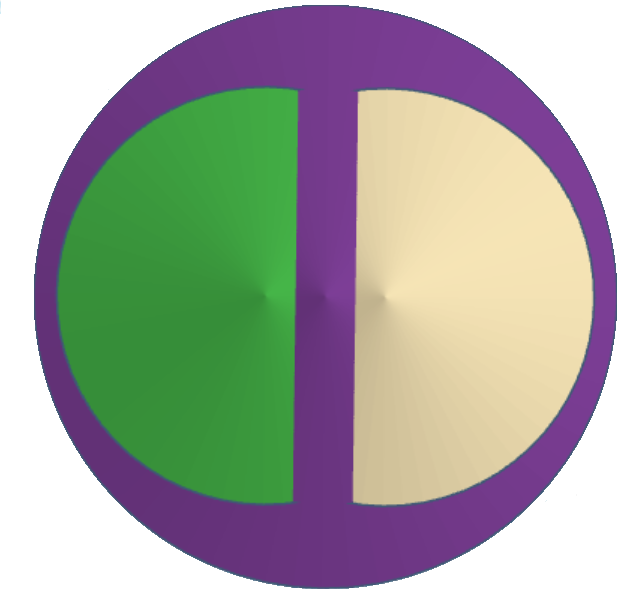}
\caption{View of a stable-matching Voronoi diagram of 3 sites 
as the lower envelope of a set of cones.}
\label{smvd:fig:3d}
\end{figure}

\noindent A stable-matching Voronoi diagram consists of three types of elements:
\begin{itemize}
\item A \emph{face} is a maximal, closed, connected subset of a stable cell.
The stable cells can be disconnected, that is, a cell can have more than one face.
There is also one or more \emph{empty faces}, 
which are maximal connected regions not assigned to any site.
One of the empty faces is the \emph{external face}, which is 
the only face with infinite area.
\item An \emph{edge} is a maximal line segment 
or circular arc on the boundary of two faces.
We call the two types of edges \emph{straight} and \emph{curved edges}, respectively.
For curved edges, we distinguish between its incident convex face 
(the one inside the circle along which the edge lies) and its incident
concave face.
\item A \emph{vertex} is a point shared by more than one edge. Generally, edges end at vertices, but curved edges may have no endpoints when they form a complete circle. This situation arises when the region of a site is isolated from other sites.

\end{itemize}
We say a set of sites with appetites is \textit{not} in general position if two curved edges of the stable-matching Voronoi diagram are tangent, i.e., touch at 
a point $p$ that is not an endpoint 
(e.g., two circles of radius 1 with centers 2 units apart).
In this special case, we consider that the curved edges are split at $p$, and that $p$ is a vertex.

In order to study the topology of the stable-matching Voronoi diagram, let the \emph{bounding disk}, $B_s$, of a site, $s$, be the smallest closed disk centered at $s$ that contains the stable cell of $s$.
The bounding disks arise in the topology of the diagram due to the following lemma:

\begin{lemma}\label{smvd:lem:boundary}
If part of the boundary between a face of site $s$ and a face of site $s'$ lies in the half-plane closer to $s$ than to $s'$, then that part of the boundary must lie along the boundary of the bounding disk $B_s$, and the convex face must belong to $s$.
\end{lemma}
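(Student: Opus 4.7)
The plan is to pick an arbitrary interior point $p$ of the boundary portion in question and prove that $d(p,s)$ equals the radius $r_s$ of $B_s$; it then follows immediately that the whole portion lies on $\partial B_s$. The leverage comes from applying the no-blocking-pair condition to points that lie just inside the face of $s'$ near $p$: such points prefer $s$ to their assigned site $s'$, so the only way they can fail to form a blocking pair with $s$ is for $s$'s farthest matched point to be at least as close to $s$ as they are.

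More concretely, I would set $r_s=\max\{d(p',s)\mid p'\in C_s\}$ and choose a sequence $q_k\to p$ with each $q_k$ in the interior of the face of $s'$. By continuity of the Euclidean distance and the hypothesis $d(p,s)<d(p,s')$, eventually $d(q_k,s)<d(q_k,s')$, so condition (iii) of Definition~\ref{smvd:def:smvd} holds for the pair $(s,q_k)$. Since $q_k\notin C_s$, condition (i) holds as well, so to keep $(s,q_k)$ from being a blocking pair, condition (ii) must fail, yielding $d(q_k,s)\ge r_s$. Passing to the limit gives $d(p,s)\ge r_s$, and together with $p\in C_s$ (which forces $d(p,s)\le r_s$) this pins down $d(p,s)=r_s$, placing $p$ on $\partial B_s$.

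Because every interior point of the boundary portion lies on $\partial B_s$, the portion is a circular arc of $\partial B_s$. By the definition of \emph{convex face} given just before the lemma, the face on the inner side of this arc is the convex one. Since points strictly inside $\partial B_s$ are exactly those at distance less than $r_s$ from $s$, and are therefore the only candidates to belong to $C_s$, the convex face must be the face of $s$, while the face on the outer (concave) side is that of $s'$.

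The main obstacle I anticipate is justifying the existence of the sequence $q_k\to p$ in the interior of the face of $s'$. This is a mild topological fact---$p$ lies on the common boundary of two faces, so every neighborhood of $p$ must meet the interior of each---but making it rigorous relies on the fact that the faces are genuinely two-dimensional regions with well-behaved boundaries, which is ensured by the cone-based description developed in Section~\ref{smvd:sec:geo}. A secondary subtlety worth flagging is that $r_s$ is a global maximum over $C_s$, which may be disconnected and whose realizer need not lie anywhere near $p$; however, this global quantity is precisely the one appearing in condition (ii), so no local surrogate is needed and the stability argument goes through unchanged.
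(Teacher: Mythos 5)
Your proof is correct and takes essentially the same approach as the paper: the paper likewise argues that a point matched to $s'$ lying strictly inside $B_s$ and in the half-plane closer to $s$ would form a blocking pair with $s$, which together with $C_s\subseteq B_s$ pins the boundary to $\partial B_s$ with the face of $s$ on the convex side. Your limiting-sequence argument is simply a more explicit rendering of the paper's contradiction argument.
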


\begin{proof}
The boundary between the faces of $s$ and $s'$ cannot lie outside of $B_s$, by definition of the bounding disk.
If the boundary is in the half-plane closer to $s$, then it also cannot be in the interior of $B_s$, because then there would exist a point $p$ inside $B_s$ and in the half-plane closer to $s$, but matched to $s'$ (see Figure~\ref{smvd:fig:boundary}).
In such a situation, $s$ and $p$ would be a blocking pair: $s$ prefers $p$ to the point(s) matched to it along $B_s$, and $p$ prefers $s$ to $s'$.
\end{proof}

\begin{figure}
\centering
\includegraphics[width=.3\linewidth]{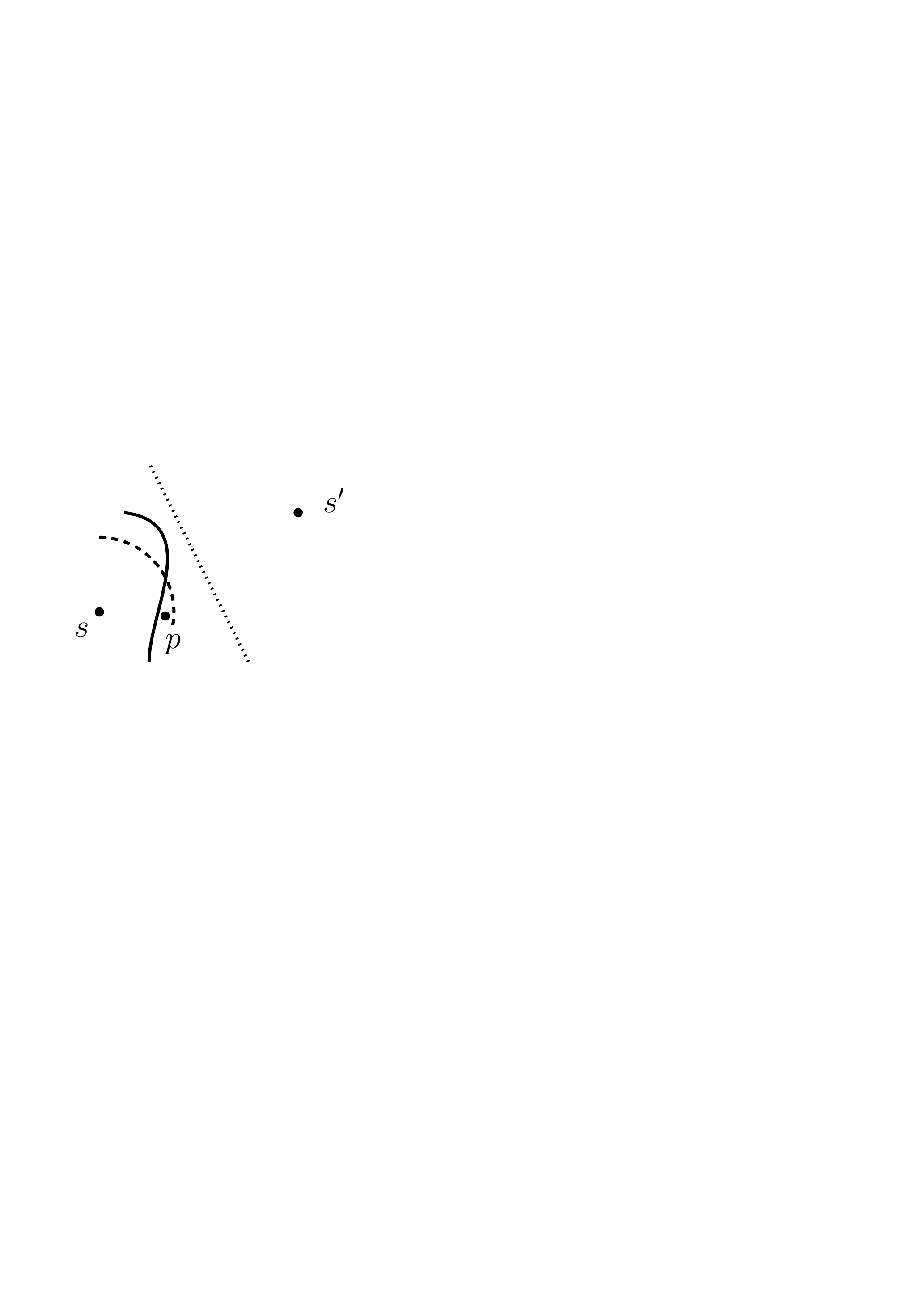}
\caption{Illustration of the setting in the proof of Lemma~\ref{smvd:lem:boundary}.
It shows the perpendicular bisector of two sites $s$ and $s'$ (dotted line), the boundary of the bounding disk, $B_s$, of $s$ (dashed circular arc), and a hypothetical boundary between the faces of sites $s$ and $s'$ (solid curve).
In this setting, $s$ and $p$ would be a blocking pair.}
\label{smvd:fig:boundary}
\end{figure}

\begin{lemma}\label{smvd:lem:bounding}
The union of non-empty faces of the diagram is the union of the bounding disks of all the sites.
\end{lemma}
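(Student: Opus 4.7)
The plan is to prove the set equality in both directions. One inclusion, $\bigcup_{s \in S} C_s \subseteq \bigcup_{s \in S} B_s$, follows immediately from the definition of the bounding disk: $C_s \subseteq B_s$ for every site $s$, so the union of all non-empty stable cells is contained in the union of the bounding disks. All of the work is in the reverse inclusion $\bigcup_{s} B_s \subseteq \bigcup_{s} C_s$, which I will establish by contradiction, producing a blocking pair whenever it fails.

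To set up the contradiction, I would suppose there exists a point $p \in B_s$ for some site $s$ such that $p \notin C_{s'}$ for every $s' \in S$. Because $S$ is finite and each cell $C_{s'}$ is closed, the union $\bigcup_{s'} C_{s'}$ is a closed set, and its complement is an open set that contains $p$; pick an open neighborhood $N$ of $p$ inside this complement. Since the regions of the diagram subdivide $\mathbb{R}^2$, every point of $N$ must lie in $C_\emptyset$ and in no other region. Let $r_s = \max\{d(q,s) : q \in C_s\}$, which exists since $C_s$ is closed and bounded, and equals the radius of $B_s$.

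The next step is to produce a specific witness $p' \in N$ satisfying $d(p',s) < r_s$. If $p$ lies in the interior of $B_s$, then $p$ itself already satisfies $d(p,s) < r_s$, and I take $p' = p$. If instead $p$ lies on $\partial B_s$, so $d(p,s) = r_s$, I perturb $p$ a small distance toward $s$; for a sufficiently small perturbation the resulting point $p'$ stays in the open neighborhood $N$ and strictly inside $B_s$. In either case $p' \in N \subseteq C_\emptyset$, and $p' \notin C_{s''}$ for any site $s''$.

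Finally, I verify the three conditions of Definition~\ref{smvd:def:smvd} for the pair $(s,p')$: (i) $p' \notin C_s$ because $p' \in N$ misses all non-empty cells; (ii) $d(p',s) < r_s = \max\{d(q,s) : q \in C_s\}$ by construction; and (iii) $p' \in C_\emptyset$. Thus $(s,p')$ is a blocking pair, contradicting the stability of the diagram and completing the proof. The only mild subtlety is the boundary case $d(p,s) = r_s$, which is handled precisely by the openness of $\mathbb{R}^2 \setminus \bigcup_{s'} C_{s'}$ that lets us nudge $p$ strictly inside $B_s$ without leaving $C_\emptyset$.
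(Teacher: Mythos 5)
Your proof is correct and follows essentially the same route as the paper's: the inclusion $\bigcup_s C_s\subseteq\bigcup_s B_s$ is immediate from the definition of $B_s$, and the reverse inclusion is obtained by exhibiting a blocking pair $(s,p')$ for any unmatched point inside $B_s$. Your extra care with the boundary case (using closedness of the cells to perturb a point of $\partial B_s$ strictly inside $B_s$ while staying in $C_\emptyset$) is a detail the paper's shorter proof glosses over, but it does not change the argument.
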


\begin{proof}
For any site $s$, all the points inside the bounding disk of $s$ must be matched.
Other\-wise, there would be a point, say, $p$, not matched to anyone but closer to $s$ than points actually matched to $s$ (along the boundary of $B_s$), which would be unstable, as $p$ and $s$ would be a blocking pair.
Moreover, points outside of all the bounding disks cannot be matched to anyone, by definition of the bounding disks.
\end{proof}

\begin{lemma}[Characterization of edges]\label{smvd:lem:edges}
$ $

\begin{enumerate}
\item A straight edge separating faces of sites $s$ and $s'$ can only lie along the perpendicular bisector of $s$ and $s'$.
\item A curved edge whose convex face belongs to site $s$ lies along the boundary of $B_s$.
Moreover, if the concave face belongs to a site $s'$, the edge must be contained in the half-plane closer to $s$ than $s'$.
\item Empty faces can only be concave faces of curved edges.
\end{enumerate}
\end{lemma}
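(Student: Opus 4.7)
The plan is to deduce all three parts from Lemma~\ref{smvd:lem:boundary} together with the observation, following from Lemma~\ref{smvd:lem:bounding}, that any point lying in both an empty face and some bounding disk $B_s$ must sit on $\partial B_s$.

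For part 1, I take a straight edge $e$ between the faces of sites $s$ and $s'$ and suppose for contradiction that its supporting line is not the perpendicular bisector of $s$ and $s'$. Then $e$ meets the bisector in at most one point, so a subsegment of positive length lies strictly in one open half-plane, say the one closer to $s$. Lemma~\ref{smvd:lem:boundary} forces that subsegment onto $\partial B_s$, but a circle contains no nondegenerate line segment---contradiction.

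For part 2, let $e$ be a curved edge whose convex face is the cell of $s$. If the concave face belongs to a site $s'$, then the arc $e$ meets the perpendicular bisector of $s$ and $s'$ in at most two points, so a positive-length subarc lies strictly inside one open half-plane; Lemma~\ref{smvd:lem:boundary} identifies that subarc as lying on either $\partial B_s$ or $\partial B_{s'}$, with the convex face belonging respectively to $s$ or $s'$. The hypothesis rules out the $s'$-case, and since $e$ lies on a single circle, the entire arc lies on $\partial B_s$ and sits in the closed half-plane of $s$, proving the moreover clause. If instead the concave face is empty, then every $p\in e$ lies in $B_s$ (as a point of the closed cell of $s$) and in the closure of $\mathbb{R}^2\setminus\bigcup_{s\in S}B_s$ (the empty region, by Lemma~\ref{smvd:lem:bounding}), so $p\in\partial B_s$; the connected arc $e$ meeting the circle $\partial B_s$ in infinitely many points must itself be an arc of $\partial B_s$.

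For part 3, let $e$ be an edge of an empty face $E$ and let $F$ be the face on the other side of $e$. If $e$ were straight, part 1 would require both sides of $e$ to be cells of sites, contradicting that $E$ is empty; hence $e$ is curved. Writing $F$ as the cell of some site $s'$, the closure-and-boundary argument from the previous paragraph places every point of $e$ both in $B_{s'}$ and in the closure of $\mathbb{R}^2\setminus\bigcup_{s\in S}B_s$, so $e\subseteq\partial B_{s'}$. Since the cell of $s'$ lies inside $B_{s'}$, the face $F$ occupies the convex (interior) side of $e$, forcing $E$ onto the concave side, as desired. The main obstacle is precisely the empty-face case of part 2 (and its mirror in part 3): Lemma~\ref{smvd:lem:boundary} handles only boundaries between two sites, so one must combine closedness of the cells with Lemma~\ref{smvd:lem:bounding} to pin an empty-bordering arc onto the correct $\partial B_s$.
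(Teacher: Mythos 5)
Your proof is correct and follows essentially the same route as the paper, which derives claims (1) and (2) from Lemma~\ref{smvd:lem:boundary} and claim (3) from Lemma~\ref{smvd:lem:bounding}; you have merely filled in the details the paper leaves implicit. One small remark: in part~3 your appeal to part~1 to rule out a straight edge is not valid as stated (part~1 is conditional on both incident faces belonging to sites and says nothing about edges bordering empty faces), but this is harmless, since your subsequent argument that every point of $e$ lies on $\partial B_{s'}$ already forces $e$ to be a circular arc and settles the claim.
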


\begin{proof}
Claims~(1) and~(2) are consequences of Lemma~\ref{smvd:lem:boundary}, and Claim~(3) is a consequence of Lemma~\ref{smvd:lem:bounding}.
\end{proof}


\section{Combinatorial Complexity}\label{smvd:sec:bounds}

\subsection{Upper Bound on the Number of Faces}

As mentioned in Section~\ref{smvd:sec:geo}, a stable-matching Voronoi diagram can be viewed as the lower envelope of a set of cones.
Sharir and Agarwal~\cite{SA95} provide results that characterize the combinatorial complexity of the lower envelope of certain sets of functions, including cones.

Formally, the \emph{lower envelope} (also called \emph{minimization diagram}) of a set of bivariate continuous functions $F=\{f_1(x,y),\ldots,f_n(x,y)\}$ is the function
$$E_F(x,y)=\min_{1\leq i\leq n} f_i(x,y),$$
where ties are broken arbitrarily.
The lower envelope of $F$ subdivides the plane into maximal connected regions such that $E_F$ is attained by a single function $f_i$ (or by no function at all).
The \emph{combinatorial complexity} of the lower envelope $E_F$, denoted $K(F)$, is the number of maximal connected regions of $E_F$.
To prove our upper bound, we use the following result:

\begin{lemma}[Sharir and Agarwal~\cite{SA95}, page 191]\label{smvd:lem:envelope}
The combinatorial complexity $K(F)$ of the lower envelope of a collection $F$ of $n$ (partially defined) bivariate functions that satisfy the assumptions below is $O(n^{2+\varepsilon})$, for any $\varepsilon>0$.\footnote{The theorem, as stated in the book (Theorem 7.7), includes some additional assumptions, but the book then shows that they are not essential.}
\begin{itemize}
\item Each $f_i\in F$ is a portion of an algebraic surface of the form $P_i(x,y)$, for some polynomial $P_i$ of constant maximum degree.
\item The vertical projection of each $f_i\in F$ onto the $xy$-plane is a planar region bounded by a constant number of algebraic arcs of constant maximum degree.
\end{itemize} 
\end{lemma}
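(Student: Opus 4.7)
The plan is essentially to explain how one reduces the bound to the celebrated lower-envelope theorems for bivariate algebraic surfaces, since the statement is really a packaged form of results whose full proofs fill much of Chapter 7 of Sharir and Agarwal's book. The framework I would follow has three layers: (i) a combinatorial skeleton showing the complexity $K(F)$ is dominated by the number of vertices of the minimization diagram, (ii) the deep randomized/divide-and-conquer analysis that bounds the number of vertices for totally defined algebraic surfaces, and (iii) a reduction showing the bound still holds when each $f_i$ is only partially defined over a region bounded by $O(1)$ algebraic arcs.

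For layer (i), I would use the standard Euler-type accounting: every face of the minimization diagram is bounded by arcs that come either from pairwise intersection curves $\{f_i = f_j\}$ projected to the plane, or from the boundaries of the domains of definition. Edges and faces can then be charged to vertices of one of the following three types: triple points $f_i=f_j=f_k$ with $f_i$ minimum, intersections of a pairwise curve $\{f_i = f_j\}$ with a domain boundary, and intersections of two domain boundaries. So bounding vertices of each type bounds $K(F)$ up to a constant factor.

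For layer (ii), which is the main obstacle, I would invoke (or reprove, following Sharir) the bound that the lower envelope of $n$ totally defined algebraic surfaces of bounded description complexity has $O(n^{2+\varepsilon})$ combinatorial complexity. The proof technique here is a Clarkson--Shor style random sampling argument combined with a recursive vertical decomposition: take a random sample of size $r$, argue that each prism in the vertical decomposition of the sample's envelope is crossed by few surfaces in expectation, recurse, and solve the resulting recurrence. The bounded algebraic degree is essential since it controls how many vertices any fixed constant-size subcollection can contribute, and the $\varepsilon$ in the exponent arises from iterating the sampling lemma to absorb polylogarithmic factors. This is the step I would expect to be by far the hardest; rather than re-derive it I would cite it as a black box.

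For layer (iii), I would convert each partially defined $f_i$ into a totally defined algebraic surface by replacing the portion outside its domain with a very high ``wall'' that never participates in the lower envelope, formed by taking the product of the defining polynomial $P_i(x,y)$ with the indicator that we are inside the domain (implemented via cylinders over the $O(1)$ algebraic arcs bounding the vertical projection). One can then observe that the lower envelope of the modified, totally defined surfaces agrees with the original lower envelope where it matters, while all new surfaces still have bounded algebraic degree, so layer (ii) applies directly and yields the $O(n^{2+\varepsilon})$ bound claimed.
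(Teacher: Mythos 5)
The paper does not prove this lemma at all: it is imported verbatim as a citation to Sharir and Agarwal's book (Theorem~7.7 there), with a footnote noting that the book's extra assumptions are inessential. So there is no in-paper argument to compare against, and since your layer~(ii) --- the actual $O(n^{2+\varepsilon})$ bound for envelopes of algebraic surfaces, proved in the book by a Clarkson--Shor-type sampling and charging analysis --- is explicitly invoked as a black box, your proposal in effect reproduces the citation plus a reduction, rather than a proof. That is a legitimate way to treat the statement, and your layers~(i) and~(ii) describe the structure of the cited argument reasonably faithfully.

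The genuine problem is layer~(iii). Multiplying $P_i(x,y)$ by ``the indicator that we are inside the domain'' does not yield an algebraic surface: an indicator function is not a polynomial, and more generally there is no bounded-degree algebraic graph that agrees with $P_i$ on an arbitrary region bounded by $O(1)$ algebraic arcs and is ``very high'' outside it, so the modified surfaces fail the very hypothesis (constant maximum degree) that layer~(ii) needs. Erecting vertical ``walls'' over the boundary arcs does not help either, since walls are not graphs of totally defined functions, so you have not actually reduced to the totally defined case. The way Sharir and Agarwal handle partial definition is not by completing the functions but by carrying the domain-boundary arcs through the same machinery: envelope vertices now also arise from intersections of bisector curves $\{f_i=f_j\}$ with boundary arcs and from pairs of boundary arcs (exactly the vertex types you list in layer~(i)), and because each function contributes only $O(1)$ boundary arcs of bounded degree, the sampling/charging analysis absorbs them without changing the asymptotic bound --- this is precisely what the footnote about the ``additional assumptions'' being non-essential refers to. As written, your reduction step would fail, so either cite the partially-defined version (Theorem~7.7) directly, as the paper does, or redo layer~(iii) by incorporating boundary arcs into the counting rather than by completing the surfaces.
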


\begin{corollary}\label{smvd:cor:upper}
A stable-matching Voronoi diagram for $n$ sites has $O(n^{2+\varepsilon})$ faces, for any $\varepsilon>0$.
\end{corollary}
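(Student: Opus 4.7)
The plan is to realize the stable-matching Voronoi diagram as the minimization diagram of an explicit $n$-function lower envelope to which Lemma~\ref{smvd:lem:envelope} applies. For each site $s_i\in S$, define the partial bivariate function
\[
f_i(x,y)=\|(x,y)-s_i\|^2,\qquad (x,y)\in B_{s_i},
\]
and leave $f_i$ undefined outside $B_{s_i}$. Section~\ref{smvd:sec:geo} describes the diagram as the lower envelope of cones truncated at the bounding disks; since all the cone heights are nonnegative, squaring is order-preserving, so the minimization diagram of the family $F=\{f_1,\dots,f_n\}$ is identical (as a planar subdivision) to the stable-matching Voronoi diagram. In particular, every face of the diagram (including empty faces and the external face) corresponds to a maximal connected region of the minimization diagram of $E_F$, so the number of faces is bounded above by $K(F)$.

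Next I would verify the two hypotheses of Lemma~\ref{smvd:lem:envelope}. For the first, each $f_i$ is a portion of the paraboloid $z=(x-x_i)^2+(y-y_i)^2$, which is an algebraic surface of constant degree (namely~2). For the second, the vertical projection of $f_i$ onto the $xy$-plane is the bounding disk $B_{s_i}$, whose boundary is a single circle, i.e., a single algebraic arc of degree~2. Thus $F$ satisfies both hypotheses with constants independent of $n$, and Lemma~\ref{smvd:lem:envelope} yields $K(F)=O(n^{2+\varepsilon})$ for every $\varepsilon>0$. Combining with the preceding paragraph gives the claimed bound on the number of faces.

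The main obstacle I anticipate is the legitimacy of using the bounding disks $B_{s_i}$ as the domains of the $f_i$, because the radii of these disks are determined by the diagram itself rather than being given in advance. However, Lemma~\ref{smvd:lem:envelope} makes no assumption about how the domains of the $f_i$ are specified — it only requires that they be planar regions bounded by a constant number of algebraic arcs of constant degree, which is plainly true here. A second, more cosmetic, point is that Lemma~\ref{smvd:lem:envelope} formally counts the regions of the minimization diagram, whereas the corollary counts faces of the stable-matching Voronoi diagram; but by Lemma~\ref{smvd:lem:edges}, the straight edges lie on perpendicular bisectors and the curved edges lie on the bounding circles, so the two notions agree and the application is direct.
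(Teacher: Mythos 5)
Your proposal is correct and follows essentially the paper's own route: realize the stable-matching Voronoi diagram as the lower envelope of the distance functions truncated at the bounding disks and invoke Lemma~\ref{smvd:lem:envelope}. The only differences are cosmetic --- you use squared distances (paraboloid patches) where the paper uses the truncated cones themselves, and where you appeal to the definition of the minimization diagram (regions attained by no function) to count the empty faces, the paper instead adds an auxiliary function $f_{n+1}(x,y)=z^*$ hovering above all the cones so that each empty face appears explicitly as a face of the envelope.
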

\begin{proof}
It is clear that the finite, ``upside-down'' cones whose lower envelope forms the stable-matching Voronoi diagram of a set of sites satisfy the above assumptions. In particular, their projection onto the $xy$-plane are disks. 
Note that the bound still applies if we include the empty faces, as Lemma~\ref{smvd:lem:envelope} still holds if we add an extra bivariate function $f_{n+1}(x,y)=z^*$, where $z^*$ is any value higher than the height of any cone (i.e., $f_{n+1}$ is a plane that ``hovers'' over the cones). Such a function would have a face in the lower envelope for each empty face in the stable-matching Voronoi diagram.
\end{proof}

\subsection{Upper bound on the Number of Edges and Vertices}\label{smvd:app:boundedges}
Euler's formula relates the number of faces in a planar graph with the number of vertices and edges. By viewing the stable-matching Voronoi diagram as a graph, we can use Euler's formula to prove that the $O(n^{2+\varepsilon})$ upper bound also applies to the number of edges and vertices. In order to do so, we will need to show that the average degree is more than two, which is the purpose of the following lemmas.

In this section, we assume that sites are in general position (as defined in Section~\ref{smvd:sec:geo}).
However, note that non-general-position
constructions cannot yield the worst-case complexity.
This is because if two curved edges coincide exactly at a point that is not an endpoint, we can perturb slightly the site locations to move them a little closer, which creates a new vertex and edge.
For the same reason, we also assume that no vertex has degree four or more, which requires four or more sites to lie on the same circle (as in the standard Voronoi Diagram).

\begin{lemma}\label{smvd:lem:edgesequences}
	The following sequences of consecutive edges along the boundary of two faces cannot happen: \textbf{1.} Straight--straight. \textbf{2.} Curved--curved. \textbf{3.} Straight--curved--straight.
\end{lemma}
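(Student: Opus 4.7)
The plan is to derive a contradiction in each of the three forbidden sequences by combining Lemma~\ref{smvd:lem:edges} with the general-position assumption that every vertex has degree at most three. Throughout, let $F_1, F_2$ denote the two bordering faces and, where applicable, $s_1, s_2$ the corresponding sites; note that by Lemma~\ref{smvd:lem:edges}(3), in Case~1 both $F_1$ and $F_2$ must be site-faces.

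Case~1 and the sub-case of Case~2 in which $e_1, e_2$ share a common convex face admit a unified argument. By Lemma~\ref{smvd:lem:edges}, both edges then lie on a common straight line (the perpendicular bisector of $s_1, s_2$) or on a common circle ($\partial B_{s_1}$). They meet at a vertex $v$ which, by maximality of edges, must have a third incident edge $e_3$ (otherwise $e_1$ and $e_2$ would coalesce into one maximal segment or arc). Since $e_3$ cannot lie on that same line or circle, it enters one side of the curve, and the two local faces bordering $e_1$ and $e_2$ on that side are separated by $e_3$ and hence distinct---contradicting that both edges bound $(F_1, F_2)$.

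The main obstacle is the remaining sub-case of Case~2, in which the convex faces of $e_1$ and $e_2$ differ: say $e_1 \subset \partial B_{s_1}$ and $e_2 \subset \partial B_{s_2}$. By Lemma~\ref{smvd:lem:edges}(2), $F_1$ is a face of $s_1$ and $F_2$ a face of $s_2$, so $F_1 \subseteq B_{s_1}$ and $F_2 \subseteq B_{s_2}$. By general position the two circles cross transversally at $v$, cutting a neighborhood into four sectors that I label cyclically $A, B, D, C$, with $A$ inside both bounding disks, $D$ outside both, and $B, C$ inside exactly one. Each of $e_1, e_2$ is one of two possible arcs, yielding four combinations; I would check each. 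In two of them a single sector is forced to equal both $F_1$ and $F_2$, which is impossible; in the other two, $F_1$ or $F_2$ is forced to include sector $D$, contradicting $F_i \subseteq B_{s_i}$.

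For Case~3, assume without loss of generality that $e_2$ has convex face $F_1$, so $e_2 \subset \partial B_{s_1}$ and lies in the half-plane closer to $s_1$ by Lemma~\ref{smvd:lem:edges}(2); since $e_1, e_3$ are straight and both bound $(F_1, F_2)$, they lie on the perpendicular bisector of $s_1, s_2$ by Lemma~\ref{smvd:lem:edges}(1). The endpoints $v_1, v_2$ of $e_2$ are therefore the two intersection points of $\partial B_{s_1}$ with that bisector, cutting the bisector into a chord inside $B_{s_1}$ and two rays outside. Since $e_1$ and $e_3$ are distinct edges on the bisector, at least one of them must lie on one of the exterior rays, hence entirely outside $B_{s_1}$; but then neither face it borders can equal $F_1 \subseteq B_{s_1}$, giving the desired contradiction.
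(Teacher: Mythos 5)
Your Case~1 and the common-convex-face half of your Case~2 are correct and amount to the paper's own argument: two consecutive maximal segments on one line, or maximal arcs on one circle, bounding the same pair of faces cannot both be maximal (your detour through a third edge $e_3$ is unnecessary --- even if $e_3$ exists, $e_1\cup e_2$ is still a single segment or arc on the common boundary, so $e_1$ was not maximal --- but it reaches the same contradiction). Your treatment of the remaining sub-case of Case~2, where the two arcs lie on different bounding circles, is in fact more careful than the paper's one-line dismissal, and the four-sector analysis at the transversal crossing is sound: using $F_i\subseteq B_{s_i}$ to exclude the sector outside both disks forces both arcs to be the ones bounding the ``inside both'' sector, which then must belong to $F_1$ and to $F_2$ simultaneously.

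Case~3 contains a genuine error. You argue that since $v_1,v_2$ cut the bisector $\ell$ into a chord inside $B_{s_1}$ and two exterior rays, and $e_1,e_3$ are distinct edges on $\ell$, at least one of them must lie on an exterior ray. The opposite is forced: $e_1$ and $e_3$ lie on the boundary of the closed face $F_1$, hence $e_1,e_3\subseteq F_1\subseteq B_{s_1}$, so \emph{both} straight edges are sub-segments of the chord, one issuing from $v_1$ and one from $v_2$. Your pigeonhole among the three components of $\ell\setminus\{v_1,v_2\}$ implicitly assumes each maximal edge occupies an entire component, but maximality only says each edge is a maximal portion of $\ell$ \emph{along which $F_1$ and $F_2$ are adjacent}; nothing you have shown prevents $e_1$ and $e_3$ from being two disjoint pieces of the chord separated by a gap in which the region of some third site intrudes across the bisector. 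Ruling out (or otherwise handling) exactly that intervening-face configuration is the real content of Case~3 --- if nothing intrudes, the gap also separates $F_1$ from $F_2$ and $e_1$, the gap, and $e_3$ merge into one maximal edge; if something does intrude, one must show the whole picture is inconsistent with stability --- and that analysis is absent from your write-up. (This is also the point at which the paper's own one-sentence justification is thinnest, so it deserves a genuine argument rather than a dichotomy that does not hold.)
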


\begin{proof}
	$ $
	\begin{enumerate}
		\item Straight edges separating two faces of sites $s$ and $s'$ are constrained to lie along the perpendicular bisector of $s$ and $s'$ (Lemma~\ref{smvd:lem:edges}).
		Therefore, two consecutive straight edges would not be maximal.
		\item Curved edges separating a convex face of a site $s$ are constrained to lie along the boundary of the bounding disk of $s$ (Lemma~\ref{smvd:lem:edges}).
		Thus, two consecutive curved edges would not be maximal (under the assumption of general position).
		\item In such a case, not both straight edges could lie along the perpendicular bisector.\qedhere
	\end{enumerate}
\end{proof}
Incidentally, \emph{curved--straight--curved} sequences \emph{can} happen, and can be seen in Figure~\ref{smvd:fig:comparison}.

\begin{lemma}\label{smvd:lem:degreetwo}
	A vertex with degree two cannot be adjacent to two vertices with degree two.
\end{lemma}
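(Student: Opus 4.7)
The plan is to argue by contradiction: suppose $v$ has degree two and both of its neighbors $u,w$ also have degree two. The main driver of the proof will be forcing the edge types (straight vs.\ curved) along a long enough stretch of boundary to produce a forbidden subsequence from Lemma~\ref{smvd:lem:edgesequences}.

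First I would make the following structural observation: at any degree-two vertex, the two incident edges lie along the boundary of the \emph{same} pair of faces. The reason is that in general position a degree-two vertex is incident to exactly two face corners, so both of its edges separate the same two faces. Applying this to $u$, $v$, and $w$, the four edges $e_u, uv, vw, e_w$ (where $e_u$ is the other edge at $u$ and $e_w$ the other edge at $w$) all lie along the boundary of a single pair of faces, and they appear in this cyclic order along that boundary.

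Next, I would apply parts (1) and (2) of Lemma~\ref{smvd:lem:edgesequences} pairwise at $u$, $v$, and $w$ to conclude that consecutive edges in this list must have opposite types, i.e.\ the types of $e_u, uv, vw, e_w$ strictly alternate between straight (S) and curved (C). There are only two possibilities for the pattern: $\mathrm{S,C,S,C}$ or $\mathrm{C,S,C,S}$. The first contains the contiguous subsequence $\mathrm{S,C,S}$ formed by $e_u, uv, vw$, and the second contains $\mathrm{S,C,S}$ formed by $uv, vw, e_w$. Either way, this is a straight--curved--straight sequence of consecutive edges on the boundary of two faces, contradicting part (3) of Lemma~\ref{smvd:lem:edgesequences}.

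The step I expect to require the most care is the very first one: verifying that in a stable-matching Voronoi diagram under the general-position assumptions of this section a degree-two vertex really is incident to only two faces (so that its two edges separate the same pair of faces). Once that is in hand, the rest is a short case analysis on alternating edge types, and the contradiction is immediate from Lemma~\ref{smvd:lem:edgesequences}(3). No further geometric reasoning about bounding disks or perpendicular bisectors is needed beyond what is already encapsulated in that lemma.
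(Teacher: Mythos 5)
Your proposal is correct and takes essentially the same route as the paper: both rest on the observation that the two edges at a degree-two vertex bound the same pair of faces, and then rule out the resulting run of consecutive edges using the forbidden patterns of Lemma~\ref{smvd:lem:edgesequences}. The only point the paper handles explicitly that you gloss over is the degenerate case $e_u=e_w$ (a triangular face inside another face), where your four edges collapse to three on an odd cycle; there the alternation forced by parts (1)--(2) of Lemma~\ref{smvd:lem:edgesequences} is already unsatisfiable, so the contradiction still follows and your argument is not harmed.
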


\begin{proof}
	A vertex with degree two connects two edges separating the same two faces.
	If there were a node with degree two adjacent to two other nodes with degree two, we would either have four consecutive edges separating the same two faces or a triangular face inside another face.
	However, neither case could avoid the sequences of edges given in Lemma~\ref{smvd:lem:edgesequences}.
\end{proof}

\begin{lemma}~\label{smvd:lem:avgdegree}
	The average degree is at least $2.25$.
\end{lemma}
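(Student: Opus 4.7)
The plan is to translate the claimed average-degree bound into a purely combinatorial inequality about vertex counts, and then establish that inequality through a charging argument built on Lemma~\ref{smvd:lem:degreetwo}. Since we have assumed no vertex has degree four or more, every vertex has degree exactly 2 or 3. Write $V_2$ and $V_3$ for the sets of degree-2 and degree-3 vertices, respectively. A brief algebraic manipulation shows that the average degree $(2|V_2|+3|V_3|)/(|V_2|+|V_3|)$ is at least $9/4$ if and only if $|V_2|\le 3|V_3|$, so the entire task reduces to proving this inequality.

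To prove $|V_2|\le 3|V_3|$, I would charge each degree-2 vertex one unit to a chosen degree-3 neighbor. Lemma~\ref{smvd:lem:degreetwo} guarantees that at least one such neighbor exists: otherwise the degree-2 vertex would be adjacent to two other degree-2 vertices, directly contradicting the lemma. Since each degree-3 vertex has only three neighbors, it can receive at most three units of charge. Summing the charges over all of $V_2$ and all of $V_3$ yields $|V_2|\le 3|V_3|$, which by the equivalence above gives the desired average-degree bound.

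The only remaining detail is the degenerate case $|V_3|=0$. In that situation Lemma~\ref{smvd:lem:degreetwo} forces $|V_2|=0$ as well---any degree-2 vertex would necessarily have two degree-2 neighbors---so the diagram has no vertices at all and the average-degree statement is vacuous (or is handled trivially). I do not anticipate any real obstacles in this argument; all the geometric heavy lifting has already been packed into Lemmas~\ref{smvd:lem:edgesequences} and~\ref{smvd:lem:degreetwo}, and what remains is a clean double-counting step whose only mild subtlety is verifying that the critical threshold $9/4$ is precisely what the one-to-three charging ratio delivers.
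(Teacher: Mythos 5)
Your proposal is correct and follows essentially the same route as the paper: both reduce the bound $9/4$ to the inequality $|V_2|\le 3|V_3|$ and then establish it by counting edges between degree-2 and degree-3 vertices, using Lemma~\ref{smvd:lem:degreetwo} to guarantee each degree-2 vertex has a degree-3 neighbor and the capacity bound of three at each degree-3 vertex. Your one-unit charging is a slightly coarser version of the paper's edge count (which also credits two units from degree-2 vertices with no degree-2 neighbor), but it yields the same conclusion.
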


\begin{proof}
	Note that all vertices have degree at least $2$, as they are the endpoints of edges, and every edge has different faces on each side. Also recall the assumption that there are no nodes with degree more than three, as this cannot yield a worst-case number of vertices nor edges.
	
	Thus, all nodes have degree two or three. Let $n$ be the number of 2-degree vertices, and $k$ the number of 3-degree vertices.
	The average degree is $(2n+3k)/(n+k)=2+k/(n+k)$. Thus, we need to show that $k/(n+k)\geq 1/4$, or, rearranging, that $k\geq n/3$.
	
	By Lemma~\ref{smvd:lem:degreetwo}, a vertex with degree two cannot be adjacent to two vertices with degree two.
	Among the $n$ 2-degree vertices, say $m_1$ are connected with another 2-degree node, while the remaining $m_2$ are adjacent only to 3-degree nodes. Then, there are $m_1+2m_2=n+m_2$ edges connecting 2-degree nodes with 3-degree nodes.
	This means that $k\geq (n+m_2)/3$, completing the proof.	
\end{proof}

\begin{lemma}\label{smvd:lem:upperedges}
Let $V,E$ and $F$ be the number of vertices, edges, and faces of the stable-matching Voronoi diagram of a set of sites $S$. Then, $V\leq 8F-16$ and $E\leq 9F-18$.
\end{lemma}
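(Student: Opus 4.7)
The plan is to derive both inequalities by combining Euler's formula for planar graphs with the lower bound on the average degree established in Lemma~\ref{smvd:lem:avgdegree}.

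First I would set up the stable-matching Voronoi diagram as a plane graph in the natural way, using the vertices, edges, and faces defined in Section~\ref{smvd:sec:geo}. A small bookkeeping point has to be dealt with: by the remark in the bullet list, a curved edge may form a complete circle with no endpoint. For every such isolated circular edge I would insert a single artificial degree-2 vertex on the circle; this increases $V$ and $E$ by the same amount, preserves Euler's formula, and only lowers the average degree by a negligible amount (in any case, the lemma's proof already accounts for degree-2 vertices). With this convention the diagram is a legitimate plane graph, and Euler's formula gives $V - E + F = 1 + C \geq 2$, where $C$ is the number of connected components, so $E \leq V + F - 2$.

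Next I would invoke Lemma~\ref{smvd:lem:avgdegree}: the average degree is at least $9/4$, so the handshake identity $\sum_{v} \deg(v) = 2E$ yields $2E \geq \tfrac{9}{4} V$, i.e.\ $V \leq 8E/9$. Substituting into the Euler inequality gives
\[
E \;\leq\; \tfrac{8}{9} E + F - 2,
\]
whence $E \leq 9F - 18$. Plugging this back into $V \leq 8E/9$ produces $V \leq 8F - 16$, as required.

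The only real obstacle is the justification of the planar-graph formalism for curved arcs and isolated circles, since the lemma's statement and the application of Euler's formula both silently presume a well-defined plane graph structure; once the artificial-vertex trick above is in place, both inequalities fall out immediately from the two ingredients. Note that the argument is essentially tight up to the $\varepsilon$ in Corollary~\ref{smvd:cor:upper}, so combined with that corollary it already gives the desired $O(n^{2+\varepsilon})$ upper bound on the number of edges and vertices.
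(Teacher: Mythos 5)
Your overall route is the same as the paper's: Euler's formula with a connected-components term gives $E\leq V+F-2$, Lemma~\ref{smvd:lem:avgdegree} gives $2E\geq\tfrac{9}{4}V$, and the same algebra yields $E\leq 9F-18$ and $V\leq 8F-16$. The one place where you deviate is the treatment of curved edges that form complete circles, and that is where your argument has a genuine gap. Inserting an artificial degree-2 vertex on each isolated circle produces vertices whose only incidences are with themselves (a loop): they are degree-2 vertices with no degree-3 neighbor. The proof of Lemma~\ref{smvd:lem:avgdegree} relies, via Lemma~\ref{smvd:lem:degreetwo}, on every degree-2 vertex having at least one degree-3 neighbor, so that lemma simply does not apply to your augmented graph. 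It is not just a matter of the average degree dropping ``negligibly'': if, say, all $n$ sites have pairwise isolated disk regions, the augmented graph has $V'=E'=n$ and average degree exactly $2<\tfrac94$, so the step $2E\geq\tfrac94 V$ is false for that graph and your chain of inequalities does not go through (the conclusion happens to hold there, but not by your argument). A smaller bookkeeping slip in the same step: putting one vertex on a circle turns it into a single loop, so $V$ goes up by one while $E$ is unchanged, not ``by the same amount.''

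The fix is what the paper does: dispose of complete-circle edges up front rather than absorbing them into the graph. Each such edge corresponds to a site contributing one edge and no vertices, so their presence only decreases $V$ and $E$ relative to $F$, and it suffices to prove the two inequalities under the assumption that no complete-circle edges exist; then every vertex genuinely has degree 2 or 3, Lemma~\ref{smvd:lem:avgdegree} applies as stated, and the rest of your computation is exactly the paper's.
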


\begin{proof}
For this proof, suppose that there are no curved edges that form a full circle. Note that the presence of such edges can only reduce the number of vertices and edges, as for each such edge there is a site with a single edge and no vertices.

Without such edges, the vertices and edges of the stable-matching Voronoi diagram form a planar graph, and $V,E,F$ are the number of vertices, edges, and faces of this graph, respectively. Moreover, let $C$ be the number of connected components.
Due to Euler's formula for planar graphs, we have $F=E-V+C+1$, and thus $F\geq E-V+2$.
Moreover, by Lemma~\ref{smvd:lem:avgdegree}, the sum of degrees is at least $2.25V$, so $2E\geq 2.25V$.
Combining the two relations above, we have $V\leq 8F-16$ and $E\leq 9F-18$.
\end{proof}

We conclude by stating the main theorem of this section, which is a combination of Corollary~\ref{smvd:cor:upper} and Lemma~\ref{smvd:lem:upperedges}:
\begin{theorem}\label{smvd:the:upbound}
A stable-matching Voronoi diagram for $n$ point sites 
has $O(n^{2+\varepsilon})$ faces, vertices, and edges, for any $\varepsilon>0$.
\end{theorem}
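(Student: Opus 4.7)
The plan is to combine the two main results already established in this section, which together immediately yield the bound on all three combinatorial parameters. Concretely, Corollary~\ref{smvd:cor:upper} already provides the target bound $F = O(n^{2+\varepsilon})$ on the number of faces by interpreting the diagram as the lower envelope of truncated cones and invoking Sharir--Agarwal. Lemma~\ref{smvd:lem:upperedges} then bounds $V$ and $E$ linearly in $F$ (specifically $V \leq 8F-16$ and $E \leq 9F-18$). So the structure of the proof is simply: chain these two bounds.

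First, I would restate that by Corollary~\ref{smvd:cor:upper}, the number of faces $F$ is $O(n^{2+\varepsilon})$ for every $\varepsilon>0$. Next, I would invoke Lemma~\ref{smvd:lem:upperedges} to conclude that $V = O(F)$ and $E = O(F)$. Substituting the face bound into both linear inequalities yields $V = O(n^{2+\varepsilon})$ and $E = O(n^{2+\varepsilon})$, completing the argument.

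The only subtlety worth flagging explicitly is that Lemma~\ref{smvd:lem:upperedges} was proven under the assumption that no curved edge forms a complete circle (and under general position); as the proof of that lemma already observes, such full-circle edges only decrease $V$ and $E$ relative to $F$, so the bounds $V \leq 8F-16$ and $E \leq 9F-18$ continue to hold (up to absorbing the isolated-circle contributions into constants), and hence the asymptotic conclusion is unaffected. Similarly, non-general-position configurations only reduce the counts, as noted in Section~\ref{smvd:app:boundedges}, so it suffices to handle the general-position case. There is no real obstacle here: the heavy lifting has been done by the lower-envelope argument for faces and by the Euler-formula plus average-degree argument for the $V,E$-to-$F$ conversion, and the theorem is a one-line corollary of stitching them together.
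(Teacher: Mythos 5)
Your proposal is correct and matches the paper exactly: the paper itself states Theorem~\ref{smvd:the:upbound} as a direct combination of Corollary~\ref{smvd:cor:upper} (the face bound via the lower-envelope argument) and Lemma~\ref{smvd:lem:upperedges} (the Euler-formula conversion giving $V\leq 8F-16$ and $E\leq 9F-18$). Your additional remarks on full-circle edges and general position only restate caveats the paper already handles inside those earlier results, so nothing further is needed.
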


\subsection{Lower Bound}\label{smvd:sec:lower}

We show a quadratic lower bound on the number of faces in the worst case by constructing an infinite family of instances with $\Omega(n^2)$ faces. To start, we give such a family of instances where sites have arbitrary appetites. This introduces the technique behind our second, more intricate construction, which only uses sites with appetite $1$. This shows that the $\Omega(n^2)$ lower bound holds even in this restricted case where all the sites have the same appetite. The lower bound extends trivially to vertices and edges as well. 

\begin{lemma}\label{smvd:lem:lowerPre}
	A stable-matching Voronoi diagram for $n$ point sites
	has $\Omega(n^2)$ faces, edges, and vertices in the worst case.
\end{lemma}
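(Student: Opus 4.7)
The plan is to construct an explicit family of $n$-site configurations with appetites for which the stable-matching Voronoi diagram has $\Omega(n^2)$ faces. The matching lower bounds on edges and vertices then follow automatically, because under the general-position assumption from Section~\ref{smvd:sec:geo} every vertex has bounded degree, so a standard handshake/Euler argument gives $V = \Theta(E) = \Theta(F)$ asymptotically, together with $F \le 1 + E - V + C$ from Euler's formula.

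The core geometric mechanism I will exploit, best seen through the cone interpretation of Section~\ref{smvd:sec:geo}, is that the cell of a large-appetite site can be disconnected into many faces when the bounding disks of nearby small-appetite sites form a tangent ``chain'' that behaves as an impenetrable barrier. Indeed, by Lemmas~\ref{smvd:lem:boundary} and~\ref{smvd:lem:edges}, no part of a large site's cell can cross such a chain, so a large site whose bounding disk straddles one or more chains sees its cell split into a main component near the site and one or more \emph{remote components}, one per pocket on the far side of the chains in which the large site is still the locally closest surviving cone.

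Concretely, I would use two groups of sites: a group of $\Theta(n)$ \emph{barrier} sites with tiny appetites whose bounding disks are pairwise tangent and form one or more chains, and a group of $\Theta(n)$ \emph{large} sites with large appetites so chosen that each of their bounding disks straddles the entire barrier construction. The chains, together with the pairwise perpendicular bisectors between the large sites, partition the relevant portion of the plane into pockets, each pocket assigned to exactly one large site. The plan of attack is then: (i) fix positions and small appetites for the barrier sites so the tangencies hold, and use Lemmas~\ref{smvd:lem:boundary} and~\ref{smvd:lem:bounding} to verify that the chains are genuinely impenetrable; (ii) fix positions for the large sites so that within each pocket there is a unique closest large site among them; (iii) establish that each of the $\Theta(n)$ large sites contributes $\Omega(n)$ pieces; and (iv) multiply to conclude $\Omega(n^2)$ faces.

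The main obstacle will be step (iii): arranging barriers that simultaneously expose $\Omega(n)$ pockets \emph{to every individual large site} using only $\Theta(n)$ barrier sites. I expect that this will require an efficient barrier layout---for instance radially nested or concentric chains---combined with a careful coordination between the barrier geometry and the perpendicular-bisector structure among large sites. A secondary difficulty is the global area budget: each large site's appetite must equal the total area of its assigned pieces. Because the statement permits arbitrary appetites, I plan to absorb this freedom by choosing appetites \emph{after} the positions are fixed, appealing to the continuous dependence of the diagram on its input appetites that is implicit in the uniqueness argument of Hoffman~\textit{et al.}~\cite{hoffman2006} to guarantee that the intended combinatorial structure survives the final tuning.
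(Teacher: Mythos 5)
There is a genuine gap: the step you yourself flag as the main obstacle --- step (iii), making \emph{every} one of the $\Theta(n)$ large sites see $\Omega(n)$ pockets while spending only $\Theta(n)$ barrier sites in total --- is exactly where the quadratic count has to come from, and your proposal leaves it unresolved (``I expect that this will require an efficient barrier layout'' is not an argument). Nothing in the barrier mechanism as described forces a single large cell to be crossed $\Omega(n)$ times: a chain of tiny-appetite disks is short unless it uses many sites, so cutting one large cell into $\Omega(n)$ pieces with chains seems to cost $\Omega(n)$ barrier sites per large cell unless the large cells are already long and thin, and you give no mechanism that makes them so. The paper's construction gets this efficiency in a different, and simpler, way: the large-appetite sites are packed in a vertical column so that their \emph{mutual} perpendicular bisectors confine each of them to a long, thin horizontal strip (no barriers needed for that), and the second group consists of isolated sites of appetite $\pi$ whose cells are unit disks placed along the strips; each such disk, having diameter equal to the total height of the column, cuts across \emph{all} $\Theta(n)$ strips simultaneously, and the strips simply grow around and past the disks because the column sites' appetites are taken huge ($\gg n^2$). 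Thus each of the $\Theta(n)$ disks performs $\Theta(n)$ cuts, with no tangencies, no impenetrability claim, and no appetite tuning: the large appetites only need to be big enough that the strips extend past the farthest disk, which also disposes of your ``area budget'' worry without any appeal to continuous dependence on appetites (an argument that Hoffman et al.\ do not actually supply and that you would have to prove).

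Two further points about the barrier idea itself. First, pairwise tangent bounding disks are precisely the degenerate configuration the paper singles out (tangent curved edges), so you would at least have to perturb or argue around it. Second, and more substantively, tangency does not disconnect faces the way you want: faces are maximal \emph{closed} connected subsets of a cell, and two pockets of the same large cell meeting at a tangency point of consecutive barrier disks are connected through that point, hence form a single face. So even granting the layout, the ``impenetrable chain'' would have to be replaced by overlapping (not tangent) disks, or by an argument that adjacent pockets always belong to different large sites; as written, the face count you need is not established.
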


\begin{proof}
	Consider the setting in Figure~\ref{smvd:fig:lowerPre}.
	Assume $n$ is even. We divide the sites into two sets, $X$ and $Y$, of size $m=n/2$ each.
	The sites in $X$ are arranged vertically, spaced evenly, and spanning a total height of $2$. Note that the \textit{standard} Voronoi diagram of the sites in $X$ alone consists of infinite horizontal strips. The top and bottom sites have strips extending vertically indefinitely, while the rest have thin strips of height $2/(m-1)$.
	
	The sites in $Y$ are aligned vertically with the center of the strips. Half of the sites in $Y$ lie on each side of the sites in $X$. The sites in $Y$ have appetite $\pi$, so their ``ideal'' stable cell is a disk of radius $1$ around them. They are spaced evenly at a distance of at least 2 (e.g., $2.1$) of each other and of the first $m$ sites, so that each site in $Y$ is the first choice of all the points within distance $1$ of it.
	
	Now, consider the resulting stable-matching Voronoi diagram when the sites of $X$ have large ($\gg m^2$) and equal appetites. 
	To visualize it, consider the circle-growing method from~\cite{hoffman2006} described in Section~\ref{smvd:sec:intro}, where a circle starts growing from each site at the same time and rate, and any unassigned point reached by a circle is assigned to the corresponding site.
	
	The sites in $Y$ are allowed to grow without interference with any other site until they fulfill their appetite and freeze. Their region is thus a disk with diameter $2$, which spans all the thin strips (Figure~\ref{smvd:fig:lowerPre}).
	The sites in $X$ start growing their region as a disk, which quickly reach the disks of the sites above and below. Then, the regions are restricted to keep growing along the horizontal strips. 
	The sites in $X$ keep growing and eventually reach a region already assigned to a site in $Y$ (which already fulfilled their appetite and stopped growing by this time). They continue growing along the strips past the regions already assigned to sites in $Y$. Eventually, they also freeze when they fulfill their appetite. The top and bottom sites are the first to fulfill their appetite, since they are not restricted to grow along thin strips, but we are not interested in the topology of the diagram beyond the thin strips. The only thing we need for our construction is that the appetite of the sites in $X$ is large enough so that their stable cells reach past the stable cells of the furthest sites in $Y$ along the strips.
	
	Informally, the regions of the sites in $Y$ ``cut'' the thin strips of the sites in $X$. Each site in $Y$ creates $m-2$ additional faces (the top and bottom sites in $X$ do not have thin strips), and hence the number of faces is at least $m(m-2)=\Omega(n^2)$. 
\end{proof}

\begin{figure}
	\centering
	\includegraphics[width=\linewidth]{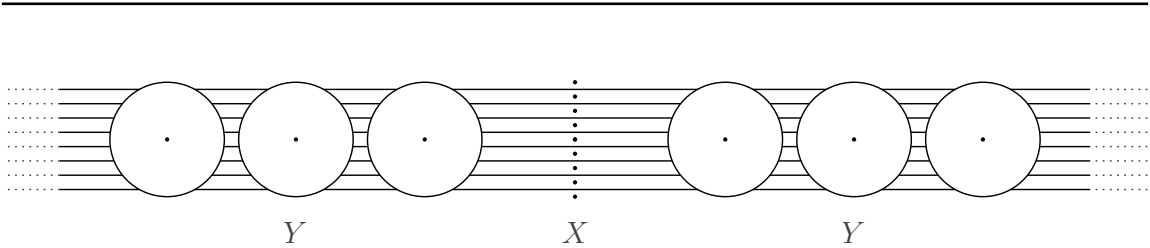}
	\caption{Lower bound construction for Lemma~\ref{smvd:lem:lowerPre}.}
	\label{smvd:fig:lowerPre}
\end{figure}

In the proof of Lemma~\ref{smvd:lem:lowerPre}, sites in $X$ and $Y$ have different roles. Sites in $X$ create a linear number of long, thin faces which can all be cut by a single disk. This is repeated a linear number of times, once for each site in $Y$, yielding quadratic complexity. However, this construction relies on the sites in $X$ having larger appetites than the sites in $Y$. Next, we consider the case where all the sites have appetite one.
The proof will follow the same idea, but now the thin and long strips will be circular strips. 
Lemma~\ref{smvd:lem:annulusintersection} is an auxiliary result used in the proof.

\begin{lemma}\label{smvd:lem:annulusintersection}
	Let $A$ be an annulus of width $\varepsilon>0$, and $D$ a disk centered outside the outer circle of $A$, with radius smaller than the inner radius of $A$, and tangent to the inner circle of $A$ (Figure~\ref{smvd:fig:annulusintersection}). Then,
	$$\lim\limits_{\varepsilon\rightarrow 0}\frac{area(A\cap D)}{area(A)}=0$$
\end{lemma}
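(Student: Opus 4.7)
The plan is to place coordinates so that the annulus $A$ is centered at the origin with inner radius $r$ and outer radius $r+\varepsilon$, and the point of tangency is at $(r,0)$. Denoting the radius of $D$ by $\rho$ (with $\rho<r$), the tangency condition forces the center of $D$ to be at $(r+\rho,0)$. The set $A\cap D$ is then a small sliver hugging the inner circle near $(r,0)$, and the whole strategy is to show that $\operatorname{area}(A\cap D)=O(\varepsilon^{3/2})$, while $\operatorname{area}(A)=\pi((r+\varepsilon)^2-r^2)=\Theta(\varepsilon)$, so the ratio is $O(\sqrt{\varepsilon})\to 0$.

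The key quantitative step is to show that near $(r,0)$ the boundary of $D$ peels away from the inner circle of $A$ quadratically rather than linearly, since the two circles are tangent there. In polar coordinates $(s,\theta)$ around the origin, I would solve $(s\cos\theta-(r+\rho))^2+s^2\sin^2\theta=\rho^2$ for the near branch to obtain
\[
s_{-}(\theta)\;=\;(r+\rho)\cos\theta-\sqrt{\rho^2-(r+\rho)^2\sin^2\theta},
\]
and Taylor-expand around $\theta=0$ to get $s_{-}(\theta)=r+\tfrac{r(r+\rho)}{2\rho}\,\theta^2+O(\theta^4)$. The positive quadratic coefficient is the analytic manifestation of tangency. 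Imposing $s_{-}(\theta)\leq r+\varepsilon$ then bounds the angular extent of $A\cap D$ by $\theta_0:=\sqrt{2\rho\varepsilon/(r(r+\rho))}=O(\sqrt{\varepsilon})$.

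Once the angular window is controlled, I would integrate in polar coordinates. For each $\theta\in[-\theta_0,\theta_0]$, the radial slice of $A\cap D$ has length at most $\varepsilon$ and lies at distance $s\leq r+\varepsilon$ from the origin, so
\[
\operatorname{area}(A\cap D)\;\leq\;\int_{-\theta_0}^{\theta_0}(r+\varepsilon)\,\varepsilon\,d\theta\;=\;2(r+\varepsilon)\varepsilon\,\theta_0\;=\;O(\varepsilon^{3/2}).
\]
Dividing by $\operatorname{area}(A)=\Theta(\varepsilon)$ completes the proof. The main obstacle is really only the bookkeeping in the Taylor expansion of $s_{-}(\theta)$; the geometric content is that tangency squeezes $A\cap D$ into an angular sector of width $\Theta(\sqrt{\varepsilon})$, which is exactly what yields the $\sqrt{\varepsilon}$ savings over $\operatorname{area}(A)$.
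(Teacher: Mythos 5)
Your proof is correct and rests on the same essential idea as the paper's: confine $A\cap D$ to a narrow angular sector of the annulus, whose share of $\mathrm{area}(A)$ is proportional to its angle. The only difference is that the paper simply observes that this sector angle tends to $0$ with $\varepsilon$, whereas you quantify it via the tangency (Taylor) expansion to get angular width $O(\sqrt{\varepsilon})$ and hence the stronger conclusion $\mathrm{area}(A\cap D)/\mathrm{area}(A)=O(\sqrt{\varepsilon})$.
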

\begin{figure}
	\centering
	\includegraphics[width=.43\linewidth]{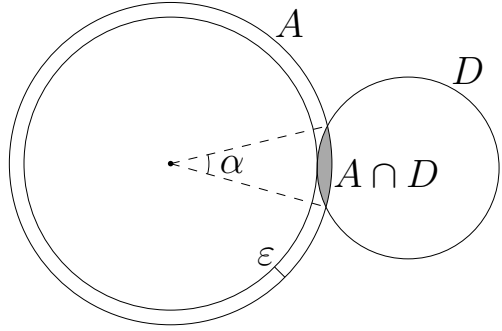}
	\caption{Setting in Lemma~\ref{smvd:lem:annulusintersection}.}
	\label{smvd:fig:annulusintersection}
\end{figure}

\begin{proof}
	Consider the smallest circular sector $S$ of $A$ that contains the asymmetric lens $A\cap D$ (the sector determined by angle $\alpha$ in Figure~\ref{smvd:fig:annulusintersection}). Since $A\cap D$ is contained in $S$, to prove the lemma it suffices to show that
	$\lim\limits_{\varepsilon\rightarrow 0}\frac{area(S)}{area(A)}=0$. Note that $\frac{area(S)}{area(A)}$ is precisely $\frac{\alpha}{2\pi}$, and it is clear that $\lim\limits_{\varepsilon\rightarrow 0}\frac{\alpha}{2\pi}=0$. 
\end{proof}

\begin{theorem}\label{smvd:thm:lower}
	A stable-matching Voronoi diagram for $n$ point sites
	has $\Omega(n^2)$ faces, edges, and vertices in the worst case, even when all the regions are restricted to have the same appetite.
\end{theorem}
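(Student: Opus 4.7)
The plan is to adapt the construction of Lemma~\ref{smvd:lem:lowerPre} to the equal-appetite case by replacing the thin linear strips with thin circular (annular) strips, using Lemma~\ref{smvd:lem:annulusintersection} to control the area that each $Y$-disk removes from each strip. As before, we use two groups of sites: a set $X$ whose stable cells are thin strips, and a set $Y$ whose disks each cut across many strips simultaneously.

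The construction would have three ingredients. First, I build a ``central pseudo-disk'' by placing a dense cluster of $\Theta(m)$ appetite-one filler sites near a common center $O$, so that the union of their stable cells approximates a disk of radius $R_0$ around $O$; this creates a well-defined circle of radius $R_0$ against which thin annular strips can be pressed from outside. Second, I place the $m$ sites of $X$ just outside this pseudo-disk on a short arc of a circle centered at $O$, so that each $X$-site's perpendicular bisectors with its $X$-neighbors all pass through $O$, producing an angular wedge from $O$ as its Voronoi cell among $X$. Combined with the pseudo-disk bounding this wedge from inside and the appetite-one constraint bounding it from outside, each $X$-site's stable cell becomes a thin annular strip hugging the pseudo-disk's boundary. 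Third, I place the $m$ sites of $Y$ outside the pseudo-disk at $m$ distinct angular positions, each with its appetite-one disk of radius $1/\sqrt{\pi}$ positioned to be tangent (or very nearly tangent) to the pseudo-disk's boundary and thus poking radially inward through the thin circular band of $X$-strips.

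This setup matches exactly the scenario of Lemma~\ref{smvd:lem:annulusintersection}: each $Y$-disk $D$ is centered outside the outer circle of an $X$-strip $A$ (an annulus of small width $\varepsilon$) and is tangent to the inner circle of $A$. The lemma ensures that, as $\varepsilon \to 0$, each $Y$-disk removes only a vanishing fraction of each $X$-strip's area, so by making the strips thin enough the combined area removed from each strip by all $m$ $Y$-disks is much less than $1$. Each $X$-site can thus still meet its unit appetite while retaining its annular-strip geometry, and each $Y$-disk cuts each $X$-strip transversely into two connected pieces. With $m$ such $Y$-disks each cutting across all $m$ $X$-strips, each strip is partitioned into $m+1$ connected components, giving $m(m+1) = \Theta(m^2)$ faces; since the total number of sites is $\Theta(m) = \Theta(n)$, this yields the desired $\Omega(n^2)$ lower bound on faces, and analogous bounds on edges and vertices follow because each additional face introduces at least one new incident edge and vertex. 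The main obstacle will be the careful parameter balancing: the radius $R_0$, the angular span of the $X$-arc, and the radial and angular positions of the $Y$-sites must be chosen so that each $X$-site's stable cell is really a thin annular wedge (rather than degenerating into a disk), Lemma~\ref{smvd:lem:annulusintersection} applies with a sufficiently small per-disk area loss per strip, and every $Y$-disk genuinely cuts through all $m$ $X$-strips.
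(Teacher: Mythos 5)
Your high-level plan --- unit-appetite sites whose cells are thin circular strips, cut transversely by the unit-area disks of a second group, with Lemma~\ref{smvd:lem:annulusintersection} ensuring each cut costs a strip only a vanishing fraction of its area --- is exactly the idea of the paper's proof. But the step where the strips are actually manufactured has a genuine gap. For every $Y$-disk to cut every $X$-strip, the $m$ strips must be \emph{nested} (radially stacked inside one thin annulus), so that all of them pass beneath each $Y$-disk; in particular each strip must have arc length $\Omega(m)$, since the $m$ unit-area $Y$-cells are disjoint, and hence must wrap far beyond its own Voronoi wedge among $X$. The mechanism you describe --- each $X$-cell bounded inside by the pseudo-disk, outside by its appetite, and sideways by the bisectors through $O$ --- produces instead $m$ \emph{angularly disjoint} annular sectors, one per wedge, all hugging radius $R_0$. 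In that side-by-side configuration a single $Y$-disk can cut only $O(1)$ strips, for any choice of parameters: a disk of radius $1/\sqrt{\pi}$ tangent to the circle of radius $R_0$ subtends an angle $O(1/R_0)$ at $O$, whereas a strip of area roughly $1$ whose radial width is small enough to be crossed by that disk (width less than $2/\sqrt{\pi}$) must subtend an angle $\Omega(1/R_0)$; since the strips are angularly disjoint, only a constant number of them can even meet the disk. So the construction as stated yields only $O(n)$ extra faces, not $\Omega(n^2)$.

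What is missing is the cascading wrap-around argument by which the paper gets nested strips out of equal appetites: the $X$-sites sit on a tiny circle of radius $\varepsilon_1$ with one angular gap widened by $\varepsilon_2$, so that $s_1$ and $s_m$ sate their appetites first; then $s_2$ and $s_{m-1}$ finish next by wrapping around the frozen cells, and so on, each later cell ending in a thin circular strip that wraps around the union of the earlier ones. Taking $\varepsilon_2\to 0$ makes the strips arbitrarily thin, taking $\varepsilon_1\to 0$ makes their bounding arcs nearly concentric so they wrap all the way around, and only then does Lemma~\ref{smvd:lem:annulusintersection} enter, to show the $Y$-cells remove so little of the annulus that the wrap-around survives. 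To repair your proposal you would need an argument of this kind: an explicit order in which the $X$-sites finish and a proof that each subsequent cell wraps around the previous ones. The cluster of filler sites does not provide this, and it introduces its own unverified claims (the outer boundary of a union of $\Theta(m)$ unit stable cells is not a circle, so the tangency hypothesis of Lemma~\ref{smvd:lem:annulusintersection} does not literally apply to it); in the paper's construction no auxiliary sites are needed at all.
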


\begin{proof}
Assume $n$ is a multiple of $4$. We divide the sites into two sets, $X$ and $Y$, of size $m=n/2$ each.
	
Let $\varepsilon_1,\varepsilon_2$ be two parameters with positive values that may depend on $m$. It will be useful to think of them as very small, since we will argue that the construction works for sufficiently small values of $\varepsilon_1$ and $\varepsilon_2$. Specific values for $\varepsilon_1$ and $\varepsilon_2$ are hard to express analytically but unimportant as long as they are small enough.
	
The $m$ sites in $X$, $s_1,\ldots,s_m$, lie, in this order, along a circle of radius $\varepsilon_1$.
They are almost evenly spaced around the circle, except that the angle between $s_1$ and $s_m$ is slightly larger than the others: the angle between $s_1$ and $s_m$ is increased by $\varepsilon_2$, and the angles between the rest of pairs of consecutive sites are reduced so that they are all equal (Figure~\ref{smvd:fig:lowerHalf}, Left).

The \textit{standard} Voronoi diagram of the sites in $X$ consists of infinite angular regions, with those of $s_1$ and $s_m$ slightly wider than those of the remaining sites.
Consider the circle-growing method applied to the sites of $X$ alone.
Initially, the regions are constrained to grow in the corresponding angular region in the standard Voronoi region. Since $s_1$ and $s_m$ have wider angles, they fill their appetite slightly before the rest, which all grow at the same rate. How much earlier depends on $\varepsilon_2$. Once $s_1$ and $s_m$ fulfill their appetite and stop growing, their angular regions become ``available'' to the other sites. The circles of $s_2$ and $s_{m-1}$ are the closest to the angular regions of $s_1$ and $s_m$, respectively, and thus start covering it to fulfill their appetite. In turn, this results in $s_2$ and $s_{m-1}$ fulfilling their appetite and freezing their circles earlier than the remaining sites. Their respective neighbors, $s_3$ and $s_{m-2}$, have the next closest circles to the angular regions of the sites that already stopped growing, and thus they use it to fill their appetite. This creates a cascading effect starting with $s_1$ and $s_m$ where the region of each site consists of a wedge that ends in a thin circular strip that ``wraps around'' the regions of the prior sites (Figure~\ref{smvd:fig:lowerHalf}, Right).

\begin{figure}
	\centering
	\includegraphics[width=.99\linewidth]{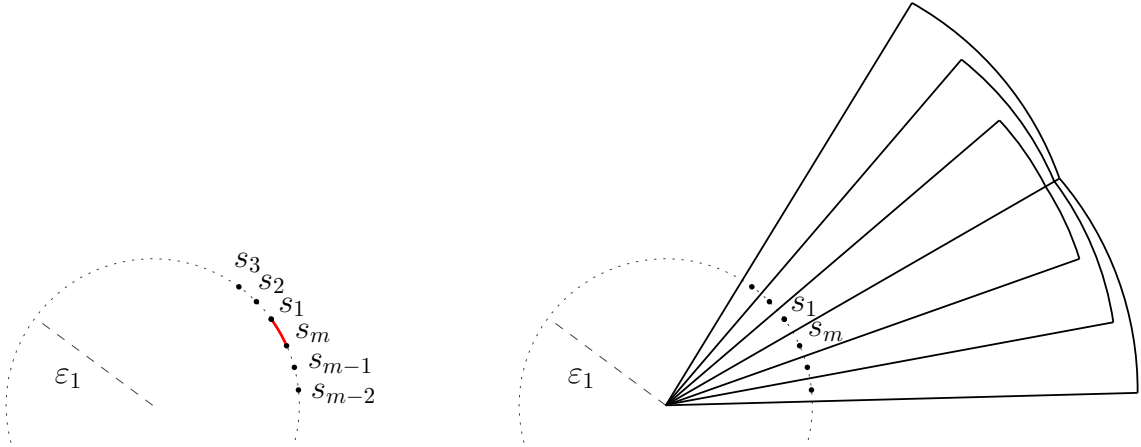}
	\caption{Left: Configuration of the sites in $X$ in the proof of Theorem~\ref{smvd:thm:lower}. The arc between $s_1$ and $s_m$, shown in red, is slightly wider than the rest. Sites $s_2,s_3,s_{m-1},s_{m-2}$ are shown. The remaining sites around the circle are omitted for clarity. Right: the stable cells of the aforementioned sites. The figures are not to scale, as in the actual construction $\varepsilon_1$ and $\varepsilon_2$ need to be much smaller, but even here we can appreciate the ``wrapping around'' effect.}
	\label{smvd:fig:lowerHalf}
\end{figure}

As $\varepsilon_2$ approaches zero, the unfulfilled appetite of the sites other than $s_1$ and $s_m$ at the time $s_1$ and $s_m$ fill their appetite becomes arbitrarily small. 
This results in arbitrarily thin circular strips. Note, however, that the circular arcs bounding each strip are not exactly concentric, as each one is centered at a different site. Thus, depending on $\varepsilon_1$, the strips might not wrap around all the way to the regions of $s_1$ and $s_m$. However, as $\varepsilon_1$ approaches zero, the sites get closer to each other, and thus their circular arcs become arbitrarily close to being concentric. It follows that if $\varepsilon_1$ is small enough (relative to $\varepsilon_2$), the circular strip of each site will wrap around all the way to the angular region of $s_1$ and $s_m$.
This concludes the first half of the construction, where we have a linear number of arbitrarily
thin, long strips. 

Let $A$ be the annulus of minimum width centered at the center of the circle of the sites in $X$ and containing all the circular strips.
The sites in $Y$ lie evenly spaced along a circle concentric with $A$.
The radius of the circle is such that the regions of the sites in $Y$ are tangent to the inner circle of $A$, as in Figure~\ref{smvd:fig:lowerHalf2}.  

\begin{figure}
	\centering
	\includegraphics[width=.45\linewidth]{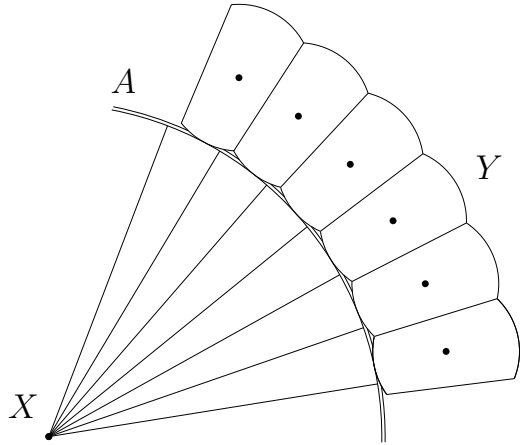}
	\caption{Configuration in the proof of Theorem~\ref{smvd:thm:lower}. For clarity, only the regions of $6$ sites in $X$ and 6 sites in $Y$ are shown. The strips inside $A$ are also omitted. The figure is not to scale, as in the actual construction the annulus $A$ needs to be much thinner.}
	\label{smvd:fig:lowerHalf2}
\end{figure}

Since the wedges of the sites in $X$ are very
thin, the sites in $Y$ are closer to $A$ than the sites in $X$. Thus, the presence of $X$ does not affect the stable cells of the sites in $Y$. Each stable cell of a site in $Y$ is the intersection of a disk and a wedge of angle $2\pi/m$, with a total area of $1$ (Figure~\ref{smvd:fig:lowerHalf2}). The important aspect is how the presence of the stable cells of the sites in $Y$ affects the stable cells of the sites in $X$.
Some of the area of $A$ that would be assigned to sites in $X$ is now assigned to sites in $Y$. Thus, the sites in $X$ need to grow further to make up for the lost appetite.
However, recall that $A$ can be arbitrarily thin. Hence, by Lemma~\ref{smvd:lem:annulusintersection}, the fraction of the area of $A$ ``eaten'' by sites in $Y$ can be arbitrarily close to zero. As this fraction tends to zero, the distance that the sites in $X$ need to reach further to fulfill the lost appetite also tends to zero. Thus, if $A$ is sufficiently thin, the distance that the regions of $s_1$ and $s_m$ reach further is so small that the strips of $s_2$ and $s_{m-1}$ still wrap around the regions of $s_1$ and $s_m$, respectively, to fulfill their appetite. Similarly, the strips of $s_3$ and $s_{m-2}$ still wrap around the regions of the prior sites, and so on. Thus, if $A$ is sufficiently thin, the strips of all the sites in $X$ still wrap around to the regions of $s_1$ or $s_m$.

In this setting, half of the strips are at least as long as a quarter of the circle, and each of those gets broken into $\Theta(m)$ faces by the regions of the sites in $Y$.
Therefore, the circular strips are collectively broken into 
a total of $\Theta(m^2)=\Theta(n^2)$ faces.
\end{proof}

\section{Algorithm}\label{smvd:sec:algo}

In general, a stable-matching Voronoi diagram cannot be computed in an algebraic model of computation, as it requires computing transcendental functions such as trigonometric functions.

\begin{observation}\label{smvd:obs:transcentental}
	For infinitely-many sets of sites in general position and
	with algebraic coordinates, the radii of some of the sites' bounding
	disks cannot be computed exactly in an algebraic model of computation.
\end{observation}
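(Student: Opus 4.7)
The plan is to exhibit an explicit infinite family of site configurations with algebraic coordinates in which at least one bounding disk has radius $1/\sqrt{\pi}$, and then invoke the standard fact that an algebraic model of computation on algebraic inputs produces only outputs algebraic over $\mathbb{Q}$.

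First I would fix any $n \ge 1$ and place $n$ sites with integer coordinates separated pairwise by distance at least $2/\sqrt{\pi}$---for instance, along the $x$-axis at $(2i,0)$ for $1 \le i \le n$---each with appetite $A(s_i) = 1$. Because the open disks of radius $1/\sqrt{\pi}$ about these sites are pairwise disjoint, the circle-growing process of Hoffman et al.~\cite{hoffman2006} runs with no interaction: every circle halts at radius $1/\sqrt{\pi}$, having swept area exactly $\pi \cdot (1/\sqrt{\pi})^2 = 1$. By the uniqueness of the stable-matching Voronoi diagram, every $B_{s_i}$ in this configuration has radius exactly $1/\sqrt{\pi}$. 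The configuration is in general position (no two bounding disks meet, so no tangencies occur, and no four sites are cocircular), and there are infinitely many such configurations, parameterized by $n$ and by integer translations.

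To conclude, I would apply Lindemann's theorem: $\pi$ is transcendental over $\mathbb{Q}$, and therefore so is $1/\sqrt{\pi}$. Since any algebraic model of computation on algebraic inputs can produce only values lying in an algebraic extension of $\mathbb{Q}$, no such computation can output $1/\sqrt{\pi}$ exactly, proving the claim.

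The argument is short and I do not anticipate a substantial obstacle. The one step that deserves care is justifying that the isolated-disks candidate really is the stable-matching Voronoi diagram, so that the bounding-disk radii are exactly $1/\sqrt{\pi}$ rather than some smaller algebraic value forced by the global stability condition. I would handle this by checking directly that the candidate admits no blocking pair---outside each closed disk every point is unmatched but farther from each $s_i$ than some matched point on $\partial B_{s_i}$---and then appealing to uniqueness, or equivalently by noting that Lemma~\ref{smvd:lem:boundary} forbids any alternative shape for the stable cell of an isolated site.
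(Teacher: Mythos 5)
Your construction is sound as far as it goes: for sites of appetite $1$ placed so that the disks of radius $1/\sqrt{\pi}$ around them are pairwise disjoint, the union of those disks admits no blocking pair, so by uniqueness it is the stable-matching Voronoi diagram and every bounding-disk radius equals $1/\sqrt{\pi}$, which is transcendental by Lindemann. This is a genuinely different route from the paper's: the paper instead takes two sites of appetite $1$ close enough that their cells share a chord, derives the area equation $r^2\bigl(\pi-\cos^{-1}(b/r)\bigr)+b\sqrt{r^2-b^2}=1$ for the common radius $r$ as a function of the half-separation $b$, and argues impossibility from the presence of $\cos^{-1}$. If the observation is read as ``the radius is a transcendental real number, hence unreachable from algebraic inputs by algebraic operations and algebraic constants,'' your argument is complete, and in one respect tighter than the paper's sketch, which never actually proves that its $r$ is transcendental (that requires the Lindemann--Weierstrass corollary that the cosine of a nonzero algebraic number is transcendental).

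There is, however, a substantive caveat. Every instance in your infinite family has the \emph{same} radius $1/\sqrt{\pi}$, so the only obstruction you exhibit is the unavailability of the single constant $\pi$. Standard algebraic models (algebraic computation trees, the real-RAM) permit arbitrary real constants, and $\pi$ is in any case intrinsic to a problem whose appetites are areas; once it is granted, all of your instances are solved exactly by one square root, $r=\sqrt{A/\pi}$. The paper's two-site family is immune to this objection: there the radius varies with $b$ through an equation mixing $\cos^{-1}$ with algebraic terms, so the difficulty is a non-algebraic \emph{functional} dependence of the output on the input rather than one missing constant---which is precisely what motivates the geometric primitive of Definition~\ref{smvd:def:prim} and its numerical approximation. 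So your proof establishes the literal statement only under the restrictive reading of ``algebraic model'' that you assume; to capture the paper's intent within your approach you would need instances whose radii are transcendental even over $\mathbb{Q}(\pi)$ (equivalently, not an algebraic function of the site coordinates), and that leads back to interacting cells as in the paper's construction.
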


\begin{proof}
	Consider a set with only two sites, $s_1$ and $s_2$, with appetite~1 and aligned horizontally at distance~$2b$
	from each other. By symmetry, the two bounding disks will have the same
	radius~$r$. Assume that $b<\sqrt{1/\pi}$, so that the stable cells of $s_1$ and $s_2$ share a vertical edge. Consider the rectangular triangle with one vertex at $s_1$, another at the midpoint between $s_1$ and $s_2$, and the last at the top of the shared vertical edge (see Figure~\ref{smvd:fig:trigo}). Let $\alpha$ be the angle of the triangle at the vertex at $s_1$, and $a$ the length of the opposite side. The problem is, then, to determine the value of~$r$ which
	satisfies 
	\[
	\pi r^2 \left(1-\frac{2\alpha}{2\pi}\right) + 2\cdot \frac{ab}{2} = 1.
	\]
	Using the equalities~$\sin \alpha = a/r$ and~$\cos \alpha = b/r$, we obtain
	\[
	r^2 (\pi - \cos^{-1} \frac{b}{r}) + br \sin (\cos^{-1} \frac{b}{r}) = 1,
	\]
	that is,~$r$ is the solution of the equation
	\[
	r^2 (\pi - \cos^{-1} \frac{b}{r}) + b \sqrt{r^2-b^2} = 1,
	\]
	which cannot be solved in an algebraic model of computation because $\cos^{-1}$ is a transcendental (i.e., non-algebraic) function.
	Such a construction appears in infinitely-many sets of points, implying the claim.
\end{proof}
\begin{figure}[htb]
	\centering
	\includegraphics[width=.475\linewidth]{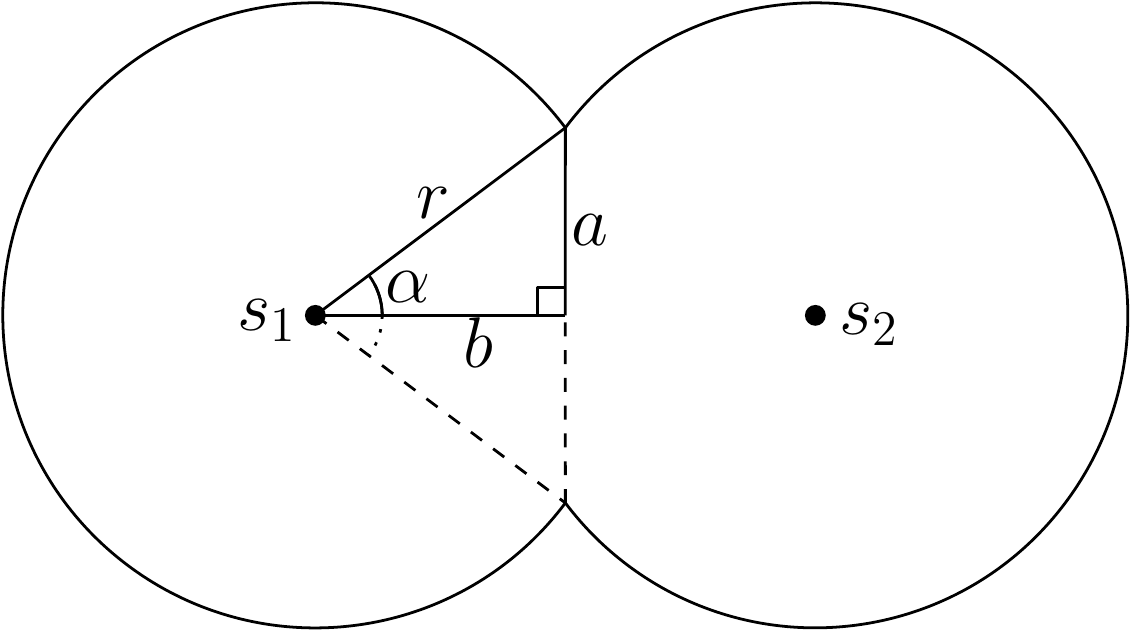}
	\caption{Setting in the proof of Observation~\ref{smvd:obs:transcentental}.}
	\label{smvd:fig:trigo}
\end{figure}

Thus, in order to describe an exact and discrete algorithm, we rely on a geometric primitive. This primitive, which we define, encapsulates the problematic computations, and can be approximated numerically to arbitrary precision. In Section~\ref{smvd:sec:convex}, we show how to compute the geometric primitive exactly for polygonal convex distance functions.

\paragraph{Preliminaries.}
Let us introduce the notation used in this section.
The algorithm deals with multiple diagrams. 
In this context, a \emph{diagram} is a subdivision of $\mathbb{R}^2$ into \emph{regions}. Each region is a set of one or more faces bounded by straight and circular edges. The regions do not overlap except along boundaries (boundary points are included in more than one region).
Each region is assigned to a unique site, but not all sites necessarily have a region. There is also an ``unassigned'' region consisting of the remaining faces. The \emph{domain} of a diagram is the subset of points of $\mathbb{R}^2$ in any of its assigned regions.
If $D$ is a diagram, $D(s)$ denotes the region of site $s$, which might be empty. If $D$ and $D'$ are diagrams and the domain of $D$ is a subset of the domain of $D'$, we say that $D$ and $D'$ are \emph{coherent} if, for every site $s$, $D(s)\subseteq D'(s)$. The data structures used to represent diagrams are discussed later.

Recall that we are given a set $S$ of $n$ sites, each with its own appetite $A(s)$.
The goal is to compute the (unique) stable-matching diagram of $S$ for those appetites, denoted by $D^*$.
For a site $s$, let $B^*(s)$ be the bounding disk of $D^*(s)$, and $r^*(s)$ the radius of $B^*(s)$ (the $^*$ superscript is used for notation relating to the sought solution).
Recall that the union of all the bounding disks $B^*(s)$ equals the domain of $D^*$ (Lemma~\ref{smvd:lem:bounding}), and that the bounding disks may not be disjoint.

We call an ordering $s_1,\ldots,s_n$ of the sites of $S$ \textit{proper} if the sites are sorted by increasing radius of their bounding disks, breaking ties arbitrarily. That is, $i<j$ implies $r^*(s_i)\leq r^*(s_j)$. Such an ordering is initially unknown, but it is discovered in the course of the algorithm.
Given a proper ordering, for $i=1,\ldots,n$,
let $B_{1..i}=\{B^*(s_1), \ldots, B^*(s_i)\}$ denote the set of bounding disks of the first $i$ sites, and $\cup B_{1..i}=B^*(s_1) \cup \cdots \cup B^*(s_i)$ the union of those disks. Let $\hat{B}(s_i)=B^*(s_i)\setminus \cup B_{1..i-1}$ be the part of $B^*(s_i)$ that is not inside a prior bounding disk in the ordering.
Let $S_{i..n}=\{s_i,\ldots,s_n\}$, and $V_{i..n}$ be the \emph{standard} Voronoi diagram of $S_{i..n}$.
Finally, let $\hat{V}_{i..n}$ be $V_{i..n}$ restricted to the region $\hat{B}(s_i)$.
This notation is illustrated in Figure~\ref{smvd:fig:notation}.
\begin{figure}[h!]
	\centering
	\includegraphics[width=.55\linewidth]{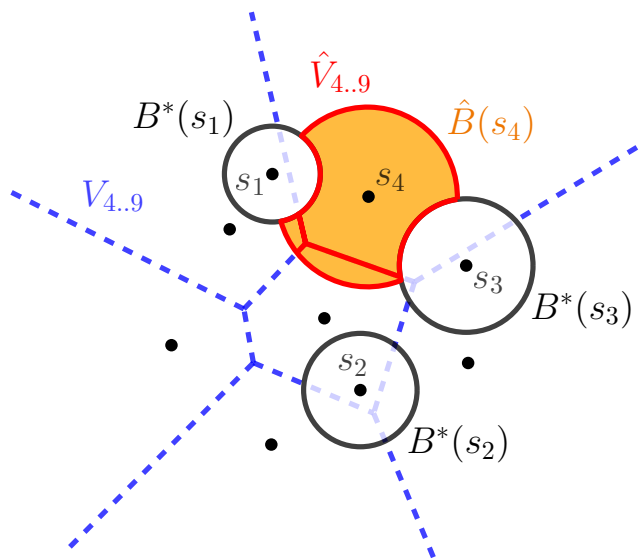}
	\caption{Notation used in the algorithm. The disks in $B_{1..3}$ are shown in black, the edges of $V_{4..9}$ are shown dashed in blue, and the interior of $\hat{B}(s_4)$ is shown in orange. The edges of $\hat{V}_{4..9}$ are overlaid on top of everything with red lines. Note that $\hat{V}_{4..9}$ is a diagram with three assigned regions, the largest assigned to $s_4$ and the others to unlabeled sites.}
	\label{smvd:fig:notation}
\end{figure}

\paragraph{Incremental construction.}
The algorithm constructs a sequence of diagrams, $D_0,\dotsc,D_n$. 
The starting diagram, $D_0,$ has an empty domain. We expand it incrementally until $D_n=D^*$. The diagrams are constructed in a greedy fashion: every $D_i$ is coherent with $D^*$. Thus, once a subregion of the plane is assigned in $D_i$ to some site, that assignment is definitive and remains part of $D_{i+1},\ldots, D_n$. 

We construct $D^*$ one bounding disk at a time, ordered according to a proper ordering  $s_1,\ldots,s_n$  (we address how to find this ordering later): the domain of each $D_i$ is $\cup B_{1..i}$.\footnote{An intuitive alternative approach is to construct $D^*$ one stable cell at a time. This is also possible, but the advantage of constructing it by bounding disks is that the topology of the intermediate diagrams $D_i$ is simpler, as it can be described as a union of disks, whereas stable cells have complex (and even disjoint) shapes. The simpler topology makes the geometric operations we do on these diagrams easier, in particular the geometric primitive from Definition~\ref{smvd:def:prim}.} Thus, $D_i$ can be constructed from $D_{i-1}$ by assigning $\hat{B}(s_i)$ (the $ \hat{\mbox{ } } $ mark is used for notation relating to the region added to $D_i$ at iteration $i$).

Since the boundaries of the bounding disks do not necessarily align with the edges of $D^*$, $D_i$ may contain a face of $D^*$ only partially. This can be seen in Figure~\ref{smvd:fig:newalgo}, which illustrates the first few steps of the incremental construction.
Further figures can be seen in Appendix~\ref{smvd:app:steps}.

At iteration $i$, we assign $\hat{B}(s_i)$ as follows. From $\hat{B}(s_i)$ and the standard Voronoi of the remaining sites, $V_{i..n}$, we compute the diagram $\hat{V}_{i..n}$. We then construct $D_i$ as the combination of $D_{i-1}$ and $\hat{V}_{i..n}$. That is, for each site $s$, $D_i(s)=D_{i-1}(s)\cup \hat{V}_{i..n}(s)$.
We first show that this assignment is correct.

\begin{figure}[t!] \centering
	\begin{minipage}[b]{0.48\linewidth}
		\fbox{\includegraphics[trim={0 4.7cm 0 2cm},clip,width=0.973\linewidth]{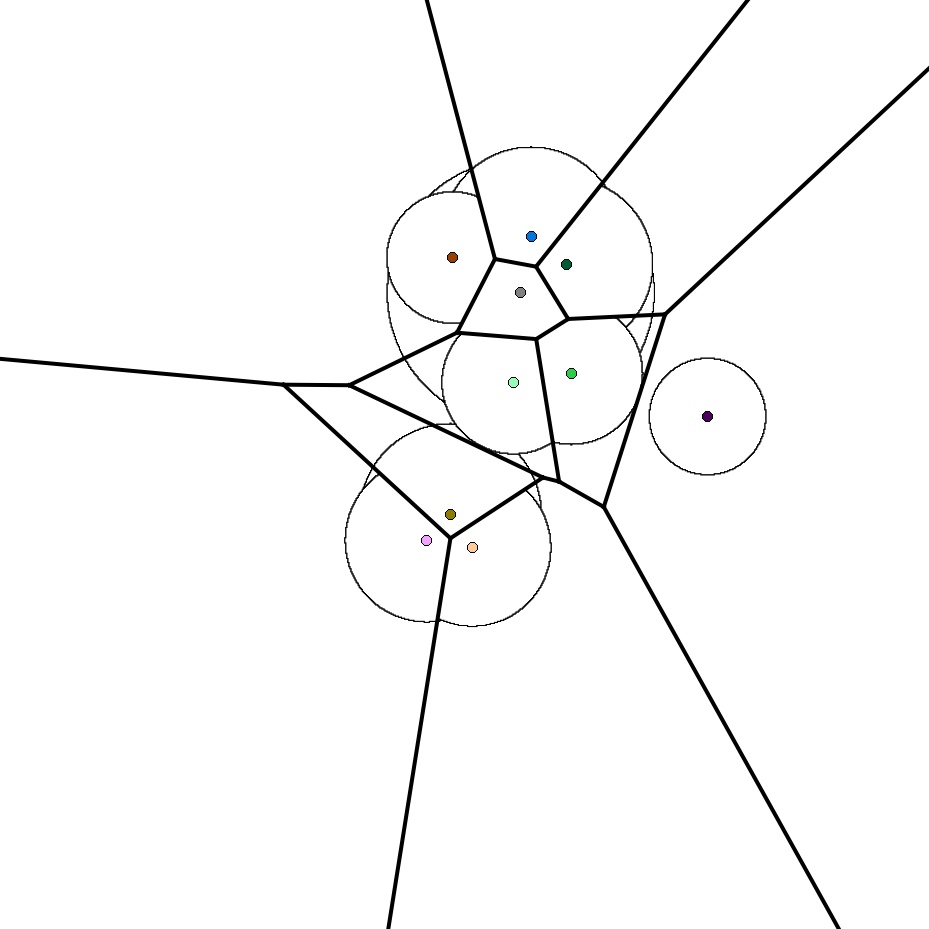}}\\[2pt]
		\fbox{\includegraphics[trim={0 4.7cm 0 2cm},clip,width=0.973\linewidth]{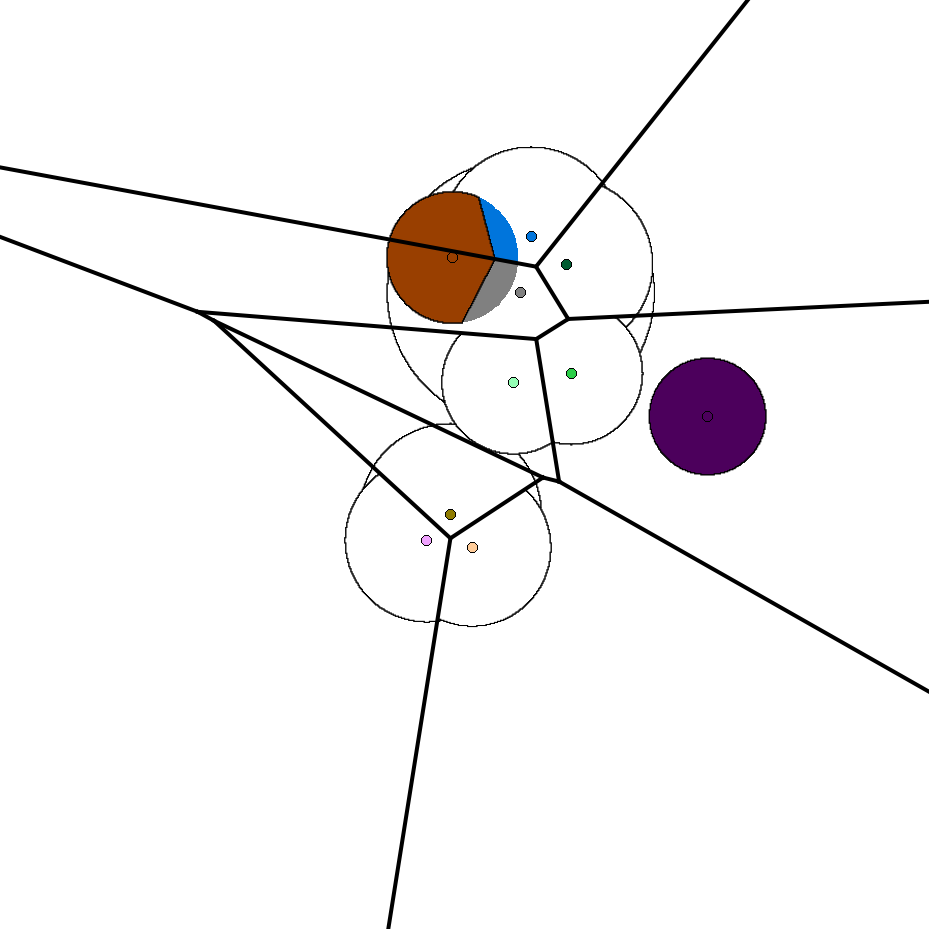}}\\[2pt]
		\fbox{\includegraphics[trim={0 4.7cm 0 2cm},clip,width=0.973\linewidth]{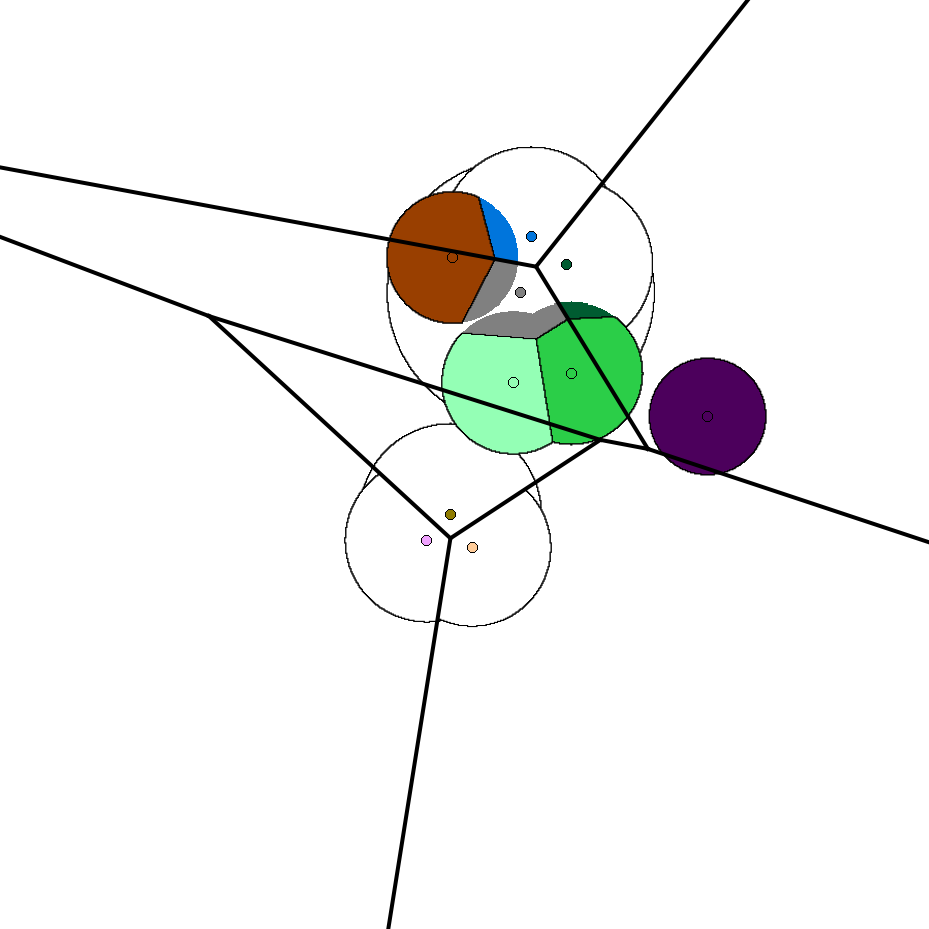}}
	\end{minipage} \begin{minipage}[b]{0.48\linewidth}
		\fbox{\includegraphics[trim={0 4.7cm 0 2cm},clip,width=0.973\linewidth]{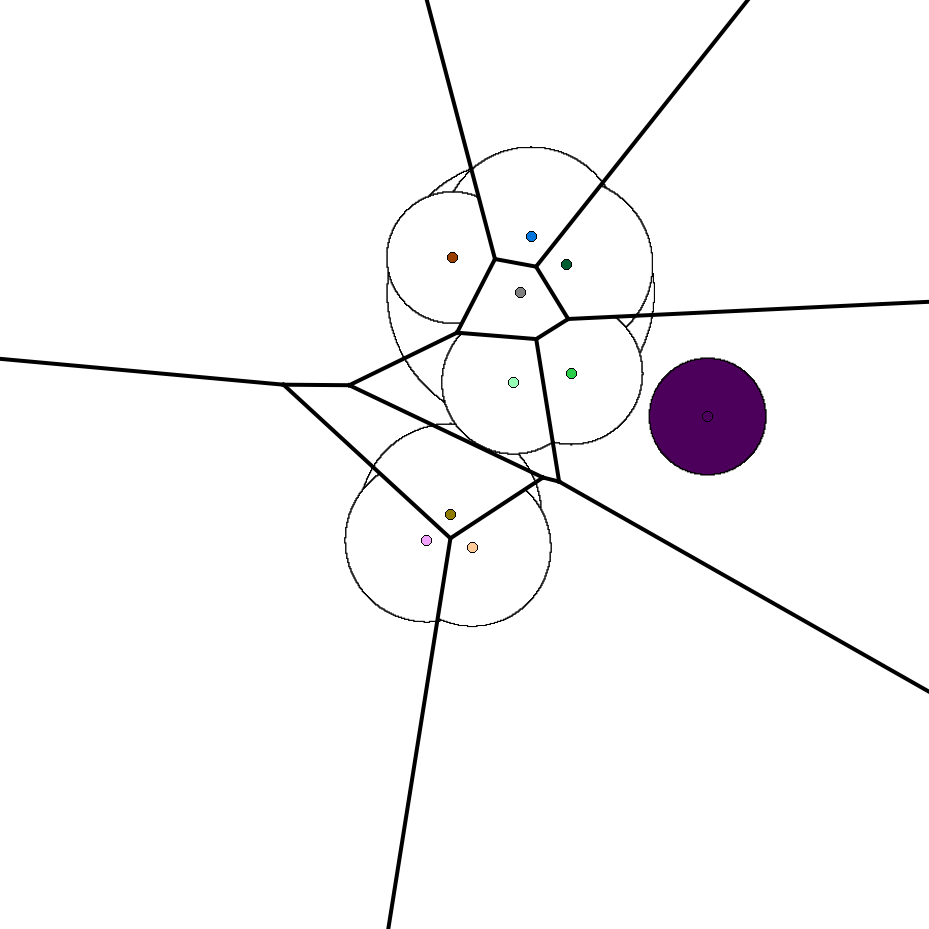}}\\[2pt]
		\fbox{\includegraphics[trim={0 4.7cm 0 2cm},clip,width=0.973\linewidth]{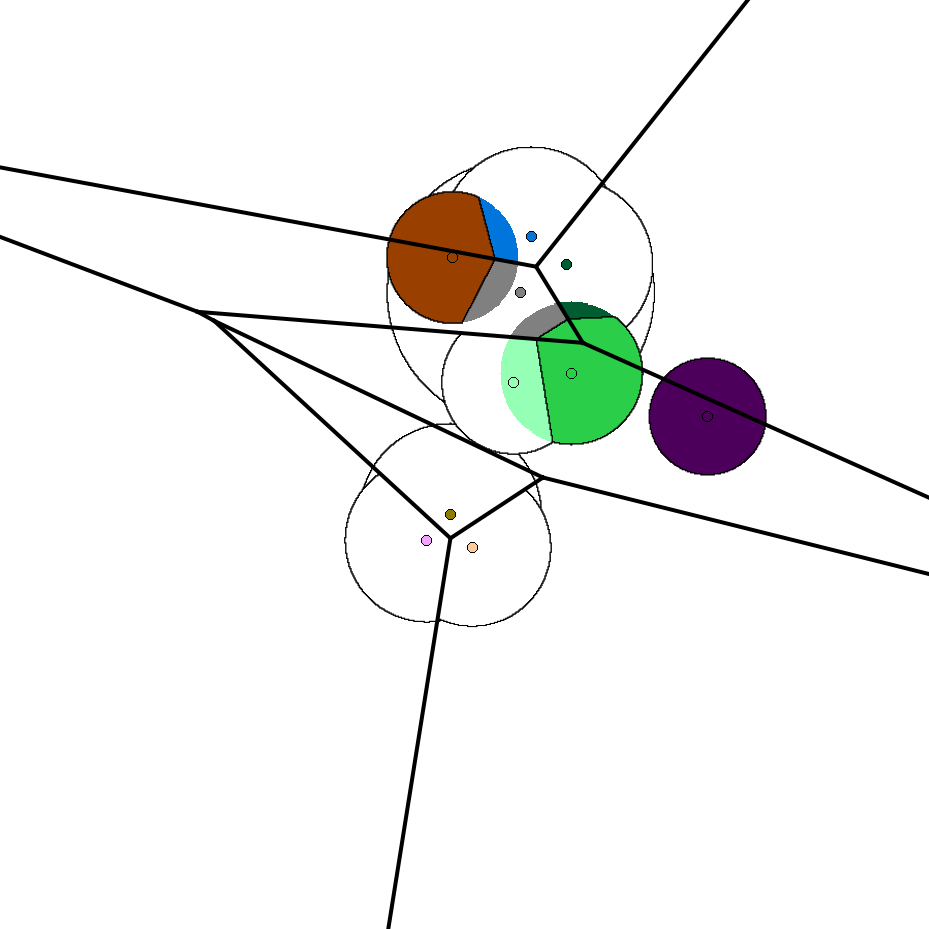}}\\[2pt]
		\fbox{\includegraphics[trim={0 4.7cm 0 2cm},clip,width=0.973\linewidth]{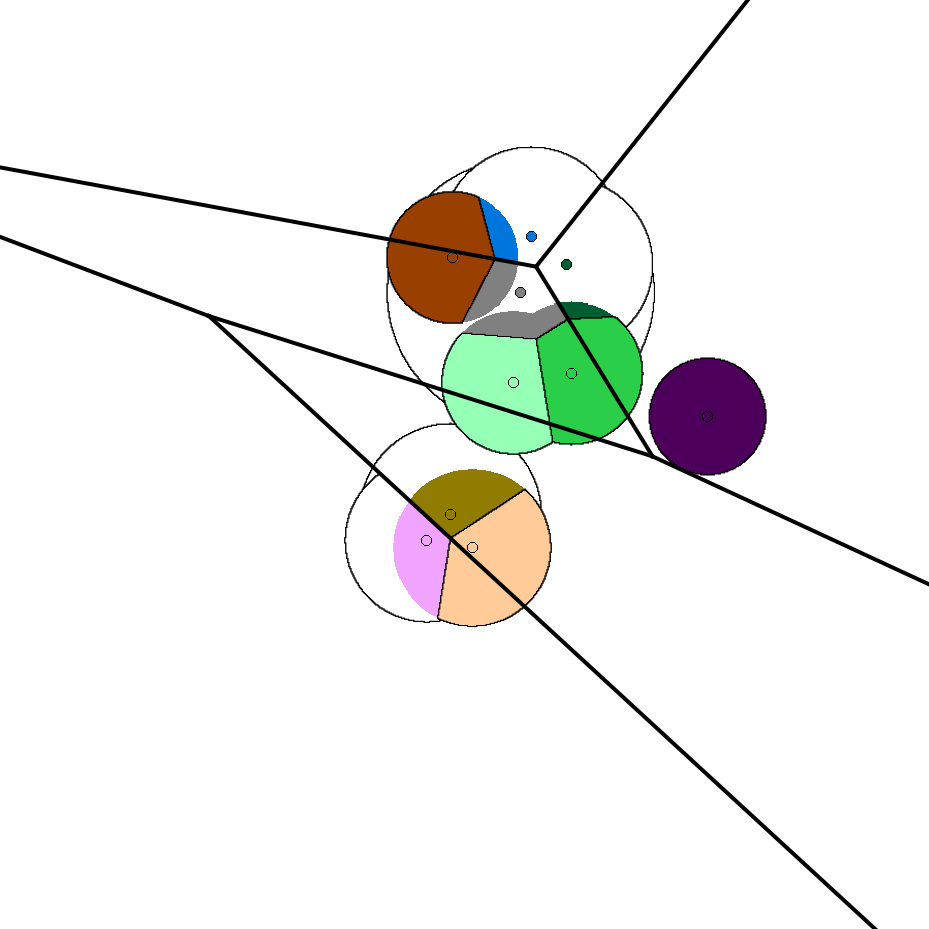}}
	\end{minipage} 
	\caption{Partial diagrams $D_0,\ldots, D_5$ computed in the first five iterations of the algorithm for a set of sites with equal appetites. At each iteration $i$, the edges of the standard Voronoi diagram $V_{i..n}$ of $S_{i..n}$ are overlaid in thick lines. The edges of the stable-matching Voronoi diagram (unknown to the algorithm) are overlaid in thin lines.} \label{smvd:fig:newalgo} 
\end{figure}

\begin{lemma}\label{smvd:lem:corr}
For any $i$ with $0\leq i\leq n$, $D_i$ is coherent with $D^*$.
\end{lemma}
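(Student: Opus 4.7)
The plan is to proceed by induction on $i$, with the base case $i=0$ handled immediately: $D_0$ has empty domain, so the coherence condition $D_0(s) \subseteq D^*(s)$ holds vacuously for every site $s$. For the inductive step, assume $D_{i-1}$ is coherent with $D^*$. Since $D_i(s) = D_{i-1}(s) \cup \hat{V}_{i..n}(s)$ and $D_{i-1}(s) \subseteq D^*(s)$ by the inductive hypothesis, it suffices to establish $\hat{V}_{i..n}(s) \subseteq D^*(s)$ for every site $s$. Equivalently, I must show that for every point $p \in \hat{B}(s_i)$, the site that $\hat{V}_{i..n}$ assigns $p$ to (namely its nearest site among $S_{i..n}$) coincides with the site to which $p$ is matched in $D^*$.

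First, I would pin down which sites in $D^*$ can possibly own $p$. Since $p \in \hat{B}(s_i) = B^*(s_i) \setminus \cup B_{1..i-1}$, the point $p$ lies in no earlier bounding disk, so by Lemma~\ref{smvd:lem:bounding} and the definition of a bounding disk, $p \notin D^*(s_l)$ for any $l < i$; yet $p$ is inside a bounding disk, hence matched to some $s_j$ with $j \geq i$. Let $s_k$ denote the closest site in $S_{i..n}$ to $p$, which is what $\hat{V}_{i..n}$ assigns $p$ to. Assuming general position, my goal is to show $s_k = s_j$.

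Suppose for contradiction that $s_k \ne s_j$, so $d(p,s_k) < d(p,s_j)$. I would build a blocking pair $(s_k, p)$ in $D^*$, contradicting its stability. The first two conditions in the definition of a blocking pair follow immediately: $p \notin D^*(s_k)$ and $d(p,s_k) < d(p,s_j)$ where $s_j$ is the site $p$ is matched to. The nontrivial condition is $d(p,s_k) < r^*(s_k)$, i.e., that $s_k$ strictly prefers $p$ over the farthest point in its own cell. To establish this I would chain the inequalities
\[
d(p,s_k) \,\le\, d(p,s_i) \,\le\, r^*(s_i) \,\le\, r^*(s_k),
\]
where the first uses that $s_k$ is the closest site in $S_{i..n}$ (with $s_i \in S_{i..n}$), the second uses $p \in B^*(s_i)$, and the third uses that $s_1,\dotsc,s_n$ is a proper ordering and $k \ge i$. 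Strictness of at least one step gives the needed strict inequality $d(p,s_k) < r^*(s_k)$, and hence the blocking pair; the boundary equality cases reduce to sets of measure zero on which cells are allowed to share boundary points, so the inclusion $\hat{V}_{i..n}(s) \subseteq D^*(s)$ still holds there.

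The main obstacle I anticipate is handling these degenerate equality cases cleanly. The argument for strictness of the chain above is immediate when $k > i$ and the proper ordering is strict, but when $r^*(s_k) = r^*(s_i)$ and $p$ lies on $\partial B^*(s_i)$, one must invoke the fact that regions are closed sets that can legitimately share boundary points, so the inclusion is preserved on this measure-zero set without producing a blocking pair. Once this is checked, the inductive step is complete, yielding $D_i(s) \subseteq D^*(s)$ for every $s$, and therefore $D_i$ is coherent with $D^*$.
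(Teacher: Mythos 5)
Your proof is correct and follows essentially the same route as the paper: induction on $i$, reducing the step to showing $\hat{V}_{i..n}$ is coherent with $D^*$, and deriving a blocking pair from the chain $d(p,s)\le d(p,s_i)\le r^*(s_i)\le r^*(s)$ together with the fact that $p$ lies outside $\cup B_{1..i-1}$ and has its assigned site as first choice among $S_{i..n}$. The only real difference is cosmetic: the paper restricts attention to points in the interior of $\hat{V}_{i..n}(s)$, which makes the relevant inequalities strict and sidesteps the boundary (measure-zero) cases that you handle informally at the end.
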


\begin{proof}
We use induction on $i$. The claim is trivial for $i=0$, as no site has a region in $D_0$.
We show that if $D_{i-1}$ is coherent with $D^*$ and $D_i$ is constructed as described, $D_i$ is also coherent. In other words, we show that $\hat{V}_{i..n}$ is coherent with $D^*$.

Let $s$ be an arbitrary site in $S_{i..n}$. We need to show that $\hat{V}_{i..n}(s)\subseteq D^*(s)$.
Let $p$ be an arbitrary point in the interior of $\hat{V}_{i..n}(s)$. We show that $p$ is also an interior point of $D^*(s)$.
First, note that $p$ does not belong in the stable cell of any of $s_1,\ldots,s_{i-1}$, because the regions of these sites are fully contained in $\cup B_{1..i-1}$, and $\hat{V}_{i..n}$ is disjoint from $\cup B_{1..i-1}$ except perhaps along boundaries.

By virtue of being in the interior of $V_{i..n}(s)$, $p$ has $s$ as first choice among the sites in $S_{i..n}$. We show that $s$ also prefers $p$ over \textit{some} of its assigned points in $D^*$, and thus they need to be matched or they would be a blocking pair. We consider two cases:
\begin{itemize}
	\item $s=s_i$. In this case, $\hat{V}_{i..n}$ is a subset of $B^*(s)$, so $s$ prefers $p$ over some of its matched points (those at distance $r^*(s)$).
	\item $s\not= s_i$. In this case, note the following three inequalities:
	\textit{(i)} $d(p,s)<d(p,s_i)$ because $p$ is in the interior of $V_{i..n}(s)$;
	\textit{(ii)} $d(p,s_i)< r^*(s_i)$ because $p$ is in the interior of $B^*(s_i)$;
	\textit{(iii)} $r^*(s_i)\leq r^*(s)$ because $s$ appears after $s_i$ in the proper ordering. Chaining all three, we get that $d(p,s)< r^*(s)$, i.e., $p$ is inside the bounding disk of $s$. Thus, $s$ prefers $p$ to some of its matched points (those at distance $r^*(s)$).
\end{itemize}
\end{proof}

\begin{corollary}
The diagrams $D_n$ and $D^*$ are the same.
\end{corollary}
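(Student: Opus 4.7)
The plan is to combine the coherence property from Lemma~\ref{smvd:lem:corr} with a domain/area argument, so that the containments $D_n(s)\subseteq D^*(s)$ are forced to be equalities. The main ingredients needed are already in place: coherence (Lemma~\ref{smvd:lem:corr}), the characterization of the domain of $D^*$ as $\cup B_{1..n}$ (Lemma~\ref{smvd:lem:bounding}), and the fact that every appetite equals the area of the corresponding region of $D^*$.

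First I would verify that $D_n$ and $D^*$ have the same domain. By construction, the domain of $D_i$ grows from $\cup B_{1..i-1}$ to $\cup B_{1..i}$ at iteration $i$ (since $\hat V_{i..n}$ is defined on all of $\hat B(s_i)$), so the domain of $D_n$ is exactly $\cup B_{1..n}$. By Lemma~\ref{smvd:lem:bounding}, this coincides with the union of the non-empty faces of $D^*$, i.e., with the domain of $D^*$. Moreover, every point of the domain of $D_n$ is assigned to some site, because at each iteration $\hat V_{i..n}$ is a standard Voronoi diagram restricted to $\hat B(s_i)$, which partitions $\hat B(s_i)$ among the sites in $S_{i..n}$ without leaving any point unassigned.

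Next I would run the area comparison. By Lemma~\ref{smvd:lem:corr}, $D_n(s)\subseteq D^*(s)$ for every site $s$, so $\mathrm{area}(D_n(s))\le \mathrm{area}(D^*(s))=A(s)$. Summing over $s$ gives
\[
\sum_{s\in S}\mathrm{area}(D_n(s))\;\le\;\sum_{s\in S}A(s)\;=\;\mathrm{area}\bigl(\textrm{domain of }D^*\bigr)\;=\;\mathrm{area}\bigl(\textrm{domain of }D_n\bigr),
\]
where the last equality uses the domain match established above. On the other hand, since $D_n$ fully assigns its domain, the left side also equals the area of the domain of $D_n$ (boundaries have measure zero). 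So the inequality is an equality termwise: $\mathrm{area}(D_n(s))=A(s)=\mathrm{area}(D^*(s))$ for every $s$, and consequently $\mathrm{area}\bigl(D^*(s)\setminus D_n(s)\bigr)=0$.

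Finally I would upgrade this measure-theoretic equality to set equality, which is the one point requiring a small topological remark. Each region $D^*(s)$ is a closed set bounded by finitely many algebraic arcs and equals the closure of its interior. If some point $p\in D^*(s)\setminus D_n(s)$ existed, then since $D_n(s)$ is closed, either $p$ lies in $\mathrm{int}(D^*(s))$, in which case a full open neighborhood of $p$ inside $D^*(s)$ would be disjoint from $D_n(s)$, contradicting $\mathrm{area}(D^*(s)\setminus D_n(s))=0$; or $p$ is a boundary point of $D^*(s)$, in which case $p$ lies in the closure of $\mathrm{int}(D^*(s))\subseteq D_n(s)$ (using the open-neighborhood argument on interior points), so $p\in D_n(s)$ since $D_n(s)$ is closed. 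Either way we reach a contradiction, giving $D^*(s)=D_n(s)$ for every $s$, hence $D_n=D^*$. The only mildly delicate step is this last topological tightening; everything else is bookkeeping once coherence is in hand.
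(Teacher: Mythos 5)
Your proposal is correct and follows essentially the same route as the paper: both establish that the domain of $D_n$ is $\cup B_{1..n}$ by construction, that this equals the domain of $D^*$ by Lemma~\ref{smvd:lem:bounding}, and then invoke coherence from Lemma~\ref{smvd:lem:corr} to conclude $D_n(s)=D^*(s)$. The only difference is that the paper leaves the final implication implicit, while you justify it explicitly with the area comparison and the closed-region tightening, which is a reasonable way to fill in that step.
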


\begin{proof}
The domain of $D_n$ is $\cup B_{1..n}$ by construction. The domain of $D^*$ is also $\cup B_{1..n}$ by Lemma~\ref{smvd:lem:bounding}. By Lemma~\ref{smvd:lem:corr}, they are coherent, and so it must be that $D_n(s)=D^*(s)$.
\end{proof}

\paragraph{Finding the next bounding disk.}
The proper ordering $s_1,\ldots,s_n$ cannot be computed in a preprocessing step. Instead, the next site $s_i$ is discovered at each iteration. 
Consider the point where we have computed $D_{i-1}$ and want to construct $D_i$ ($1\leq i\leq n$).
At this point, we have found the ordering up to $s_{i-1}$. Hence, we know which sites are in $S_{i..n}$, but we do not know their ordering yet.
In this step, we need to find a site $s$ in $S_{i..n}$ minimizing $r^*(s)$, and we need to find the radius $r^*(s)$ itself. The site $s$ can then be the next site in the ordering, i.e., we can ``label'' $s$ as $s_i$. If there is a tie for the smallest bounding disk among those sites, then there may be several valid candidates for the next site $s_i$. The algorithm finds any of them and labels it as $s_i$.

To find a site $s$ in $S_{i..n}$ minimizing $r^*(s)$, note the following results.
\begin{lemma}\label{smvd:lem:correctness}
	If $r^*(s)\leq r^*(s')$, every point $p$ in $D^*(s)$ satisfies $d(p,s)\leq d(p,s')$.
\end{lemma}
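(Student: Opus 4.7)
My plan is to prove the contrapositive by contradiction: assume some $p \in D^*(s)$ has $d(p,s) > d(p,s')$, and derive a blocking pair that contradicts the stability of $D^*$.

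First I would chain two inequalities using the definition of the bounding disk and the hypothesis: since $p \in D^*(s) \subseteq B^*(s)$, we get $d(p,s) \le r^*(s)$, and combining with $r^*(s) \le r^*(s')$ and the assumption $d(p,s') < d(p,s)$ yields $d(p,s') < r^*(s')$. Hence $p$ lies strictly inside $B^*(s')$.

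Next I would exploit the definition of $B^*(s')$ as the \emph{smallest} closed disk centered at $s'$ containing $D^*(s')$: this forces the existence of at least one point $q \in D^*(s')$ with $d(q,s') = r^*(s')$ (otherwise a slightly smaller disk would still contain $D^*(s')$, contradicting minimality). Since $d(p,s') < r^*(s') = d(q,s')$, site $s'$ strictly prefers $p$ to $q$, so the second condition of the blocking-pair definition (Definition~\ref{smvd:def:smvd}(ii)) is satisfied with $p'=q$. Meanwhile, $p \notin C_{s'}$ because $p \in D^*(s)$ and $s\neq s'$ (these regions overlap only on boundaries, and $p$ lies strictly inside $B^*(s')$ but its match is $s$), giving condition (i); and $d(p,s') < d(p,s)$ together with $p \in D^*(s)$ gives condition (iii). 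Thus $(s',p)$ is a blocking pair, contradicting the stability of $D^*$.

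The only delicate point is justifying the existence of a point $q \in D^*(s')$ realizing the bounding radius $r^*(s')$ — this is a straightforward compactness / minimality argument, since $D^*(s')$ is closed and $B^*(s')$ is the smallest enclosing disk centered at $s'$. Otherwise the proof is a short chain of inequalities plus direct verification of the three clauses of Definition~\ref{smvd:def:smvd}. I do not expect any real obstacle; the argument is essentially the same blocking-pair style reasoning already used in the proof of Lemma~\ref{smvd:lem:boundary}.
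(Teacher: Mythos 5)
Your proposal is correct and follows essentially the same route as the paper: the same contradiction setup, the same chain of inequalities $d(p,s')<d(p,s)\leq r^*(s)\leq r^*(s')$, and the same blocking pair $(s',p)$ formed with a point of $D^*(s')$ on the boundary of its bounding disk. Your extra care in justifying the existence of a point $q\in D^*(s')$ at distance exactly $r^*(s')$ is a harmless refinement of what the paper states implicitly.
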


\begin{proof}
	Suppose, for a contradiction, that $r^*(s)\leq r^*(s')$ and $p$ is a point in $D^*(s)$, but $d(p,s')<d(p,s)$. 
	Clearly, $p$ prefers $s'$ to $s$. We show that $s'$ also prefers $p$ over some of its assigned points in $D^*$, and thus $p$ and $s'$ are a blocking pair.
	
	If we combine the three inequalities $d(p,s')<d(p,s)$, $d(p,s)\leq r^*(s)$ (because $p$ is in $D^*(s)$), and $r^*(s)\leq r^*(s')$, we see that $d(p,s') < r^*(s')$.
	Thus, $s'$ prefers $p$ to the points matched to $s'$ along the boundary of its bounding disk.
\end{proof}

\begin{corollary}\label{smvd:cor:correctness}
	For any $s_i$ in a proper ordering, $D^*(s_i)\subseteq V_{i..n}(s_i)$.
\end{corollary}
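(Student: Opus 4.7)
The plan is to derive the corollary directly from Lemma~\ref{smvd:lem:correctness} by quantifying over all sites that remain in $S_{i..n}$. First I would fix an arbitrary point $p \in D^*(s_i)$ and an arbitrary site $s' \in S_{i..n}$ with $s' \neq s_i$. Because $s_1,\ldots,s_n$ is a proper ordering (sites sorted by non-decreasing bounding-disk radius) and $s'$ appears at some position $j > i$, we have $r^*(s_i) \leq r^*(s')$. Applying Lemma~\ref{smvd:lem:correctness} to the pair $(s_i, s')$ then yields $d(p, s_i) \leq d(p, s')$.

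Since $s'$ was arbitrary in $S_{i..n} \setminus \{s_i\}$, the point $p$ is at least as close to $s_i$ as to every other site in $S_{i..n}$. By the definition of the standard Voronoi diagram using closed cells (as in Equation~\ref{smvd:eq:vd}), this places $p$ in $V_{i..n}(s_i)$. Since $p$ was an arbitrary point of $D^*(s_i)$, we conclude $D^*(s_i) \subseteq V_{i..n}(s_i)$. There is no real obstacle here: the corollary is a one-step consequence of Lemma~\ref{smvd:lem:correctness} applied once per competitor site, and the use of $\leq$ (rather than strict inequality) in both the proper-ordering definition and the Voronoi-cell definition means that tied radii cause no complication.
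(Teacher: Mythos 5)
Your proof is correct and matches the paper's own argument: both derive the inclusion by applying Lemma~\ref{smvd:lem:correctness} to $s_i$ and each remaining site $s_j \in S_{i+1..n}$ (which have $r^*(s_j) \geq r^*(s_i)$ by the proper ordering), concluding that every point of $D^*(s_i)$ lies in the closed Voronoi cell $V_{i..n}(s_i)$. Your version merely spells out the quantification and the closed-cell definition a bit more explicitly.
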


\begin{proof}
	According to Lemma~\ref{smvd:lem:correctness}, every point $p$ in $D^*(s_i)$ satisfies $d(p,s_i)\leq d(p,s_j)$ for any other site $s_j$ with $r_j\geq r_i$, and this includes every site in $S_{i+1..n}$.
\end{proof}

Based on Corollary~\ref{smvd:cor:correctness}, the idea for finding a site with the next smallest bounding disk is to compute what would be the stable cell of each site $s$ in $S_{i..n}$ if it were constrained to be a subset of $V_{i..n}(s)$. As we will see, among those stable cells, the one with the smallest bounding disk is correct.

More precisely, for each site $s$ in $S_{i..n}$, let $A_i(s)=A(s)-area(D_{i-1}(s))$ be the \textit{remaining appetite} of $s$ at iteration $i$: the starting appetite $A(s)$ of $s$ minus the area already assigned to $s$ in $D_{i-1}$. We define an \textit{estimate cell} $D^\dagger_i(s)$ for site $s$ at iteration $i$ as follows:
$D^\dagger_i(s)$ is the union of $D_{i-1}(s)$ and the intersection of $V_{i..n}(s)\setminus \cup B_{1..i-1}$ with a disk centered at $s$ such that that intersection has area $A_i(s)$.
Note that if $area(V_{i..n}(s)\setminus \cup B_{1..i-1})<A_i(s)$, no such disk exists. In this case, $D^\dagger_i(s)$ is not well-defined. If it is well-defined, we use $B^\dagger_i(s)$ to refer to its bounding disk (the smallest disk centered at $s$ that contains $D^\dagger_i(s)$), and $r^\dagger_i(s)$ to refer to the radius of $B^\dagger_i(s)$. Otherwise, we define $r^\dagger_i(s)$ as $+\infty$.

\begin{lemma}\label{smvd:lem:correctestimate}
At iteration $i$, for any site $s\in S_{i..n}$, $r^*(s)\leq r^\dagger_i(s)$. In addition, if $r^*(s)$ is minimum among the radii $r^*$ of the sites in $S_{i..n}$, then, $r^*(s)=r^\dagger_i(s)$ and $D^*(s)=D^\dagger_i(s)$.
\end{lemma}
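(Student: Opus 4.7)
The plan hinges on a single key containment: for any $s \in S_{i..n}$, I would show that $B^*(s) \cap V_{i..n}(s) \setminus \cup B_{1..i-1} \subseteq D^*(s)$. The argument is a blocking-pair argument applied to an interior point $p$ of the left-hand side. Since $p \notin \cup B_{1..i-1}$, $p$ cannot lie in $D^*(s_j) \subseteq B^*(s_j)$ for any $j<i$; but $p \in B^*(s) \subseteq \cup B_{1..n}$, so by Lemma~\ref{smvd:lem:bounding} $p$ must be matched in $D^*$ to some site of $S_{i..n}$. Because $p$ is in the interior of $V_{i..n}(s)$, $s$ is strictly closer to $p$ than any other site of $S_{i..n}$, and because $p$ is in the interior of $B^*(s)$, $d(p,s) < r^*(s)$, so $s$ strictly prefers $p$ to its most distant matched points. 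If $p$ were matched to any $s' \ne s$ in $S_{i..n}$, then $(p,s)$ would block, contradicting stability; hence $p \in D^*(s)$, and the closed inclusion follows by taking closures.

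Next I would establish the identity $D_{i-1}(s) = D^*(s) \cap \cup B_{1..i-1}$ for every $s \in S_{i..n}$. The inclusion $\subseteq$ is immediate from Lemma~\ref{smvd:lem:corr} together with the fact that $D_{i-1}$ has domain $\cup B_{1..i-1}$; for $\supseteq$, every point of that domain is assigned in $D_{i-1}$ (the intermediate diagrams partition their domain with no unassigned faces) and by coherence the assignment must be to $s$. Combining this identity with the containment from the previous paragraph, and noting that $D_{i-1}(s) \subseteq \cup B_{1..i-1}$ is disjoint up to boundary from $B^*(s) \cap V_{i..n}(s) \setminus \cup B_{1..i-1}$ with both sets lying in $D^*(s)$, I obtain $area(B^*(s) \cap V_{i..n}(s) \setminus \cup B_{1..i-1}) \leq A(s) - area(D_{i-1}(s)) = A_i(s)$. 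Since $r^\dagger_i(s)$, when finite, is by definition the radius of the smallest disk $B$ centered at $s$ for which $(V_{i..n}(s) \setminus \cup B_{1..i-1}) \cap B$ has area exactly $A_i(s)$, this area bound forces $B$ to have radius at least $r^*(s)$. Because $r^*(s)$ already bounds the distance from $s$ to any point of $D_{i-1}(s) \subseteq B^*(s)$, the bounding disk of $D^\dagger_i(s)$ coincides with $B$, and hence $r^\dagger_i(s) \geq r^*(s)$. The case $r^\dagger_i(s) = +\infty$ is trivial.

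For the second claim, suppose $s$ minimizes $r^*$ in $S_{i..n}$. Then $s$ is a legitimate choice of $s_i$ in a proper ordering, so Corollary~\ref{smvd:cor:correctness} gives $D^*(s) \subseteq V_{i..n}(s)$. Using the identity $D_{i-1}(s) = D^*(s) \cap \cup B_{1..i-1}$, this yields $D^*(s) \setminus D_{i-1}(s) \subseteq B^*(s) \cap V_{i..n}(s) \setminus \cup B_{1..i-1}$. The reverse inclusion was already proved in the first paragraph, so the two sets are equal and their common area is exactly $A_i(s)$. Consequently $B = B^*(s)$ is a valid choice for the greedy disk, giving $D^\dagger_i(s) = D_{i-1}(s) \cup (B^*(s) \cap V_{i..n}(s) \setminus \cup B_{1..i-1}) = D^*(s)$, and therefore $r^\dagger_i(s) = r^*(s)$.

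The main technical obstacle I expect is less any single deep step and more the careful bookkeeping: establishing the identity $D_{i-1}(s) = D^*(s) \cap \cup B_{1..i-1}$ requires combining the inductive construction of $D_{i-1}$ with coherence and the fact that the intermediate diagrams fully partition their domain; and the blocking-pair argument must be phrased in terms of interior points with closures taken afterwards, so that strict inequalities in preferences are available. Once these two pieces are in place, the two directions of containment for $B^*(s) \cap V_{i..n}(s) \setminus \cup B_{1..i-1}$ fit together cleanly to yield both the inequality and the equality case.
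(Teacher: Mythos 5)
Your proposal is correct and follows essentially the same route as the paper: coherence of $D_{i-1}$ (Lemma~\ref{smvd:lem:corr}) plus the stability observation that points of $V_{i..n}(s)\setminus\cup B_{1..i-1}$ lying inside $B^*(s)$ cannot be claimed by any other site of $S_{i..n}$, with Corollary~\ref{smvd:cor:correctness} supplying the equality case when $r^*(s)$ is minimum. You merely render the paper's informal ``cannot steal / worst case'' reasoning explicit, via the containment $(V_{i..n}(s)\cap B^*(s))\setminus\cup B_{1..i-1}\subseteq D^*(s)$, the identity $D_{i-1}(s)=D^*(s)\cap\cup B_{1..i-1}$, and area monotonicity, which is a tightening of the same argument rather than a different one.
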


\begin{proof}
	For the first claim, let $s$ be a site in $S_{i..n}$.
	Since $D_{i-1}$ is coherent with $D^*$ (Lemma~\ref{smvd:lem:corr}), the region $D_{i-1}(s)$ is in both $D^*(s)$ and $D^\dagger_i(s)$. The appetite of $s$ that is not accounted for in $D_{i-1}(s)$ is $A_i(s)$, and it must be fulfilled outside the domain of $D_{i-1}$, $\cup B_{1..i-1}$. 
	
	In $D^\dagger_i(s)$, $s$ fulfills the rest of its appetite with the points in $V_{i..n}(s)\setminus \cup B_{1..i-1}$ closest to it.
	Note that all these points have $s$ as first choice among the sites in $S_{i..n}$. Thus, the remaining sites in $S_{i..n}$ cannot ``steal'' those points away from $s$, so $s$ for sure does not need to be matched to points even further than that. In other words, in the worst case for $s$, in $D^*$, $s$ fulfills the rest of its appetite, $A_i(s)$, with those points, and thus $r^*(s)=r^\dagger_i(s)$. However, in $D^*$, $s$ may partly fulfill that appetite with points outside of $V_{i..n}(s)$ (and outside $\cup B_{1..i-1}$, of course) which are even closer. These points do not have $s$ as first choice, but they may end up not being claimed by a closer site.  Hence, it could also be that $r^*(s)< r^\dagger_i(s)$. For instance, see Figure~\ref{smvd:fig:estimateradii}.
	
	For the second claim, if $r^*(s)$ is minimum, we are in the worst case for $s$, because, according to Corollary~\ref{smvd:cor:correctness}, $s$ fulfills the rest of its appetite in $V_{i..n}(s)$ and not outside.
\end{proof}

\begin{figure}[h!]
	\centering
	\includegraphics[width=.95\linewidth]{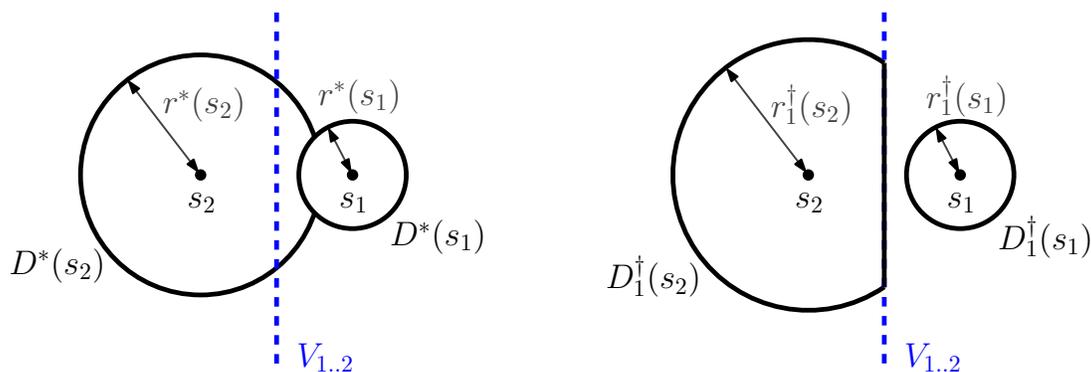}
	\caption{An instance of two sites with different appetites. The left side shows the regions of the sought diagram, $D^*$, and the actual radii $r^*$ of the sites. The right shows the estimate cells and estimate radii of the sites at iteration $1$. We can see that $r^*(s_1)=r_1^\dagger(s_1)$ and that $r^*(s_2)<r_1^\dagger(s_2)$.}
	\label{smvd:fig:estimateradii}
\end{figure}

\begin{corollary}\label{smvd:cor:correctestimate}
At iteration $i$, if $s$ has a smallest estimate radius $r^\dagger_i(s)$ among all the sites in $S_{i..n}$, then $s$ has a smallest actual radius $r^*(s)$ in $D^*$ among all the sites in $S_{i..n}$.
\end{corollary}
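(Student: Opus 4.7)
The plan is to derive this corollary directly from Lemma~\ref{smvd:lem:correctestimate} by a short chaining argument; nothing new has to be computed, since the lemma already packages the two facts we need: (a) $r^*(s) \le r^\dagger_i(s)$ for every site $s \in S_{i..n}$, and (b) whenever a site attains the minimum of $r^*$ over $S_{i..n}$, its estimate radius equals its true radius.

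Concretely, I would proceed as follows. Let $s$ be a site achieving the minimum of $r^\dagger_i$ over $S_{i..n}$, as in the hypothesis, and let $s'$ be any site achieving the minimum of $r^*$ over $S_{i..n}$ (such an $s'$ exists because $S_{i..n}$ is finite). Applying fact (b) to $s'$ gives $r^*(s') = r^\dagger_i(s')$. Because $s$ minimizes the estimate radius, $r^\dagger_i(s) \le r^\dagger_i(s')$. Finally, applying fact (a) to $s$ yields the chain
\[
r^*(s) \;\le\; r^\dagger_i(s) \;\le\; r^\dagger_i(s') \;=\; r^*(s').
\]
Since $s'$ was chosen to minimize $r^*$, no site in $S_{i..n}$ has a smaller $r^*$ than $r^*(s')$, so the chain must collapse to equalities, giving $r^*(s) = r^*(s')$. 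Hence $s$ also attains the minimum of $r^*$ over $S_{i..n}$, which is exactly what the corollary asserts.

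There is essentially no obstacle: the combinatorial content is entirely in Lemma~\ref{smvd:lem:correctestimate}, and the corollary is just the contrapositive-style transfer from estimate radii to true radii via a minimal witness $s'$. The only small point worth double-checking when writing the proof is that fact (b) is stated for \emph{some} minimizer of $r^*$ and not just for the site we are looking at, which is precisely why the argument introduces the auxiliary site $s'$ rather than trying to apply (b) to $s$ directly.
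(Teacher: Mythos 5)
Your proof is correct and follows essentially the same route as the paper's: both arguments chain the two facts of Lemma~\ref{smvd:lem:correctestimate} (that $r^*\le r^\dagger_i$ always, and that equality holds at a minimizer of $r^*$), the only cosmetic difference being that you work with an explicit minimizing witness $s'$ while the paper names the minimum value itself. No gaps.
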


Corollary~\ref{smvd:cor:correctestimate} gives us a way to find the next site $s_i$ in a proper ordering: compute the estimate radii of all the sites, and choose a sites with a smallest estimate radius. To do this, we need to be able to compute the estimate radii $r^\dagger_i(s)$.
This is the most challenging step in our algorithm.
Indeed, Observation~\ref{smvd:obs:transcentental} speaks to its difficulty.
To circumvent this problem, we encapsulate the computation of each $r^\dagger_i(s)$ in a geometric primitive that can
be approximated numerically in an
algebraic model of computation.
For the sake of the algorithm description,
we assume the existence of a black-box function that allows us to
compute the following geometric primitive.
\begin{definition}[Geometric primitive]\label{smvd:def:prim} Given a convex polygon $P$, 
	a point $s$ in $P$, an appetite $A$, and a set $C$ of disks, 
	return the radius $r$ (if it exists) 
	such that the area of the intersection of $P\setminus C$ 
	and a disk centered at $s$ with radius $r$ equals $A$.
\end{definition}
In the context of our algorithm, the point $s$ is a site in $S_{i..n}$, 
the appetite $A$ is the remaining appetite $A_i(s)$ of $s$, 
the polygon $P$ is the Voronoi cell $V_{i..n}(s)$, 
and the set of disks $C$ is $B_{1..i-1}$.
Note that such a primitive could be approximated 
numerically to arbitrary precision with a binary search like the one described later in Section~\ref{smvd:sec:convex}.

\paragraph{Implementation and runtime analysis.}

\begin{algorithm}[h!]
	\caption{Stable-matching Voronoi diagram algorithm.}
	\label{smvd:alg:smvd}
	\begin{algorithmic}
		\State \textbf{Input:} set $S$ of $n$ sites, and the appetite $A(s)$ of each site $s$.
		\State Initialize $S_{1..n}$ as $S$, $V_{1..n}$ as a standard Voronoi diagram of $S$, $B_{1..0}$ as an empty set of disks, $\cup B_{1..0}$ as an empty union of disks, and $D_0$ as an empty diagram.
		\State For each site $s\in S$, initialize its remaining appetite $A_1(s)=A(s)$. 
		\For{$i=1,\ldots,n$}
		\For{each site $s$ in $S_{i..n}$}
		\parState{Calculate the estimate radius $r^\dagger_i(s)$ and estimate bounding disk $B^\dagger_i(s)$ of $s$ using the primitive from Definition~\ref{smvd:def:prim} with parameters $V_{i..n}(s),s,A_i(s),$ and $B_{1..i-1}$.}
		\EndFor
		\State Let $s$ be a site in $S_{i..n}$ whose estimate radius $r^\dagger_i(s)$ is minimum.
		\State Set $s_i=s$, $r^*(s_i)=r^\dagger(s_i), B^*(s_i)=B^\dagger(s_i)$.
		\State Compute $\hat{B}(s_i)=B^*(s_i)\setminus \cup B_{1..i-1}$.
		\State Compute $\hat{V}_{i..n}$ by partitioning $\hat{B}(s_i)$ according to $V_{i..n}$.
		\State Add $\hat{V}_{i..n}$ to $D_{i-1}$ to obtain $D_i$.
		\For{each site $s'$ in $S_{i..n}$}
		\State Set $A_{i+1}(s')=A_i(s')-area(\hat{V}_{i..n}(s'))$ ($\hat{V}_{i..n}(s')$ might be empty).
		\EndFor
		\State Set $S_{i+1..n}=S_{i..n}\setminus\{s_i\}$ and $B_{1..i}=B_{1.. i-1}\cup\{B^*(s_i)\}$.
		\State Add $B^*(s_i)$ to $\cup B_{1.. i-1}$ to obtain $\cup B_{1..i}$.
		\State Remove $s_i$ from $V_{i..n}$ to obtain $V_{i+1..n}$.
		\EndFor
		\State Return $D_n$.
	\end{algorithmic}
\end{algorithm}

Given the preceding discussion, Algorithm~\ref{smvd:alg:smvd} shows the full pseudocode.
It uses the following data structures:
\begin{itemize}
	\item $V_{i..n}$: the standard Voronoi diagram of $n$ sites has $O(n)$ combinatorial complexity. It can be initially computed in $O(n\log n)$ time (e.g., see~\cite{Aurenhammer:1991,bookAurenhammer}). It can be updated after the removal of a site in $O(n)$ time~\cite{Gowda83}.
	
	\item $\cup B_{1..i}$: the union of $n$ disks also has $O(n)$ combinatorial complexity~\cite{KLPS,SA95}. To compute $\cup B_{1..i}$ from $\cup B_{1..i-1}$, a new disk can be added to the union in $O(n\log n)$ time, e.g., with a typical plane sweep algorithm.
	
	\item $\hat{V}_{i..n}$: since $\cup B_{1..i-1}$ has $O(n)$ complexity, and the boundary of $B^*(s_i)$ can only intersect each edge of $\cup B_{1..i-1}$ twice, $\hat{B}(s_i)$ also has $O(n)$ complexity. Given that both $\hat{B}(s_i)$ and $V_{i..n}$ have $O(n)$ combinatorial complexity, $\hat{V}_{i..n}$ has $O(n^2)$ combinatorial complexity. The diagram $\hat{V}_{i..n}$ can be computed in $O(n^2\log n)$ time, e.g., with a typical plane sweep algorithm.
	
	\item $D_i$: we do not maintain the faces of the diagram $D_i$ explicitly as ordered sequences of edges. Instead, for each site $s$, we simply maintain the region $D_i(s)$ as the (unordered) set of edges $\cup_{1\leq j\leq i}\;edges(\hat{V}_{j..n}(s))$.
	That is, at each iteration $i$, we add to the edge set of each site $s$ the edges bounding the (possibly empty) region of $s$ in $\hat{V}_{i..n}$. Note that after iteration $j$, the set of edges of $s_j$ does not change anymore. Since $\hat{V}_{i..n}$ has $O(n^2)$ complexity for any $i$, the collective size of the these edge sets is $O(n^3)$ throughout the algorithm.
	
\end{itemize}
We wait until the end of the algorithm to construct a proper data structure representing the planar subdivision $D^*$, e.g., a doubly connected edge cell (DCEL) data structure.
We construct it from the sets of edges collected during the algorithm. Let $E(s_i)=\cup_{1\leq j\leq i}\;edges(\hat{V}_{j..n}(s_i))$ be the set of edges for a site $s_i$. 

\begin{lemma}\label{smvd:lem:glue}
	If all the fragments of edges in $E(s_i)$ that overlap with other parts of edges in $E(s_i)$ are removed, then the parts left are precisely the edges of $D^*(s_i)$, perhaps split into multiple parts. 
\end{lemma}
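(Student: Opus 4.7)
My plan is to prove the lemma by exhibiting $\{\hat{V}_{j..n}(s_i)\}_{j=1}^{i}$ as an interior-disjoint tiling of $D^*(s_i)$, and then arguing pointwise that overlapping fragments in $E(s_i)$ correspond exactly to the interior tile boundaries while non-overlapping fragments correspond exactly to $\partial D^*(s_i)$. The edges of $D^*(s_i)$ are, by definition, the pieces of $\partial D^*(s_i)$, so this establishes the claim.

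First I would verify the tiling property. Since $\hat{B}(s_j)=B^*(s_j)\setminus \cup B_{1..j-1}$, the sets $\hat{B}(s_1),\ldots,\hat{B}(s_i)$ have pairwise disjoint interiors, so the subregions $\hat{V}_{j..n}(s_i)\subseteq \hat{B}(s_j)$ inherit this property. By Lemma~\ref{smvd:lem:corr}, each tile is contained in $D^*(s_i)$, and by Lemma~\ref{smvd:lem:bounding} we have $D^*(s_i)\subseteq B^*(s_i)\subseteq \cup B_{1..i}$, so every point of $D^*(s_i)$ lies in some $\hat{B}(s_j)$ with $j\leq i$ and, by the algorithm's correctness (the corollary following Lemma~\ref{smvd:lem:corr}), is placed into $\hat{V}_{j..n}(s_i)$. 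Hence the tiles cover $D^*(s_i)$ exactly.

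Next I would classify each non-vertex point $p\in\bigcup_{j=1}^{i}\partial \hat{V}_{j..n}(s_i)$. If $p$ lies in the interior of $D^*(s_i)$, a small neighborhood of $p$ is entirely inside $D^*(s_i)$ and is partitioned by the tiling into pieces that all belong to $s_i$; the arc or segment through $p$ must therefore have a tile $\hat{V}_{j..n}(s_i)$ on one side and a distinct tile $\hat{V}_{j'..n}(s_i)$ on the other. So $p$ lies on the boundary of two tiles, appears in two edges of $E(s_i)$, and is part of an overlap. If instead $p\in \partial D^*(s_i)$, the opposite side of the curve lies outside $D^*(s_i)$ and therefore cannot contain an $s_i$-tile; hence $p$ appears on the boundary of exactly one tile and contributes a non-overlapping fragment.

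Combining these cases, the overlapping fragments of $E(s_i)$ are exactly those in the interior of $D^*(s_i)$, and after removing them the surviving pieces cover $\partial D^*(s_i)$ exactly once, split possibly into several sub-pieces wherever an arc $\partial B^*(s_k)$ or a bisector demarcating consecutive tiles crosses $\partial D^*(s_i)$. The main obstacle is the second step: one must rule out that when $p$ is interior to $D^*(s_i)$, the tile on the opposite side of the curve is not a tile of $s_i$ (which would leave $p$ in only one edge of $E(s_i)$ and not flag it as overlap). This is precisely where the completeness of the tiling from step one is needed, since it guarantees that any curve passing through an interior point of $D^*(s_i)$ is sandwiched between two $s_i$-tiles.
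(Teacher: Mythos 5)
Your proposal is correct and follows essentially the same route as the paper: the paper also observes that the pieces of $E(s_i)$ interior to $D^*(s_i)$ arise from splitting the cell along boundaries $\partial\hat{B}(s_j)$, $j<i$, and hence appear twice (once for the face on each side), while genuine edges of $D^*(s_i)$ appear once, possibly fragmented. Your explicit interior-disjoint tiling by the $\hat{V}_{j..n}(s_i)$ and the local two-sided classification of non-vertex boundary points is just a more formal rendering of that same argument.
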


\begin{proof}
	Since $D^*$ and $D_n$ are the same, every edge $e$ of $D^*(s_i)$ appears in $E(s_i)$. However, $e$ may not appear as a single edge. Instead, it may be split into multiple edges or fragments of edges of $E(s_i)$. This may happen when, for some $s_j$ with $j<i$, the boundary of $\hat{B}(s_j)$  intersects $e$. In this case, in $E(s_i)$, the edge $e$ is split in two at the intersection point. This is because the two parts of the edge are found at different iterations of the algorithm. See, e.g., edge $e$ in Figure~\ref{smvd:fig:glue}.
	
	However, in $E(s_i)$ there may also be edges or fragments of edges which do not correspond to edges of $D^*(s_i)$. These are edges or fragments of edges that actually lie in the interior of $D^*(s_i)$, but that are added to $E(s_i)$ because they lie along the boundary of $\hat{B}(s_j)$ for some $s_j$ with $j<i$, which makes the region of $D^*(s_i)$ be split along that boundary. Such edges appear exactly twice in $E(s_i)$: one for each face on each side of the split. See, e.g., the edges that lie in the interior of $D^*(s_4)$ in Figure~\ref{smvd:fig:glue}, and note that they are all colored twice (unlike the actual edges of $D^*(s_i)$).
\end{proof}

\begin{figure}[htb]
	\centering
	\includegraphics[width=.45\linewidth]{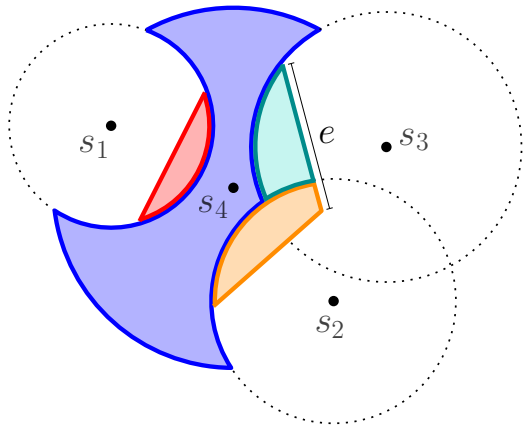}
	\caption{The union of colored regions is the stable cell $D^*(s_4)$ of the site $s_4$. The algorithm finds it divided into four regions, $\hat{V}_{i..4}(s_4)$ for $i=1,2,3,4$, shown in different colors. The bounding disks of $s_1,s_2,$ and $s_3$, are hinted in dotted lines. The edge $e$ of $D^*(s_4)$, which lies along the perpendicular bisector between $s_3$ and $s_4$, is split between $\hat{V}_{2..4}(s_4)$ and $\hat{V}_{3..4}(s_4)$.}
	\label{smvd:fig:glue}
\end{figure}

Given Lemma~\ref{smvd:lem:glue}, we can construct $D^*(s_i)$ from $E(s_i)$ as follows: first, remove all the overlapping fragments of edges in $E(s_i)$. Second, connect the edges with matching endpoints to construct the faces. While doing this, if two straight edges that lie on the same line share an endpoint, merge them into a single edge. Similarly, merge any two curved edges that lie along the same arc and share an endpoint. These are the fragmented edges of $D^*(s_i)$.

Each of these steps can be done with a typical plane sweep algorithm. In more detail, this could be done as follows: sort the endpoints of edges in $E(s_i)$ from left to right. Then, process the edges in the order encountered by a vertical line that sweeps the plane from left to right. 
Maintain all the edges intersecting the sweep line, ordered by height of the intersection (e.g., in a balanced binary search tree). In this way, overlapping edges (for the first step) or edges with a shared endpoint (for the second step) can be found quickly in $O(\log n)$ time. Construct the faces of $D^*(s_i)$ as they are passed by the sweep line. 

Since the sets $E(s_i)$, for $1\leq i\leq n$, have $O(n^3)$ cumulative combinatorial complexity, sorting all the $E(s_i)$ sets can be done in $O(n^3\log n)$ time. The plane sweeps for all the $s_i$ have overall $O(n^3)$ events, each of which can be handled in $O(\log n)$ time. Thus, the algorithm takes $O(n^3\log n)$ time in total.

\begin{theorem}
	\label{smvd:thm:alg}
	The stable-matching Voronoi diagram of a set $S$ of $n$ point sites
	can be computed in the real-RAM model in $O(n^3\log n)$ time plus $O(n^2)$ calls to a 
	geometric primitive that has input complexity $O(n)$.
\end{theorem}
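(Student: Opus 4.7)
The plan is to establish the theorem as a synthesis of the correctness and complexity observations already accumulated for Algorithm~\ref{smvd:alg:smvd}. Correctness is in hand: the corollary to Lemma~\ref{smvd:lem:corr} gives $D_n = D^*$, while Corollary~\ref{smvd:cor:correctestimate} certifies that the site $s_i$ selected at each iteration (a minimizer of $r^\dagger_i$) is a legal next element of a proper ordering. So the task reduces to checking that (i) the primitive of Definition~\ref{smvd:def:prim} is invoked at most $O(n^2)$ times with inputs of combinatorial size $O(n)$, and (ii) all remaining work fits in the real-RAM model within $O(n^3 \log n)$ time.

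For the primitive count, I would simply observe that at iteration $i$ the algorithm queries one estimate radius per site of $S_{i..n}$, yielding $\sum_{i=1}^{n} (n-i+1) = O(n^2)$ calls overall. On each call, the Voronoi cell $V_{i..n}(s)$ is a convex polygon with $O(n)$ edges and the disk collection $B_{1..i-1}$ contains $O(n)$ disks, so the primitive receives input of complexity $O(n)$, as claimed.

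For the real-RAM cost I would itemize the per-iteration work already justified in the data-structure discussion: updating $V_{i..n}$ after deleting $s_i$ takes $O(n)$ time~\cite{Gowda83}; adding $B^*(s_i)$ to $\cup B_{1..i-1}$ takes $O(n \log n)$ by plane sweep; forming $\hat B(s_i) = B^*(s_i) \setminus \cup B_{1..i-1}$ and overlaying it with $V_{i..n}$ to obtain $\hat V_{i..n}$ (of combinatorial complexity $O(n^2)$) takes $O(n^2 \log n)$ by another plane sweep; and updating each remaining appetite and each edge bag $E(s)$ is linear in the size of $\hat V_{i..n}$. Summing over the $n$ iterations yields $O(n^3 \log n)$, plus the $O(n^2)$ primitive calls.

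The step I expect to be the most delicate is the post-processing that converts the multiset $E(s_i) = \bigcup_{1 \le j \le i} \mathrm{edges}(\hat V_{j..n}(s_i))$ into a DCEL representation of $D^*(s_i)$, because the edges of $D^*$ have in general been shattered by many disk boundaries during the incremental construction, and internal slices introduced by disk boundaries crossing a stable cell must be recognized and discarded. Here I would invoke Lemma~\ref{smvd:lem:glue} as the combinatorial hinge: genuine boundary edges of $D^*(s_i)$ appear in $E(s_i)$ as fragments, while spurious internal edges appear exactly twice. A left-to-right plane sweep then suffices to (a) locate and delete doubly-covered subsegments and (b) stitch compatible endpoints together, merging collinear straight fragments and co-circular curved fragments whenever they meet. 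Since the cumulative size of all the edge bags is $O(n^3)$, this sorting-and-sweeping phase runs in $O(n^3 \log n)$, matching the earlier budget and completing the theorem.
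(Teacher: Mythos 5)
Your proposal is correct and follows essentially the same route as the paper: counting $O(n)$ primitive calls per iteration with $O(n)$-size inputs ($O(n)$-edge Voronoi cells and $O(n)$ disks), identifying the computation of $\hat{V}_{i..n}$ as the $O(n^2\log n)$ per-iteration bottleneck, and handling the final reconstruction of $D^*$ from the $O(n^3)$-size edge sets via Lemma~\ref{smvd:lem:glue} and plane sweeps in $O(n^3\log n)$ time. The only difference is that you spell out details (correctness citations, the per-iteration cost itemization) that the paper delegates to its preceding implementation discussion.
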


\begin{proof}
For the number of calls to the geometric primitive, note that
there are $n$ iterations, and at each 
iteration we call the geometric primitive $O(n)$ times.
Any given cell of the standard Voronoi diagram $V_{i..n}$ has $O(n)$ edges, and there are $O(n)$ already-matched disks, 
so the input of each call has $O(n)$ size.
Therefore, we make $O(n^2)$ calls to the geometric primitive, each of which
has combinatorial complexity $O(n)$.
	
Besides primitive calls, the bottleneck of each iteration $i$ is computing $\hat{V}_{i..n}$. This can be done in $O(n^2\log n)$ time, for a total of $O(n^3\log n)$ time over all the iterations. 
The final step of reconstructing $D^*$ can also can be done in $O(n^3\log n)$.
\end{proof}

\subsection{Geometric Primitive for Polygonal Convex Distance Functions}\label{smvd:sec:convex}
In this section, we show how to implement the geometric primitive \emph{exactly} for convex distance functions induced by convex polygons. The use of this class of metrics for Voronoi Diagrams was introduced in~\cite{Chew1985}, and studied further, e.g., in~\cite{Ma00}. Intuitively, the polygonal convex distance function, $d_S(a,b)$, induced by a convex polygon $S$, is the factor by which we need to scale $S$, when $S$ is centered at $a$, to reach $b$. Solving the primitive exactly for such metrics is interesting for two reasons. First, this class of distance functions includes many commonly used metrics such as the $L_1$ (Manhattan) and $L_\infty$ (Chebyshev) distances. Second, a convex distance function induced by a regular polygon with a large number of sides can be used to approximate Euclidean distance.

Formally, the distance $d_S(a,b)$ is defined as follows: let $S$ be a convex polygon in $\mathbb{R}^2$ that contains the origin. Then, to compute the distance $d_S(a,b)$ from a point $a$ to a point $b$, we translate $S$ by vector $a$ so that $a$ is at the same place inside $S$ as the origin was. Let $p$ be the point at the intersection of $S$ with the ray starting at $a$ in the direction of $b$. Then, $d_S(a,b)=d(a,b)/d(a,p)$ (where $d(\cdot,\cdot)$ is Euclidean distance).

Convex distance functions satisfy triangle inequality, but they may not be symmetric. Symmetry ($d_S(p,q)=d_S(q,p)$) holds if and only if $S$ is symmetric with respect to the origin~\cite{Ma00}. In this section, we assume that $S$ is symmetric with respect to the origin. Another significant difference with Euclidean distance is that the bisector of two points may contain 2-dimensional regions. This happens when the line through the two points is parallel to a side of $S$~\cite{Ma00}. We assume that such degenerates cases do not happen.\footnote{Alternatively, we may redefine the bisector to go along the clockwise-most boundary of the two-dimensional region, as in~\cite{Chew1985}.}

\begin{figure}[htb]
	\centering
	\includegraphics[width=.4\linewidth]{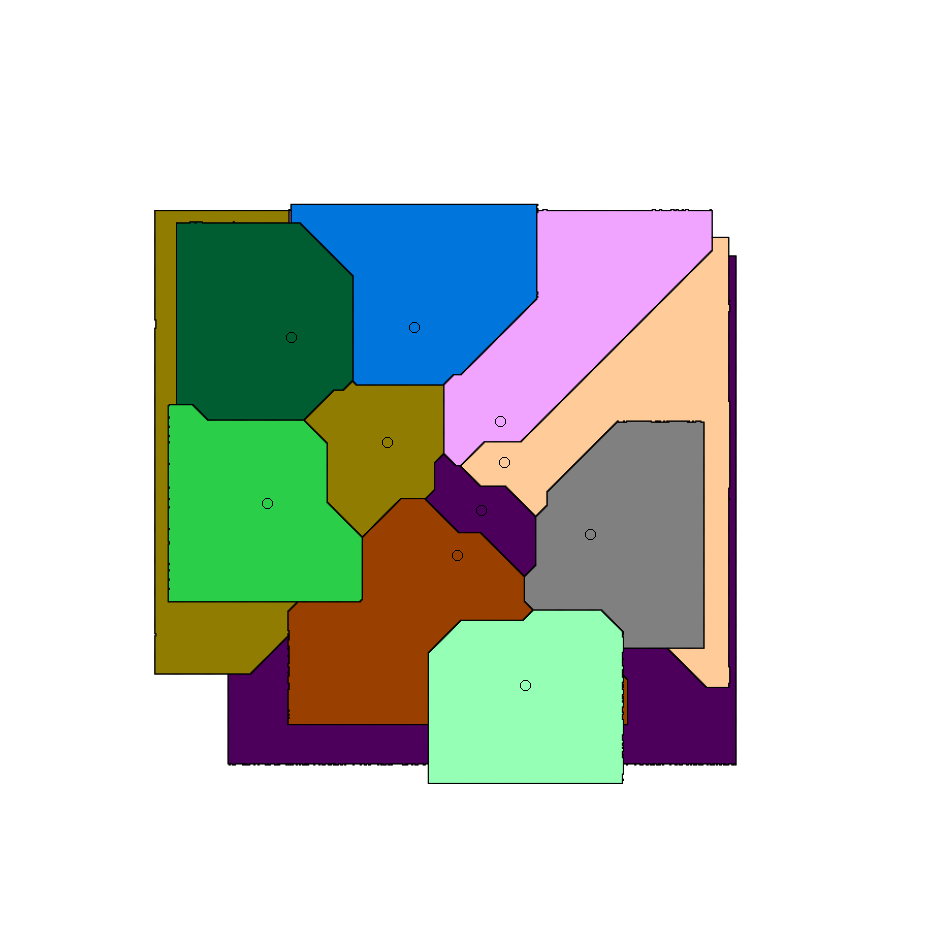}
	\includegraphics[width=.4\linewidth]{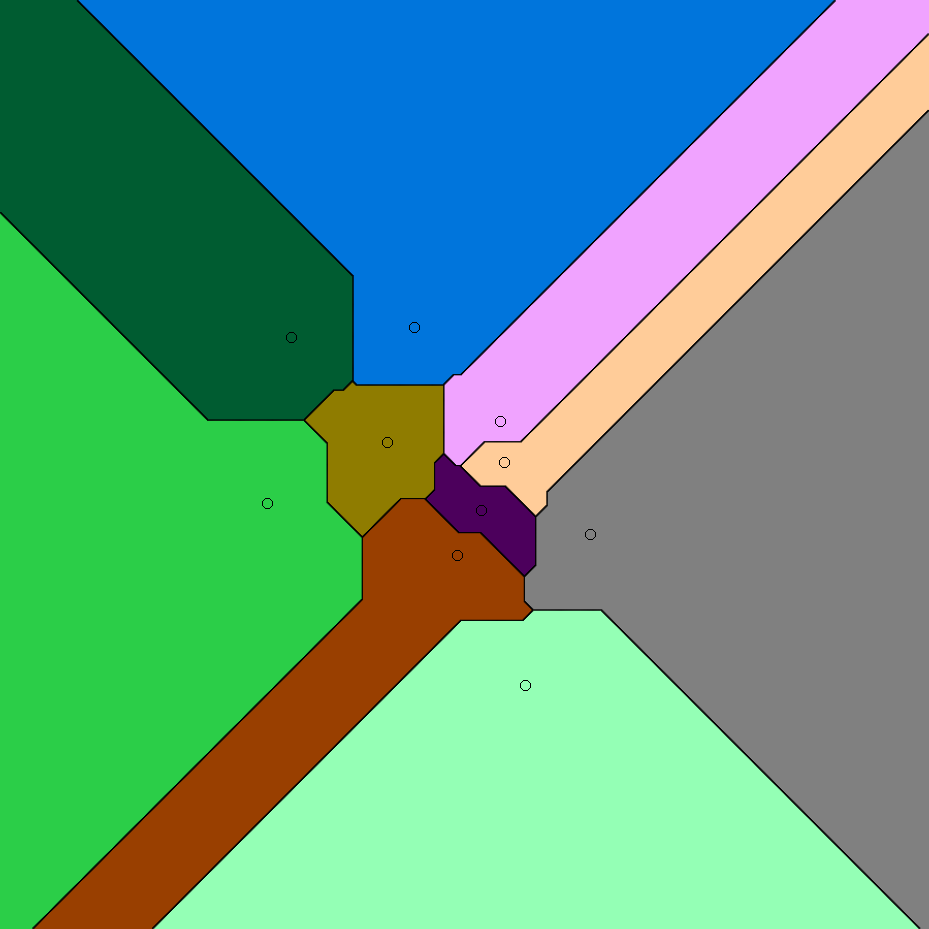}\hspace*{2em}
	\caption{Stable-matching Voronoi diagram (left) and standard Voronoi diagram (clipped by a square) (right) for the convex distance function induced by a square centered at the origin, which corresponds to the $L_\infty$ metric.}
	\label{smvd:fig:chessvoronoi}
\end{figure}

The discussion from Section~\ref{smvd:sec:geo} applies to diagrams based on polygonal convex distance functions. However, in this setting, all the edges are straight. Recall that, in the Euclidean distance setting, straight edges lie along perpendicular bisectors, while curved edges lie along the boundaries of bounding disks. This is still the case here, but bounding disks are constituted of straight edges.
In this context, disks are called balls. A \emph{ball} is a (closed) region bounded by a translated copy of $S$ scaled by some factor.
Therefore, straight and curved edges should now be referred to as \emph{bisector edges} and \emph{bounding ball edges}, respectively. With this distinction, the results in that section also apply.
Likewise, the algorithm applies as well. However, note that the notion of radius is not well defined for convex distance functions, as they grow at different rates in different directions. Therefore, instead of talking about the radii of the bounding disks, we should talk about the scaling factor of the bounding balls.
Most importantly, the fact that there are no curved edges allows us to compute the diagram exactly in an algebraic model of computation. This is the focus of this section.

We need to reformulate the geometric primitive for the case of convex distance functions. Recall that the polygon $P$ in the primitive should correspond to a Voronoi cell, which is the reason why $P$ is assumed to be convex in the primitive. However, Voronoi cells may not be convex for convex distance functions (see Figure~\ref{smvd:fig:chessvoronoi}). Instead, Voronoi cells of polygonal convex distance functions are star-shaped, with the site in the kernel~\cite{Ma00}. Thus, $P$ will now be a star-shaped polygon. For simplicity, we translate the site $s$ to the origin. Finally, we express the solution as the scaling factor of the wanted ball rather than its radius.

\begin{definition}[Geometric primitive for polygonal convex distance functions]\label{smvd:def:prim2} Given a convex distance function induced by a polygon $S$ symmetric with respect to the origin, a star-shaped polygon $P$ with the origin in the kernel, an appetite $A$, and a set $C$ of balls, 
	return the scaling factor $r$ (if it exists) 
	such that $A$ equals the area of the intersection of $P\setminus C$ 
	and $S$ scaled by $r$.
\end{definition}

\paragraph{The algorithm.}
\begin{enumerate}
\item The algorithm begins by computing $P\setminus C$ (which is a polygonal shape that can be concave, have holes, and be disconnected). Then, we triangulate $P\setminus C$ into a triangulation, $T_1$. For each triangle in $T_1$ whose interior intersects one of the \emph{spokes} of $S$ (a ray starting at the origin and going through a vertex of $S$), we divide the triangle along the spoke and re-triangulate each part. After this, the resulting triangulation, $T_2$, has no triangles intersecting any spoke of $S$ except along the boundaries (see Figure~\ref{smvd:fig:triangulation}).

\begin{figure}[htb]
	\centering
	\includegraphics[width=.7\linewidth]{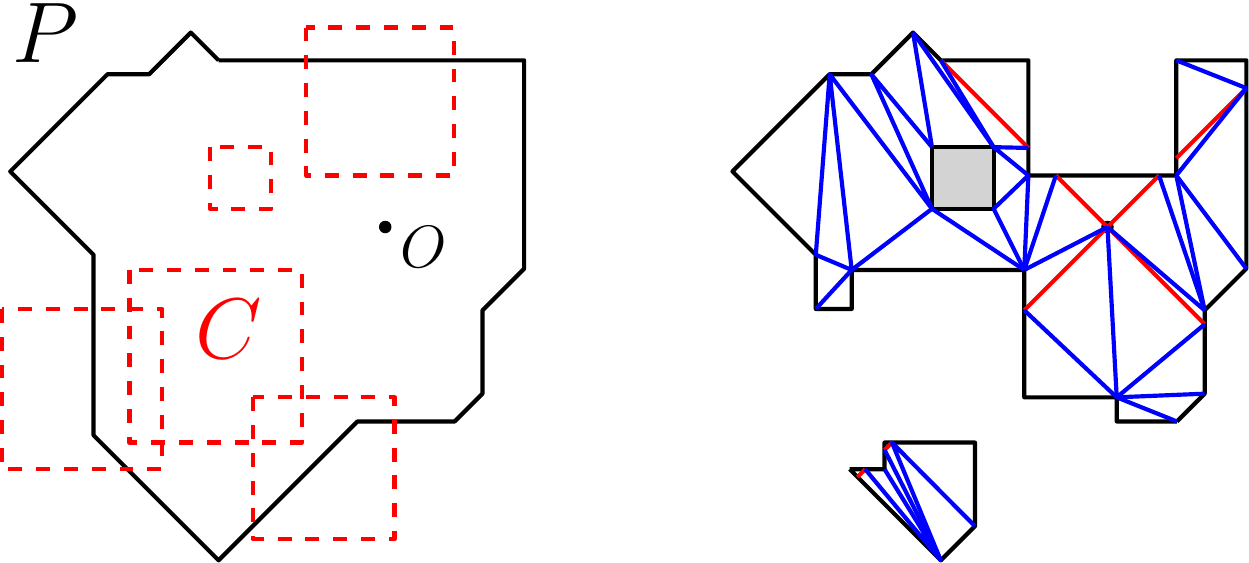}
	\caption{Left: an input to the geometric primitive for the convex distance function induced by a square, where the balls in $C$ are shown in dashed red lines. Right: the corresponding triangulation $T_2$ of $P\setminus C$ where no triangle intersects any spoke of $S$ (shown in red, they are also part of the triangulation).}
	\label{smvd:fig:triangulation}
\end{figure}

\item The next step is to narrow down the range of possible values of $r$. 
We compute, for each vertex $v$ in $T_2$, the distance from the origin $d_S(O,v)$, and sort the vertices from shortest to longest distance. If two or more vertices are at the same distance, we discard all but one, so that we have a sorted list $L$ with only one vertex for each distance. Now, we search for two consecutive vertices $v_1$ and $v_2$ in $L$ such that $d_S(O,v_1)\leq r\leq d_S(O,v_2)$ (or conclude that $r$ does not exist). To find $v_1$ and $v_2$, we can use binary search on the list $L$: for a vertex $v$, we compute the area of the intersection of $P\setminus C$ and a ball centered at the origin passing through $v$ (this can be done by adding the individual contribution of each triangle in $T_2$). By comparing this area to $A$, we discern whether $v$ is too close or too far.

\item It remains to pinpoint $r$ between $d_S(O,v_1)$ and $d_S(O,v_2)$. Let $B_1$ and $B_2$ denote unit balls centered at the origin scaled by $d_S(O,v_1)$ and $d_S(O,v_2)$, respectively, and $B$ the annulus defined by $B_2\setminus B_1$. Note that, because $v_1$ and $v_2$ are consecutive vertices of $L$, the interior of $B$ does not contain any vertex of $T_2$. Conversely, no vertex of $B$ is in the interior of a triangle of $T_2$, because all the vertices of $B$ lie along the spokes of $S$, and no triangle in $T_2$ intersects the spokes of $S$.
As a result, if a triangle in $T_2$ intersects $B$, the intersection is either a triangle or a trapezoid (see Figure~\ref{smvd:fig:btintersections}).
Similarly to Step 1, for each triangle in $T_2$ whose interior is intersected by $B_1$ and/or $B_2$, we divide the triangle along $B_1$ and/or $B_2$ and re-triangulate each part. Figure~\ref{smvd:fig:triangulation2} illustrates the resulting triangulation, $T_3$, where the interior of each triangle is either fully contained in $B$ or disjoint from $B$. Moreover, all the triangles in $B$ have an edge along the boundary of $B_1$ or $B_2$, which we call the base, and a vertex in the boundary of the other (the cuspid).

\begin{figure}[htb]
	\centering
	\includegraphics[width=.575\linewidth]{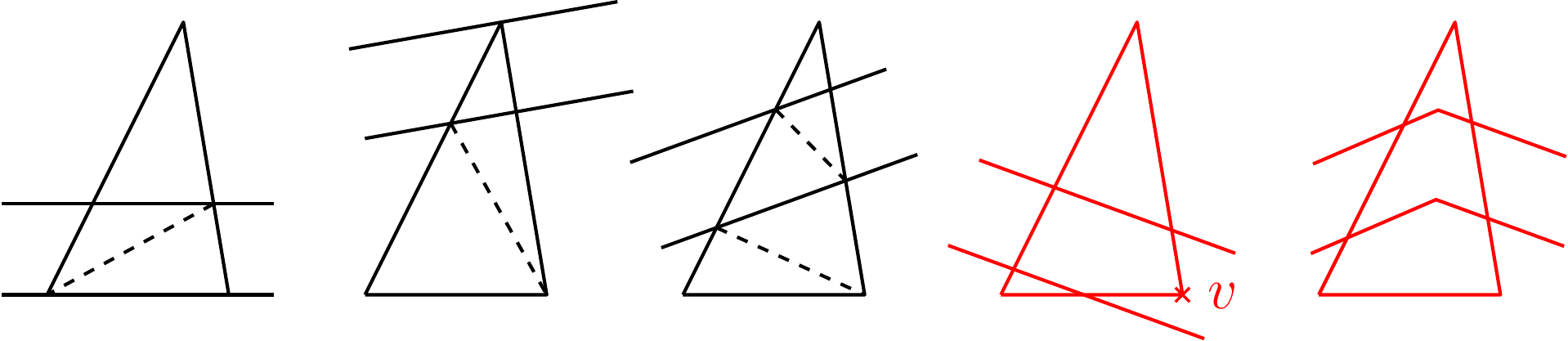}
	\caption{In black: three possible intersections of triangles in $T_2$ and $B$, and the resulting sub-triangulations. In red: two invalid intersections between a triangle in $T_2$ and $B$.} 
	\label{smvd:fig:btintersections}
\end{figure}

\item Finally, we find $r$ as follows. Since $r$ is between $d_S(O,v_1)$ and $d_S(O,v_2)$, triangles outside $B_2$ lie outside the ball with radius $r$. Conversely, all triangles inside $B_1$ are contained in the ball with radius $r$. Let $A'$ be the sum of the areas of all the triangles inside $B_1$. Then, the triangles in $B$ must contribute a total area of $A-A'$. They all have height $h=d_S(O,v_2)-d_S(O,v_1)$. Let $R_1$ and $R_2$ be the sets of triangles in $B$ with the base along  $B_1$ and $B_2$, respectively. We need to find the height $h'$, with $0\leq h'\leq h$, such that $A-A'$ equals the sum of \textit{(i)} the areas of the triangles in $R_1$ from the base to a line parallel to the base at height $h'$, and \textit{(ii)} the areas of the triangles in $R_2$ from the cuspid to a line parallel to the base at height $h-h'$. Given $h'$, we can output $r=d_S(O,v_1)+h'$.

\begin{figure}[htb]
	\centering
	\includegraphics[width=.675\linewidth]{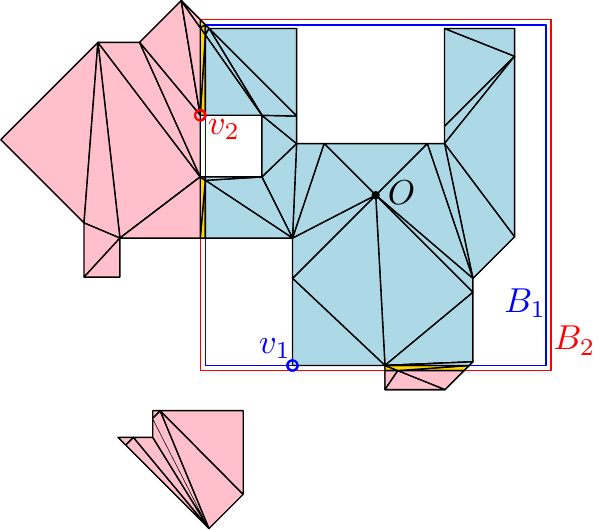}
	\caption{Triangulation $T_3$ of $P\setminus C$ after Step 3 of the algorithm, where no triangle intersects $B$. The triangles of $T_3$ can be classified into those inside $B_1$, inside $B$, and outside $B_2$.}
	\label{smvd:fig:triangulation2}
\end{figure}

In order to find $h'$, we rearrange the triangles to combine them into a trapezoid, as shown in Figure~\ref{smvd:fig:trapezoid}, Left. We rotate the triangles in $R_1$ to align their bases, translate them to put their bases adjacent along a line, and shift their cuspids along a line parallel to the bases to coincide at a single point above the leftmost point of the first base. Doing so does not change their area, and guarantees that triangles do not overlap. We do a similar but flipped transformation to triangles in $R_2$ in order to form the trapezoid. The height $h'$ is the height at which the area of the trapezoid from the base up to that height is $A-A'$, which can be found as the solution to a quadratic equation by using the formula for the area of a trapezoid, as shown in Figure~\ref{smvd:fig:trapezoid}, Right.

\begin{figure}[htb]
	\centering
	\includegraphics[width=.9975\linewidth]{}
	\caption{Top left: triangles of $T_3$ inside $B$, rotated and separated into triangles with the base along $B_1$ (top) and $B_2$ (bottom). Bottom left: the triangles rearranged (and transformed) into a trapezoid with the same area. Right: derivation of the quadratic equation for $h'$ from the formula for the area of a trapezoid, for the case where the base is shorter than the top size (i.e., $b$ is positive). Note that $a,b$, and $h$ are known. The alternative case is similar.}
	\label{smvd:fig:trapezoid}
\end{figure}

\end{enumerate}

\paragraph{Running time.} The correctness of the algorithm follows from the simple argument in Step 4. We now consider its runtime analysis. The size of the input to this primitive is $O(|P|+|C|+|S|)$, where $|P|$ and $|S|$ denote the number of edges of the polygons $P$ and $S$, respectively. The polygonal shape $P\setminus C$ has $O(|P|+|C||S|)$ edges, as each ball in $C$ has $|S|$ edges. The corresponding triangulation $T_1$ has $O(|P|+|C||S|)$ triangles. Each spoke of $S$ may intersect every triangle and divide it in two or three, so $T_2$ has $|T_2|=O(|P||S|+|C||S|^2)$ triangles (and vertices). Sorting the vertices of $T_2$ requires $O(|T_2|\log{|T_2|})$ time. The binary search has $O(\log{|T_2|})$ steps, each of which takes time proportional to the number of triangles, $O(|T_2|)$. These steps are the bottleneck, as $T_3$ grows only by a constant factor with respect to $T_2$. Thus, the total runtime of the primitive is $O(|T_2|\log{|T_2|})=O((|P||S|+|C||S|^2)\log{(|P||S|+|C||S|)})$.

In the context of the algorithm, we make calls with to the primitive with $|P|=O(n|S|)$ and $|C|<n$, so we can compute the primitive in $O(n|S|^2\log (n|S|))$ time. When the polygon $S$ has a constant number of faces, the time is $O(n\log n)$. Thus, the entire stable-matching Voronoi diagram for metrics based on these polygons can be computed in $O(n^3\log n)$ total time. This includes the metrics $L_1$ and $L_\infty$.

\section{Conclusions}~\label{smvd:sec:conc}
We have studied stable-matching Voronoi diagrams,
providing characterizations of their combinatorial complexity
and a first discrete algorithm for constructing them.
Stable-matching Voronoi diagrams are a natural generalization of standard Voronoi diagrams to size-constrained regions. This is because standard Voronoi diagrams also have the defining property of stable-matching Voronoi diagrams: stability for preferences based on proximity. Furthermore, both have similar geometric constructions in terms of the lower envelopes of cones.

However, allowing prescribed region sizes comes at the cost of loss of convexity and connectivity; indeed, we have shown that a stable-matching Voronoi diagram may have $O(n^{2+\varepsilon})$ 
faces and edges, for any $\varepsilon>0$. We conjecture that $O(n^2)$ is the right upper bound, matching the lower bound that we have given.

Constructing a stable-matching Voronoi diagram is also 
more computationally challenging than the construction of a standard
Voronoi diagram. In particular, it requires computations that cannot be carried out exactly in an algebraic model of computation. We have given an algorithm which runs in $O(n^3\log n+n^2f(n))$-time, where $f(n)$ is the runtime of a geometric primitive that we defined to encapsulate the computations that cannot be carried out analytically. 
While such primitives cannot be avoided, a step forward from our algorithm would be one that relies only in primitives with constant-sized inputs.

With this work, there are now three approaches for computing stable-matching Voronoi diagrams, each of which requires a different compromise: \textit{(a)} use our algorithm and approximate the geometric primitive numerically; \textit{(b)} replace the Euclidean distance by a polygonal convex distance function induced by a regular polygon with many sides (this approximates a circle, which would correspond to Euclidean distance), and compute the primitive exactly as described in Section~\ref{smvd:sec:convex}; \textit{(c)} discretize the plane into a grid, and use the algorithms of~\cite{EPPSTEIN2017}.

\subparagraph*{Acknowledgements.}
This article reports on work supported by the
DARPA under agreement no.~AFRL FA8750-15-2-0092.
The views expressed are those of the authors and do not reflect the
official policy or position of the Department of Defense
or the U.S.~Government.
Work on this paper by the first author has been supported
in part by BSF Grant~2017684.
This work was also supported in part from NSF grants
1228639, 1526631,
1217322, 1618301, and 1616248.
We would like to thank Nina Amenta for several helpful discussions regarding
the topics of this paper. We also thank the anonymous reviewers for many useful comments.

\bibliography{bibliocopy}

\appendix

\section{Algorithm Step by Step Illustration}\label{smvd:app:steps}
In this appendix, we illustrate in more detail the incremental construction of the algorithm, by showing several partial diagrams $D_i$ for a set of sites with equal appetites. In these figures, the edges of the standard Voronoi diagram of \textit{all} the sites ($V_{1..n}$) are overlaid in thick black lines, and the edges of the stable-matching Voronoi diagram are overlaid in thin black lines.

\begin{figure}[b!] \centering
\begin{minipage}[b]{0.48\linewidth}
\fbox{\includegraphics[width=0.973\linewidth]{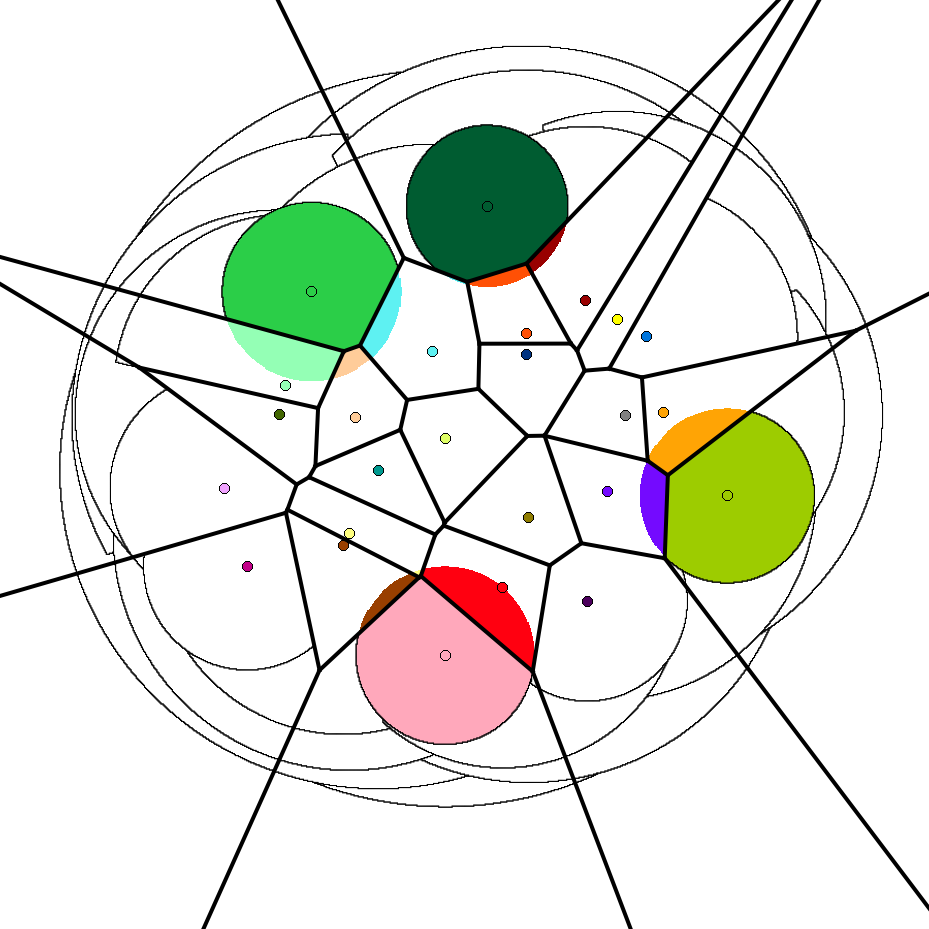}}\\[4pt]
\fbox{\includegraphics[width=0.973\linewidth]{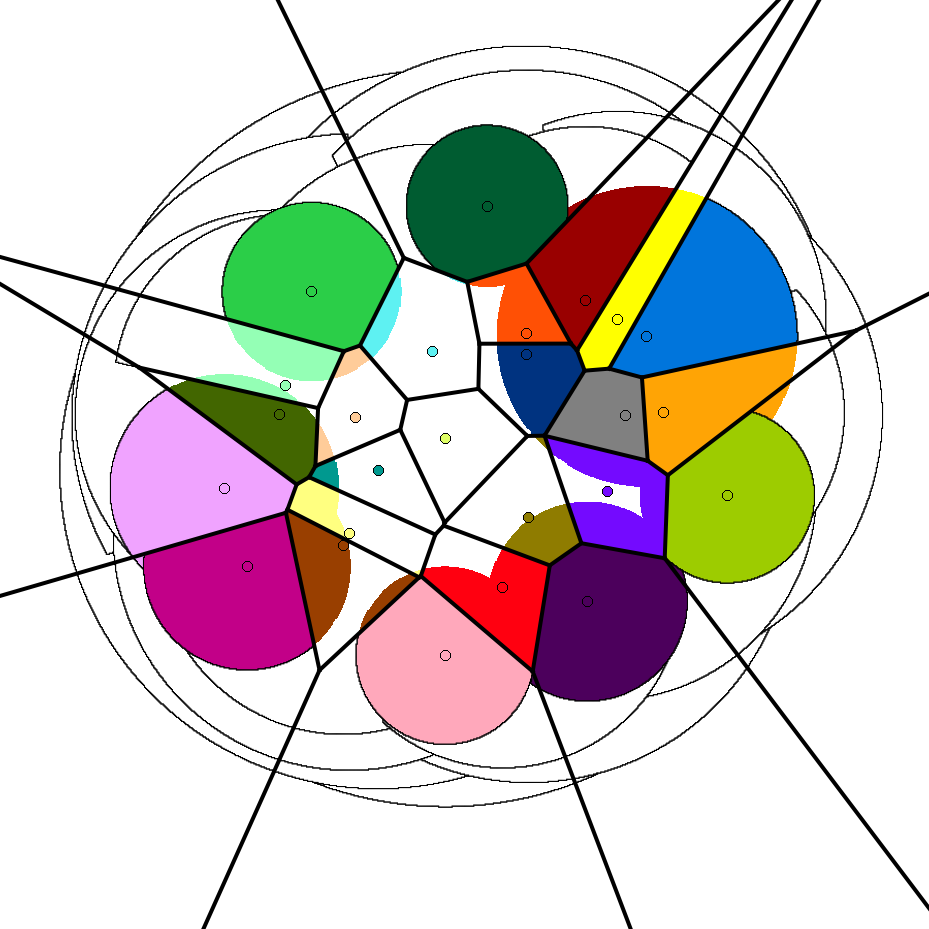}}
\end{minipage} \begin{minipage}[b]{0.48\linewidth}
\fbox{\includegraphics[width=0.973\linewidth]{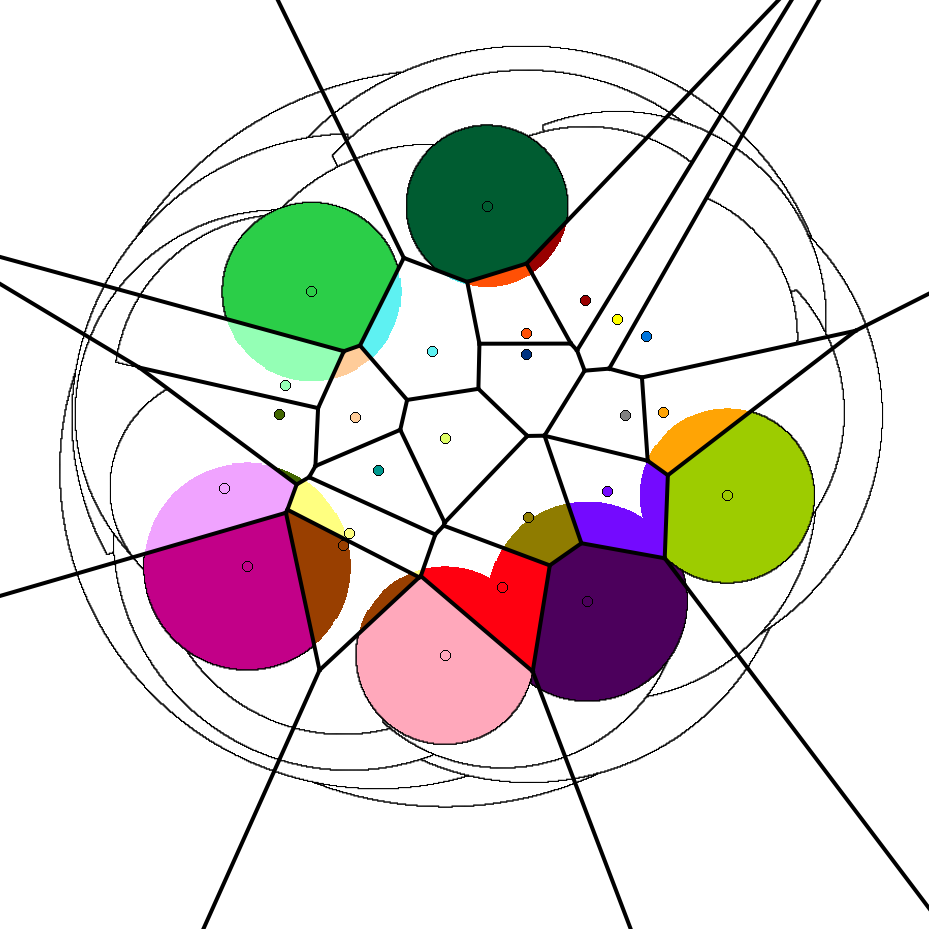}}\\[4pt]
\fbox{\includegraphics[width=0.973\linewidth]{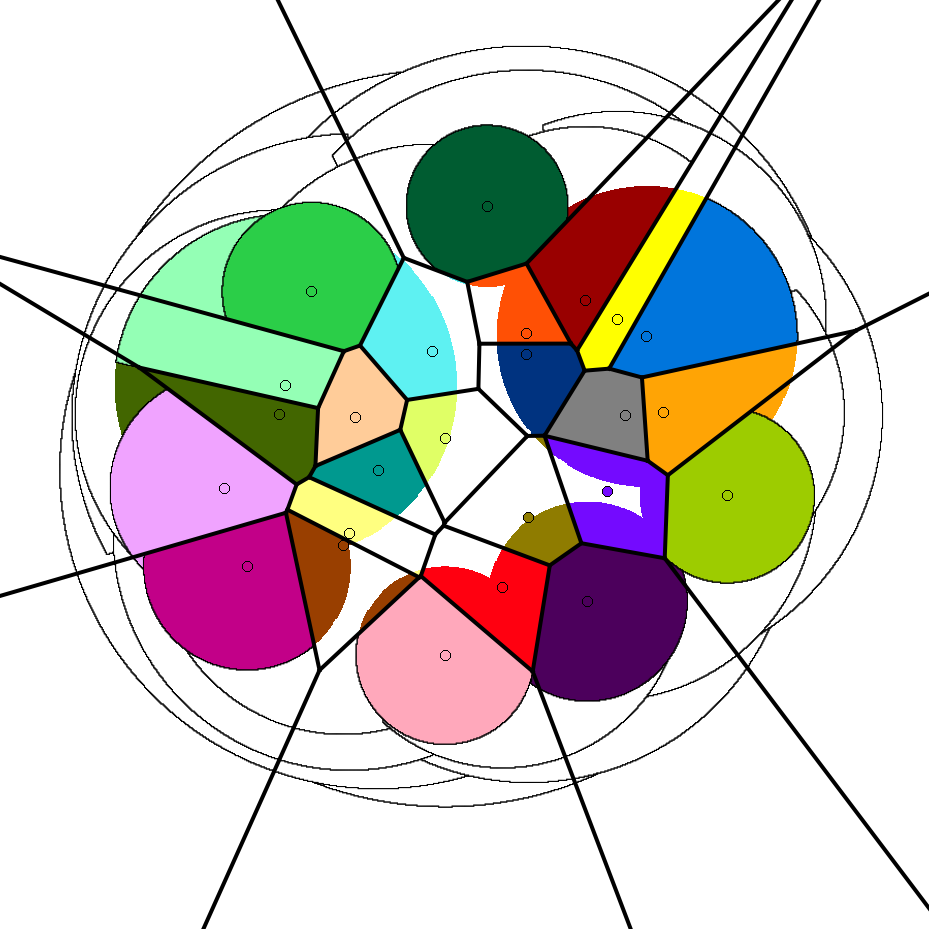}}
\end{minipage} 
\caption{Partial diagram constructed by the Algorithm after 4, 6, 8, and 9 iterations.} \label{smvd:fig:step1} 
\end{figure}

\begin{figure}[ht] \centering
\begin{minipage}[b]{0.48\linewidth}
\fbox{\includegraphics[width=0.973\linewidth]{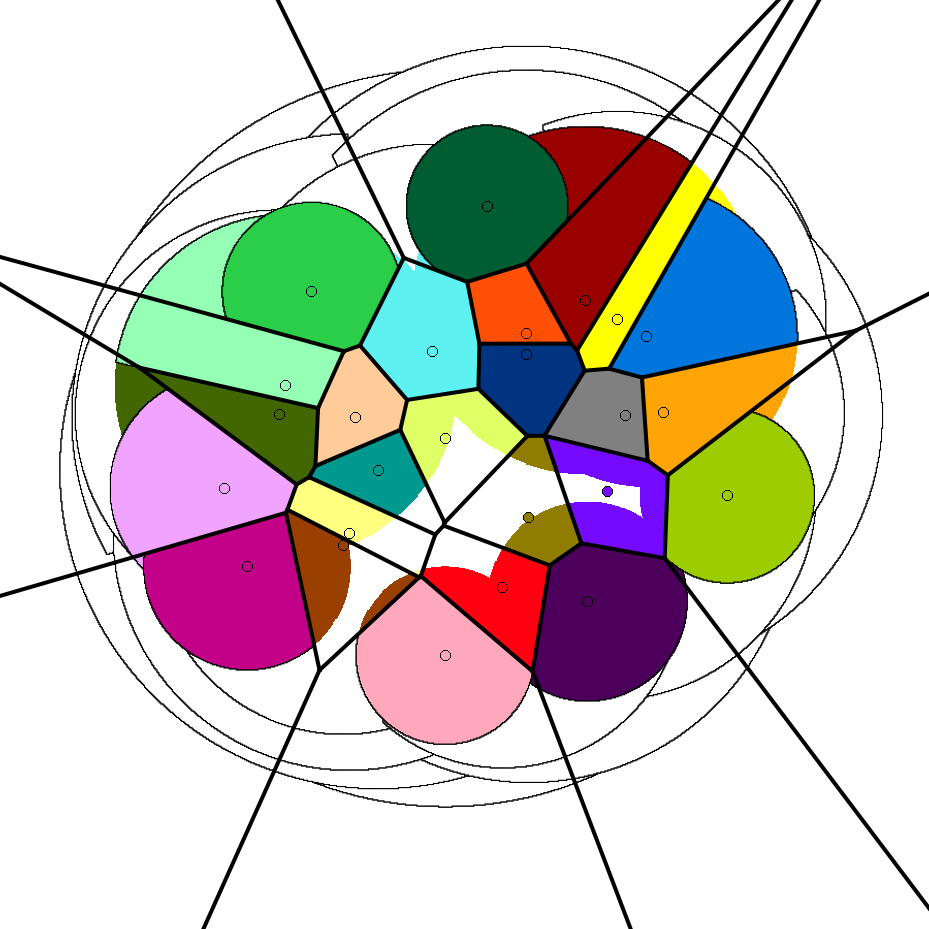}}\\[4pt]
\fbox{\includegraphics[width=0.973\linewidth]{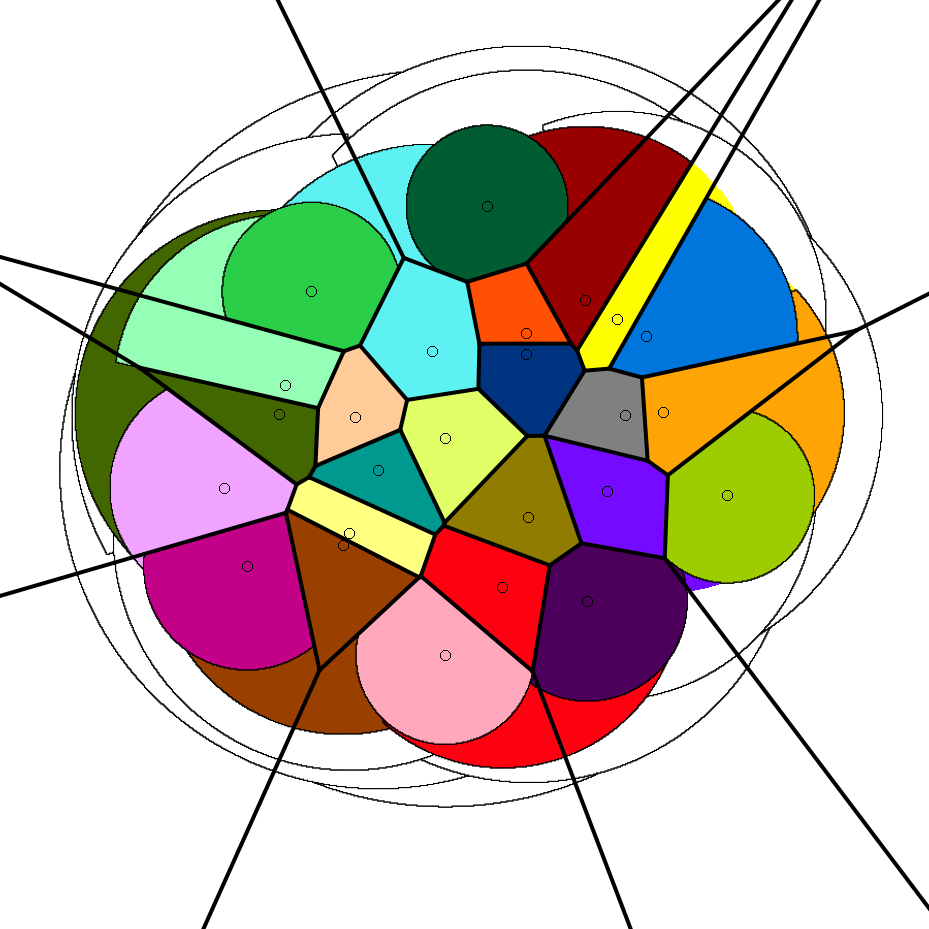}}
\end{minipage} \begin{minipage}[b]{0.48\linewidth}
\fbox{\includegraphics[width=0.973\linewidth]{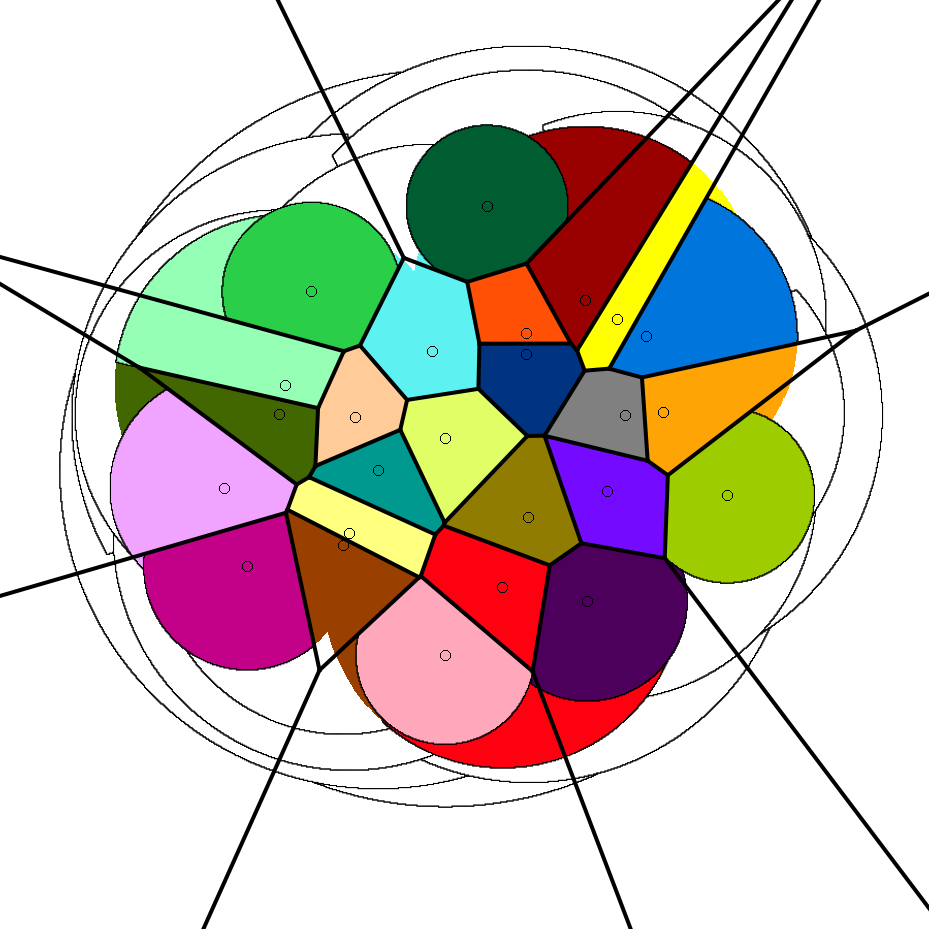}}\\[4pt]
\fbox{\includegraphics[width=0.973\linewidth]{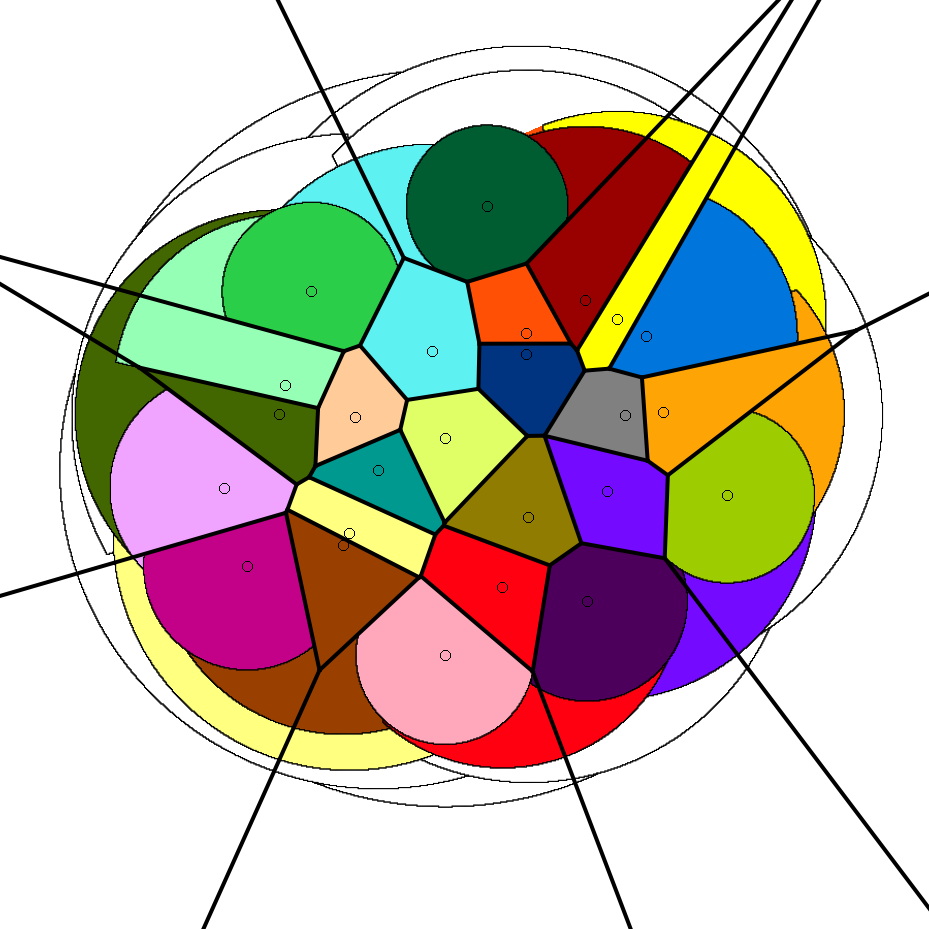}}
\end{minipage} 
\caption{Partial diagram constructed by the Algorithm after 10, 11, 15, and 18 iterations.} \label{smvd:fig:step2} 
\end{figure}

\begin{figure}[ht] \centering
\begin{minipage}[b]{0.48\linewidth}
\fbox{\includegraphics[width=0.973\linewidth]{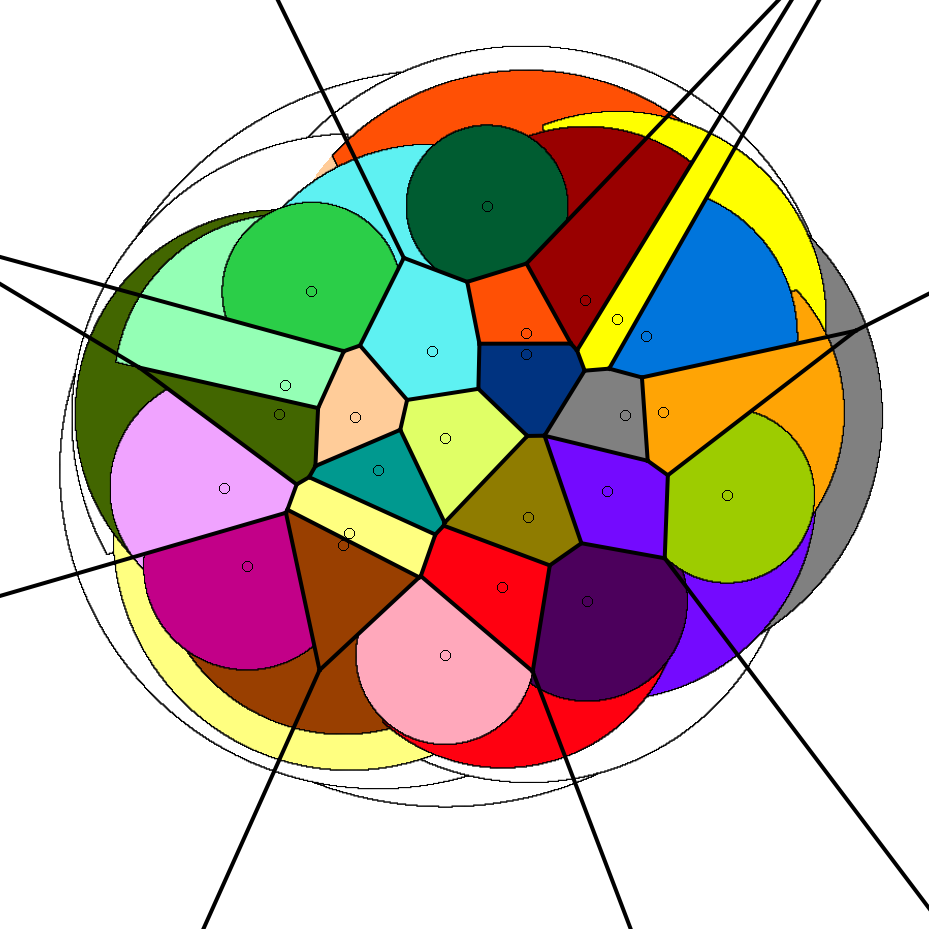}}\\[4pt]
\fbox{\includegraphics[width=0.973\linewidth]{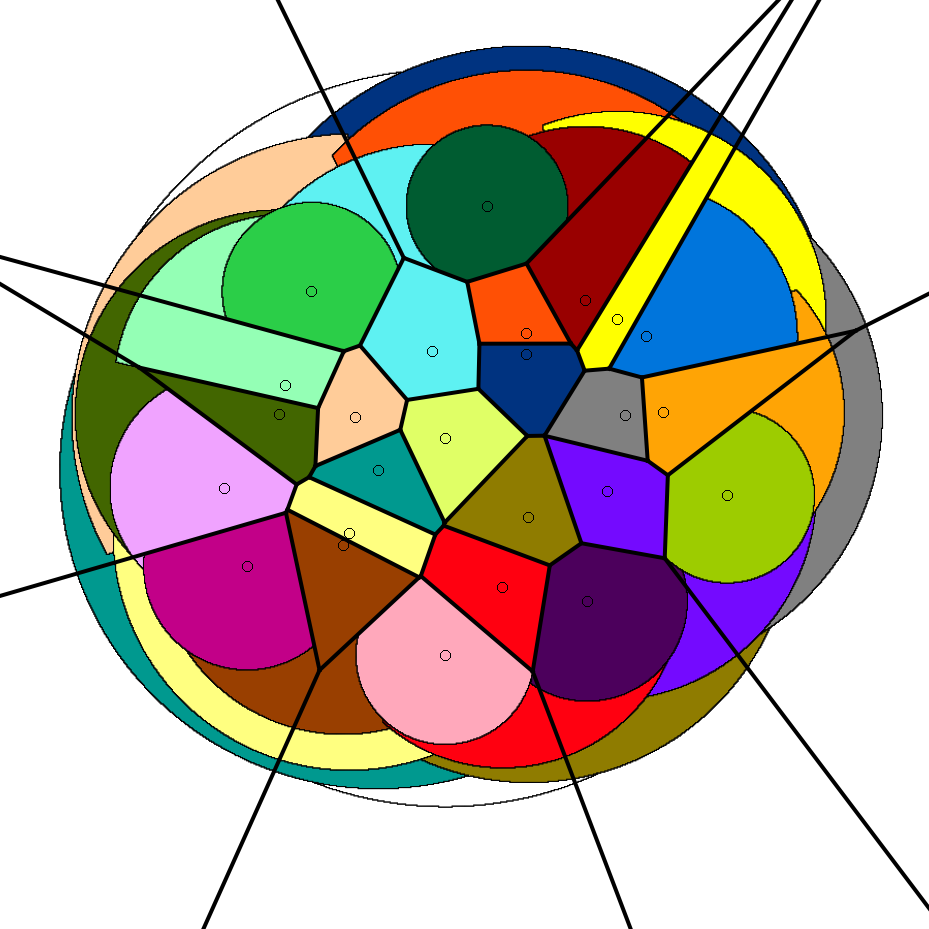}}
\end{minipage} \begin{minipage}[b]{0.48\linewidth}
\fbox{\includegraphics[width=0.973\linewidth]{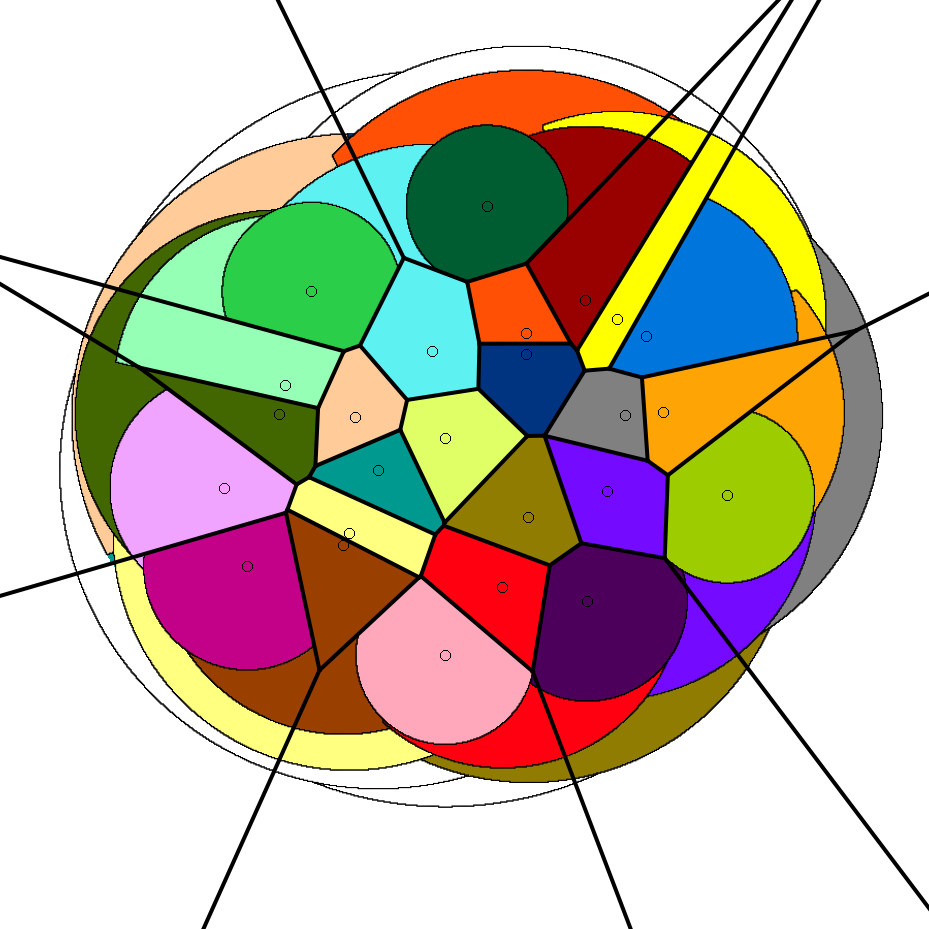}}\\[4pt]
\fbox{\includegraphics[width=0.973\linewidth]{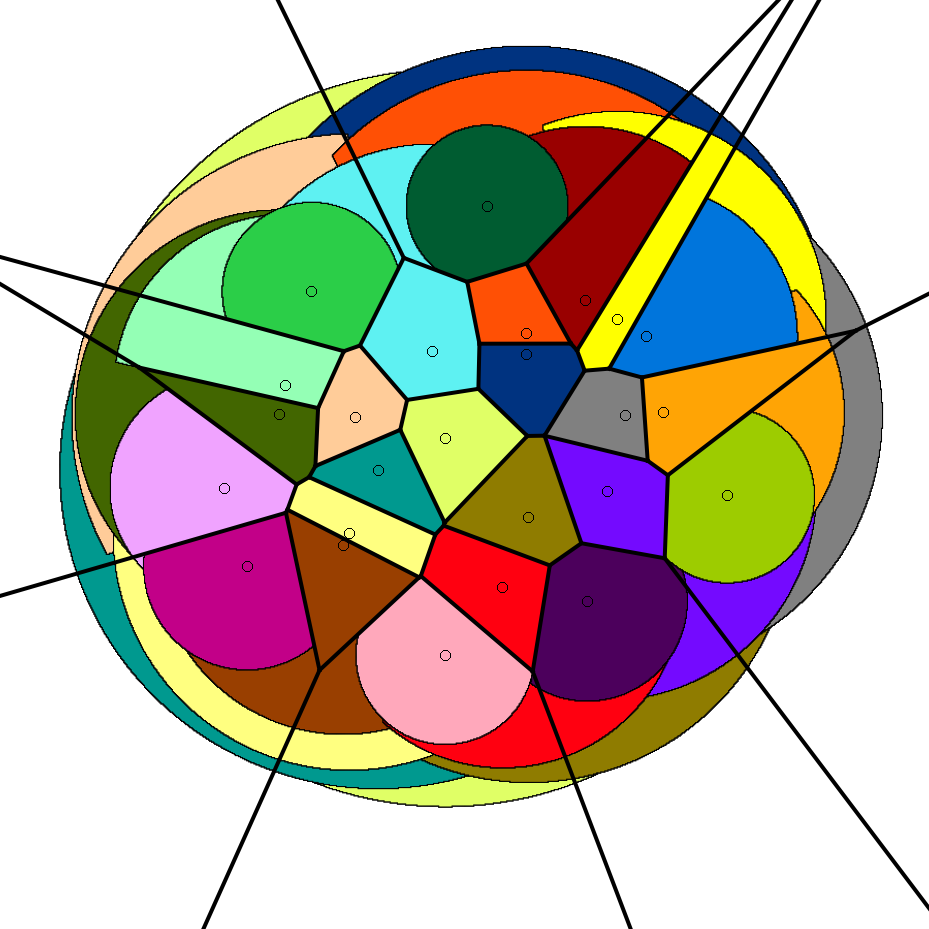}}
\end{minipage} 
\caption{Partial diagram constructed by the Algorithm after 20, 22, 24, and 25 iterations.} \label{smvd:fig:step3}
\end{figure}

\end{document}